\title{Taking Bi-Intuitionistic Logic First-Order: A Proof-Theoretic Investigation via Polytree Sequents} 
\titlerunning{Taking Bi-Intuitionistic Logic First-Order} 
\author{Tim S. Lyon}{Technische Universit\"at Dresden, Germany \and \url{https://sites.google.com/view/timlyon} }{timothy_stephen.lyon@tu-dresden.de}{https://orcid.org/0000-0003-3214-0828}{European Research Council, Consolidator Grant DeciGUT (771779).}
\author{Ian Shillito}{The Australian National University, Canberra, Ngunnawal \& Ngambri Country, Australia}{ian.shillito@anu.edu.au}{https://orcid.org/0009-0009-1529-2679}{}
\author{Alwen Tiu}{The Australian National University, Canberra, Ngunnawal \& Ngambri Country, Australia}{alwen.tiu@anu.edu.au}{https://orcid.org/0000-0002-2695-5636}{}
\authorrunning{T.S. Lyon, I. Shillito, and A. Tiu} 
\keywords{Bi-intuitionistic, Cut-elimination, Conservativity, Domain, First-order, Polytree, Proof theory, Reachability, Sequent} 
\newif\ifshow 
\tikzset{
modal/.style={>=stealth',shorten >=1pt,shorten <=1pt,auto,node distance=1.5cm,semithick},
world/.style={circle,draw,minimum size=0.5cm,fill=gray!15},
point/.style={circle,draw,inner sep=0.5mm,fill=black},
reflexive above/.style={->,loop,looseness=7,in=120,out=60},
reflexive below/.style={->,loop,looseness=7,in=240,out=300},
reflexive left/.style={->,loop,looseness=7,in=150,out=210},
reflexive right/.style={->,loop,looseness=7,in=30,out=330}
}
\renewcommand{\phi}{\varphi}
\newenvironment{customlem}[1]
  {\innercustomlem}
  {\endinnercustomlem}
\newenvironment{customthm}[1]
  {\innercustomthm}
  {\endinnercustomthm}
\newenvironment{customprop}[1]
  {\innercustomprop}
  {\endinnercustomprop}
\newcommand{\iffi}{\textit{iff} }
\newcommand{\dfn}{Definition}
\newcommand{\thm}{Theorem}
\newcommand{\lem}{Lemma}
\newcommand{\prp}{Proposition}
\newcommand{\sect}{Section}
\newcommand{\fig}{Figure}
\newcommand{\app}{Appendix}
\newcommand{\rmk}{Remark}
\newcommand{\id}{\mathrm{ID}}
\newcommand{\cd}{\mathrm{CD}}
\newcommand{\calc}{\mathsf{LBIQ}}
\newcommand{\icalc}{\mathsf{NIQ}}
\newcommand{\nid}{\mathsf{LBIQ}(\idclass)}
\newcommand{\ncd}{\mathsf{LBIQ}(\cdclass)}
\newcommand{\R}{\mathcal{R}}
\newcommand{\T}{\mathcal{T}}
\newcommand{\lab}{\mathrm{Lab}}
\newcommand{\mint}{\iota}
\newcommand{\seq}{S}
\newcommand{\seqb}{S'}
\newcommand{\pred}{\mathrm{Pred}} 
\newcommand{\var}{\mathrm{Var}}
\newcommand{\VT}[1]{VT(#1)} 
\newcommand{\FV}[1]{FV(#1)} 
\newcommand{\func}{\mathrm{Fun}}
\newcommand{\termset}{\mathrm{Ter}}
\newcommand{\parama}{a}
\newcommand{\imp}{\rightarrow} 
\newcommand{\exc}{\mathrel{%
  \hspace{.1ex}
  \begin{tikzpicture}[baseline=-.57ex, line width=.130ex]
    \draw[-] (-0.1ex,0) --(1.5ex,0);
    \draw[-, line width=.01ex, fill=black]
             (1.35ex,0) -- (1.84ex, .48ex)
                       -- (1.91ex ,.418ex)
                       -- (1.55ex,   0ex)
                       -- (1.91ex ,-.418ex)
                       -- (1.84ex,-.48ex)
                       -- (1.35ex,0ex);
  \end{tikzpicture}
\hspace{.1ex}}}
\newcommand{\lang}{\mathcal{L}}
\newcommand{\sub}[1]{(#1)}
\newcommand{\ari}[1]{ar(#1)}
\newcommand{\univ}{U}
\newcommand{\idcond}{\mathrm{(ID)}}
\newcommand{\cdcond}{\mathrm{(CD)}}
\newcommand{\mcond}{\mathrm{(M)}}
\newcommand{\concond}{\mathrm{(C_{1})}}
\newcommand{\funcond}{\mathrm{(C_{2})}}
\newcommand{\funinterp}[0]{I_{1}} 
\newcommand{\predinterp}[0]{I_{2}} 
\newcommand{\terminterp}[1]{\overline{#1}} 
\newcommand{\idclass}[0]{\mathcal{ID}} 
\newcommand{\cdclass}[0]{\mathcal{CD}} 
\newcommand{\assign}{\alpha}
\newcommand{\modclass}{\mathcal{C}}
\newcommand{\biqid}{\mathsf{BIQ}(\idclass)}
\newcommand{\biqcd}{\mathsf{BIQ}(\cdclass)}
\newcommand{\il}{\mathsf{IP}}
\newcommand{\ilq}{\mathsf{IQ}}
\newcommand{\ilqc}{\mathsf{IQC}}
\newcommand{\hb}{\mathsf{BIP}}
\newcommand{\hbq}{\biqid} 
\newcommand{\hbqc}{\biqcd} 
\newcommand{\prf}{\pi}
\newcommand{\sr}{R}
\newcommand{\sar}{\vdash}
\newcommand{\inp}{\mathrm{L}}
\newcommand{\outp}{\mathrm{R}}
\newcommand{\lrel}[2]{{#1}R{#2}}
\newcommand{\lseq}[4]{#1,#2,#3\sar#4} 
\newcommand{\lterm}[2]{#1\!:\!#2} 
\newcommand{\lform}[2]{#1\!:\!#2} 
\newcommand{\ax}{(ax)}
\newcommand{\botl}{(\bot{\inp})}
\newcommand{\topr}{(\top{\outp})}
\newcommand{\botr}{(\bot{\outp})}
\newcommand{\topl}{(\top{\inp})}
\newcommand{\disr}{(\lor{\outp})}
\newcommand{\conr}{(\land{\outp})}
\newcommand{\impl}{({\imp}{\inp})}
\newcommand{\impr}{({\imp}{\outp})}
\newcommand{\excl}{({\exc}{\inp})}
\newcommand{\excr}{({\exc}{\outp})}
\newcommand{\existsr}{(\exists{\outp})}
\newcommand{\existsri}{(\exists{\outp})}
\newcommand{\allr}{(\forall{\outp})}
\newcommand{\disl}{(\lor{\inp})}
\newcommand{\conl}{(\land{\inp})}
\newcommand{\existsl}{(\exists{\inp})}
\newcommand{\alll}{(\forall{\inp})}
\newcommand{\allli}{(\forall{\inp})}
\newcommand{\doms}{(ds)}
\newcommand{\psub}{(t/x)} 
\newcommand{\iwk}{(iw)}
\newcommand{\ctrr}{(ctr_{r})}
\newcommand{\ctrl}{(ctr_{l})}
\newcommand{\cut}{(cut)}
\newcommand{\gax}{(gax)}
\newcommand{\mrg}{(mrg)}
\newcommand{\lwr}{(lwr)}
\newcommand{\lft}{(lft)}
\newcommand{\rone}{(r_{1})}
\newcommand{\rtwo}{(r_{2})}
\newcommand{\idr}{(id)}
\newcommand{\cdr}{(cd)}
\newcommand{\wkv}{(wv)}
\newcommand{\brf}{(br_{f})}
\newcommand{\brb}{(br_{b})}
\newcommand{\rable}[2]{#1 \twoheadrightarrow^{*}_{\R} #2}
\newcommand{\notrable}[2]{#1 \begin{normalsize}\not\end{normalsize}\twoheadrightarrow^{*}_{\R} #2}
\newcommand{\rtable}[2]{#1 \twoheadrightarrow^{+}_{\R} #2}
\newcommand{\fsa}{\Gamma}
\newcommand{\fsb}{\Delta}
\newcommand{\fsc}{\Sigma}
\newcommand{\prove}{\mathtt{Prove}}
\newcommand{\success}{\mathtt{True}}
\newcommand{\branch}{\mathcal{B}}
\newcommand{\avail}[3]{\mathsf{A}(#1,#2,#3)}
\newcommand{\lj}{\mathsf{LJ}}
\begin{document}

\maketitle

\begin{abstract}
It is well-known that extending the Hilbert axiomatic system for first-order intuitionistic logic with an exclusion operator, that is dual to implication, collapses the domains of models into a constant domain. This makes it an interesting problem to find a sound and complete proof system for first-order bi-intuitionistic logic with non-constant domains that is also conservative over first-order intuitionistic logic. We solve this problem by presenting the first sound and complete proof system for first-order bi-intuitionistic logic with increasing domains. We formalize our proof system as a polytree sequent calculus (a notational variant of nested sequents), and prove that it enjoys cut-elimination and is conservative over first-order intuitionistic logic. A key feature of our calculus is an explicit eigenvariable context, which allows us to control precisely the scope of free variables in a polytree structure. Semantically this context can be seen as encoding a notion of Scott's existence predicate for intuitionistic logic. This turns out to be crucial to avoid the collapse of domains and to prove the completeness of our proof system. The explicit consideration of the variable context in a formula sheds light on a previously overlooked dependency between the residuation principle and the existence predicate in the first-order setting, which may help to explain the difficulty in designing a sound and complete proof system for first-order bi-intuitionistic logic. 
\end{abstract}

\section{Introduction}


Propositional bi-intuitionistic logic ($\hb$), also referred to as 
\emph{Heyting-Brouwer logic}~\cite{Rau80}, is a conservative extension of propositional intuitionistic logic ($\il$), obtained by adding the binary connective $\exc$ (referred to as \emph{exclusion})\footnote{
Also referred to as \emph{pseudo-difference}~\cite{Rau80}, \emph{subtraction}, and \emph{co-implication}~\cite{GorPosTiu08}.
} among the traditional intuitionistic connectives. This logic has proven relevant in computer science, having a formulae-as-types interpretation in terms of first-class coroutines~\cite{Cro04} and where modal extensions have found import in image processing~\cite{SteSchRyd16}. While in intuitionistic logic the connectives $\land$ and $\imp$ form a residuated pair, 
i.e. $(\phi \land \psi) \imp \chi$ is valid \iffi $\phi \imp (\psi \imp \chi)$ is valid \iffi $\psi \imp (\phi \imp \chi)$ is valid,
in bi-intuitionistic logic the connectives $\lor$ and $\exc$ also form a residuated pair, 
i.e. $\phi \imp (\psi \lor \chi)$ is valid \iffi $(\phi \exc \psi) \imp \chi$ is valid \iffi $(\phi \exc \chi) \imp \psi$ is valid.\footnote{However, they are not logically equivalent, e.g., $[\phi \imp (\psi \lor \chi)] \imp [(\phi \exc \psi) \imp \chi]$ is not valid.
} To put it succinctly, $\hb$ is a bi-intuitionistic extension of $\il$ that is (1) conservative and (2) has the residuation property, i.e. $(\land, \imp)$ and $(\exc, \lor)$ form residuated pairs. 

When extending first-order intuitionistic logic ($\ilq$) to its bi-intuitionistic counterpart, a `natural' axiomatization seems to be one obtained by adding the universal axioms (Ax1) $\forall x\varphi\rightarrow\varphi(t/x)$, (Ax2) $\forall x(\psi\rightarrow\varphi)\rightarrow(\psi\rightarrow\forall x\varphi)$ (where $x$ is not free in $\psi$), and the rule (Gen) $\varphi / \forall x\varphi$ to the axioms of $\hb$. This extension, which we refer to as the logic $\hbqc$, turns out \emph{not} to be conservative over first-order intuitionistic logic $\ilq$, as it allows one to prove  the \emph{quantifier shift axiom} $\forall x (\phi \lor \psi) \imp \forall x \phi \lor \psi$ (where $x$ is not free in $\psi$), which is not valid intuitionistically. A proof of the quantifier shift axiom is given below, where MP stands for modus ponens, Res stands for the residuation property described above, and $\delta:=\forall x ((\forall x (\phi \lor \psi) \exc \psi) \imp \phi) \imp ((\forall x (\phi \lor \psi) \exc \psi) \imp \forall x \phi)$.  
\begin{small}
\begin{center}
\AxiomC{}
\RightLabel{Ax1}
\UnaryInfC{$\forall x (\phi \lor \psi) \imp (\phi \lor \psi)$}
\RightLabel{Res}
\UnaryInfC{$(\forall x (\phi \lor \psi) \exc \psi) \imp \phi$}
\RightLabel{Gen}
\UnaryInfC{$\forall x((\forall x (\phi \lor \psi) \exc \psi) \imp \phi)$}
\AxiomC{}
\RightLabel{Ax2}
\UnaryInfC{$\delta$}
\RightLabel{MP}
\BinaryInfC{$(\forall x (\phi \lor \psi) \exc \psi) \imp \forall x \phi$}
\RightLabel{Res}
\UnaryInfC{$\forall x (\phi \lor \psi) \imp \forall x \phi \lor \psi$}
\DisplayProof
\end{center}
\end{small}
It is well-known that the quantifier shift axiom characterizes the class of first-order intuitionistic Kripke models with \emph{constant domains}~\cite{GabSheSkv09,Grz64}, thus forcing the models for $\hbqc$ to satisfy this constraint. Indeed, various works in the literature (e.g., \cite{Rau77,Res05}) have shown that completeness for $\hbqc$ requires the domain to be constant. These works and the above example strongly suggest that it might not be possible to have a 
proof system for a bi-intuitionistic logic with non-constant domains, at least not as a traditional Hilbert system. As far as we know, there is no prior successful attempt at solving this problem.

In this paper, we provide the first sound and complete proof system for first-order bi-intuitionistic logic with increasing domains, which we refer to here as $\hbq$. With some minor modifications, the proof system for $\hbq$ can be converted into a proof system for $\hbqc.$
 A key insight in avoiding the collapse of domains in $\hbq$ is to consider the universal quantifier as implicitly carrying an assumption about the existence of the quantified variable. Proof theoretically, this could be done by introducing a notion of an {\em existence predicate}, first studied by Scott~\cite{Scott2006identity}. An existence predicate such as $E(x)$ postulates that $x$ exists in the domain under consideration. By insisting that all universally quantified variables be guarded by an existence predicate, i.e. universally quantified formulae would have the form  $\forall x ( E(x) \imp \phi(x))$, the quantifier shift axiom can be rewritten as: 
$\forall x (E(x) \imp (\varphi \lor \psi))  \imp (\forall x (E(x) \imp \varphi) \lor \psi)$.
Attempting a bottom-up construction of a derivation similar to our earlier example for this rewritten axiom,
we get stuck at the the top-most residuation rule, which is in fact not a valid instance of Res:
\begin{small}
\begin{center}
\AxiomC{$E(x) \imp [\forall x( E(x) \imp (\phi \lor \psi)] \imp (\phi \lor \psi)$}
\RightLabel{Res}
\UnaryInfC{$E(x) \imp [ (\forall x( E(x) \imp (\phi \lor \psi)) \exc \psi] \imp \phi$}
\RightLabel{Gen}
\UnaryInfC{$\forall x(E(x) \imp [\forall x( E(x) \imp (\phi \lor \psi)) \exc \psi] \imp \phi)$}
\AxiomC{$\cdots$}
\RightLabel{MP}
\BinaryInfC{$[\forall x(E(x) \imp (\phi \lor \psi)) \exc \psi] \imp \forall x(E(x) \imp \phi)$}
\RightLabel{Res}
\UnaryInfC{$\forall x(E(x) \imp (\phi \lor \psi)) \imp (\forall x ( E(x)\imp \phi) \lor \psi)$}
\DisplayProof
\end{center}
\end{small}
For the proof construction to proceed, we would have to somehow discharge the assumption $E(x)$ in the premise of Gen before applying the residuation rule. In the logic of constant domains $\hbqc$, $E(x)$ is equivalent to $\top$ (i.e. the interpretation of any term in the logic is an object that exists in all worlds in the underlying Kripke model). So the version of the quantifier shift axiom with the existence predicate is provably equivalent to the original one in $\hbqc.$
This is not the case, however, in the logic of increasing domains $\hbq$, since the assumption $E(x)$ cannot always be discharged.  
What this example highlights is that a typical proof-theoretical argument used to show the provability of the quantifier shift axiom (and hence the collapse of domains)
implicitly depends on an existence assumption on objects in the domains in the underlying Kripke model.
What we show here is that by making this dependency explicit and by carefully managing the use of the existence assumptions in proofs, we are able to obtain a sound and complete proof system for $\hbq.$

One issue with 
the existence predicate is that it is not clear how it should interact with the exclusion operator. Semantically, a formula like 
$
\forall x [E(x) \imp ((p(x) \exc \exists y (E(y) \land p(y))) \imp \bot)]
$
asserts that, if an object $x$ exists in the current domain, then postulating that $p(x)$ holds in a predecessor world should imply that $x$ exists as well in that predecessor world. This is valid in our semantics, but it was not at all obvious how a proof system that admits this tautology, and does not also degenerate into a logic with constant domains, should be designed. We shall come back to this example later in Section~\ref{sec:nested-calculi}.  Additionally, the existence predicate poses a problem when proving conservativity over first-order intuitionistic logic that does not feature this predicate. We overcome this remaining hurdle by enriching sequents with an explicit variable context, which can be seen as essentially encoding the existence predicate, while avoiding introducing it explicitly in the language of formulae.

The proof systems for $\hbq$ and $\hbqc$ are both formalized using \emph{polytree sequents}~\cite{CiaLyoRamTiu21}, which are connected binary graphs whose vertices are traditional Gentzen sequents and which are free of (un)directed cycles. Polytree sequents are a restriction of traditional labeled sequents~\cite{Sim94,Vig00} and are notational variants of nested sequents~\cite{Bul92,Kas94,Bru09}. (NB. For details on the relationship between polytree and nested sequents, see~\cite{CiaLyoRamTiu21}.) Nested sequents were introduced independently by Bull~\cite{Bul92} and Kashima~\cite{Kas94} and employ trees of Gentzen sequents in proofs. Both polytree sequents and nested sequents allow for simple formulations of proof systems for various non-classical logics that enjoy important proof theoretical properties such as cut-elimination and subformula properties. Such systems have also found a range of applications, being used in knowledge integration algorithms~\cite{LyoAlv22}, serving as a basis for constructive interpolation and decidability techniques~\cite{LyoTiuGorClo20,LyoKar24,TiuIanGor12}, and even being used to solve open questions about axiomatizability~\cite{IshKik07}. We make use of polytree sequents in our work as they admit a formula interpretation (at least in the intuitionistic case), which can be leveraged for direct translations of proofs into sequent calculus or Hilbert calculus proofs. 


The calculi for $\hbq$ and $\hbqc$ are based on these richly structured sequents, which internalize the existence predicate into syntactic components, called {\em domain atoms}, present in each node of the sequent. The rich structure of these sequents is exploited by special rules within our calculi called \emph{reachability rules}, which traverse paths in a polytree sequent, propagating and/or consuming data. We demonstrate that our calculi enjoy the height-preserving invertibility of every rule, and show that a wide range of novel and useful structural rules are height-preserving admissible, culminating in a non-trivial proof of cut-elimination. 

\textbf{Outline of Paper.} 
 In \sect~\ref{sec:log-prelims-I}, we define a semantics for first-order bi-intuitionistic logic with increasing domains $\hbq$ and constant domains $\hbqc$. In \sect~\ref{sec:nested-calculi}, we define our polytree sequent calculi showing them sound and complete relative to the provided semantics. In \sect~\ref{sec:properties}, we establish 
 admissibility and invertibility results as well as prove a non-trivial syntactic cut-elimination theorem. We conclude and discuss future work in \sect~\ref{sec:conclusion}. Due to space constraints, we defer most proofs to the appendix. 

\section{Logical Preliminaries}\label{sec:log-prelims-I}

In this section, we introduce the language, models, and semantics for first-order bi-intuitionistic logic with increasing domains, dubbed $\biqid$, and with constant domains, dubbed $\biqcd$. Let $\var := \{x, y, z, \ldots\}$ be a countably infinite set of \emph{variables} and $\func = \{f, g, h, \ldots\}$ be a countably infinite set of \emph{function symbols} containing countably many function symbols of each arity $n \in \mathbb{N}$. We let $\ari{f} = n$ denote that the arity of the function symbol $f$ is $n$ and let $a, b, c, \ldots$ denote \emph{constants}, which are function symbols of arity $0$. For a set $X \subseteq \var$, we define the set $\termset(X)$ of \emph{$X$-terms} to be the smallest set satisfying the following two constraints: (1) $X \subseteq \termset(X)$, and (2) if $f \in \func$, $f$ is of arity $n$, and $t_{1}, \ldots, t_{n} \in \termset(X)$, then $f(t_{1}, \ldots, t_{n}) \in \termset(X)$. The complete set of terms $\termset$ is defined to be $\termset(\var)$. We use $t$, $s$, $\ldots$ (potentially annotated) to denote ($X$-)terms and let $\VT{t}$ denote the set of variables occurring in the term $t$. We will often write a list $t_{1}, \ldots, t_{n}$ of terms as $\vec{t}$, and define $\VT{\vec{t}} = \VT{t_{1}} \cup \cdots \cup \VT{t_{n}}$.
 
We let $\pred := \{p, q, \ldots\}$ be a countably infinite set of predicates containing countably many predicates of each arity $n \in \mathbb{N}$. We denote the arity of a predicate $p$ as $\ari{p}$ and refer to predicates of arity $0$ as \emph{propositional atoms}. An \emph{atomic formula} is a formula of the form $p(t_{1}, \ldots, t_{n})$, obtained by prefixing a predicate $p$ of arity $\ari{p} = n$ to a tuple of terms of length $n$. We will often write atomic formulae $p(t_{1}, \ldots, t_{n})$ as $p(\vec{t})$. 

\begin{definition}[The Language $\lang$]\label{def:language}  The \emph{language} $\lang$ is defined to be the set of formulae generated via the following grammar in Backus-Naur form: 
$$
\phi ::= p(\vec{t}) \ | \ \bot \ | \ \top 
\ | \ \phi \land \phi \ | \ \phi \lor \phi \ | \ \phi \exc \phi  \ | \ \phi \imp \phi \ | \ \exists x \phi \ | \ \forall x \phi
$$
where 
$p$ ranges over $\pred$, the terms $\vec{t} = t_{1}, \ldots, t_{n}$ range over $\termset$, and $x$ ranges over the set $\var$. We use $\phi$, $\psi$, $\chi$, $\ldots$ to denote formulae. 
\end{definition}

The occurrence of a variable $x$ in $\phi$ is defined to be \emph{free} given that $x$ does not occur within the scope of a quantifier binding $x$. We let $\FV{\phi}$ denote the set of all free variables occurring in the formula $\phi$ and use $\phi(x_{1}, \ldots, x_{n})$ to denote that $\FV{\phi} = \{x_{1}, \ldots, x_{n}\}$. We let $\phi\sub{t/x}$ denote the formula obtained by replacing each free occurrence of the variable $x$ in $\phi$ by $t$, potentially renaming bound variables to avoid unwanted variable capture; e.g. $(\forall y p(x,y))\sub{y/x} = \forall z p(y,z)$. The \emph{complexity} of a formula $\phi$, written $|\phi|$, is recursively defined as follows: (1) $|p(t_{1}, \ldots, t_{n})| = |\bot| = |\top| := 0$, (2) $|Q x \phi| := |\phi| + 1$ for $Q \in \{\forall, \exists\}$, and (3) $|\phi \circ \psi| := |\phi| + |\psi| + 1$ for $\circ \in \{\lor, \land, \imp, \exc\}$.
 
Following \cite{Rau77}, we give a Kripke-style semantics for $\biqid$, defining the models used first, and explaining how formulae are evaluated over them second.

\begin{definition}[$\id$-Frame]\label{def:frame} An \emph{$\id$-frame} (or, \emph{frame}) is a tuple $F = (W,\leq,U,D)$ such that:
\begin{itemize}

\item $W$ is a non-empty set $\{w, u, v, \ldots\}$ of \emph{worlds};

\item $\leq \ \subseteq W \times W$ is a reflexive and transitive binary relation;

\item $\univ$ is a non-empty set referred to as the \emph{universe};

\item $D : W\to \mathcal{P}(\univ)$ is a \emph{domain function} mapping each $w \in W$ to a non-empty set $D(w) \subseteq \univ$ with $U = \bigcup_{w \in W} D(w)$, which satisfies the \emph{increasing domain condition}: $\idcond$ If $w \leq u$, then $D(w) \subseteq D(u)$.
\end{itemize}
\end{definition}

\begin{definition}[$\id$-Model]\label{def:model}
We define an \emph{$\id$-Model} (or, \emph{model}) $M$ to be an ordered triple $(F,\funinterp,\predinterp)$ such that:
\begin{itemize}
\item $F = (W,\leq,U,D)$ is a frame;

\item $\funinterp$ is a function interpreting each function symbol $f\in \func$ such that $\ari{f} = n$ by a function $\funinterp(f): \univ^{n} \to \univ$, satisfying two conditions: $\concond$ For each $w \in W$ and constant $a$, $\funinterp(a) \in D(w)$, and $\funcond$ For each $w \in W$, $\vec a \in D(w)^n$ \iffi $\funinterp(f)(\vec a) \in D(w)$.

\item $\predinterp$ is a function interpreting, in each $w\in W$, each predicate $p \in \pred$ such that $\ari{p} = n$ by a set $\predinterp(w,p)\subseteq D(w)^n$, satisfying the following \emph{monotonicity condition}: $\mcond$ If $w \leq u$, then $\predinterp(w,p) \subseteq \predinterp(u,p)$.
\end{itemize}
\end{definition}



\begin{definition}[$M$-assignment]
Let $M=(F,\funinterp,\predinterp)$ be a model. We define an \emph{$M$-assignment} to be a function $\alpha : \var\rightarrow\univ$. We note $\alpha[\parama/x]$ is the function $\alpha$ modified in $x$ such that $\alpha[\parama/x](x)=\parama$ and $\alpha[\parama/x](y)=\alpha(y)$ if $y\not= x$. Given an $M$-assignment $\alpha$, we define the interpretation of $t$ in $M$ given $\alpha$, denoted $\terminterp{\alpha}(t)$, inductively as follows: $\terminterp{\alpha}(x) := \alpha(x)$ and $\terminterp{\alpha}(f(t_1,...,t_n)) := \funinterp(f)(\terminterp{\alpha}(t_1),...,\terminterp{\alpha}(t_n))$.
\end{definition}

\begin{definition}[Semantics]
\label{def:semantic-clauses} Let $M = (W,\leq,U,D,\funinterp,\predinterp)$ be a model with $w \in W$ and $\alpha$ an $M$-assignment. The satisfaction relation $\Vdash$ is defined as follows:
\begin{itemize}

\item $M,w,\alpha \Vdash p(t_{1}, \ldots, t_{n})$ \iffi $(\terminterp{\alpha}(t_{1}), \ldots, \terminterp{\alpha}(t_{n})) \in \predinterp(w,p)$;

\item $M,w,\alpha \not\Vdash \bot$;

\item $M,w,\alpha \Vdash \top$;

\item $M,w,\alpha \Vdash \phi \lor \psi$ \iffi $M,w,\alpha \Vdash \phi$ or $M,w,\alpha \Vdash \psi$;

\item $M,w,\alpha \Vdash \phi \land \psi$ \iffi $M,w,\alpha \Vdash \phi$ and $M,w,\alpha \Vdash \psi$;

\item $M,w,\alpha \Vdash \phi \exc \psi$ \iffi there exists a $u \in W$ such that $u \leq w$, $M,u,\alpha \Vdash \phi$, and $M,u,\alpha \not\Vdash \psi$;

\item $M,w,\alpha \Vdash \phi \imp \psi$ \iffi for all $u \in W$, if $w \leq u$ and $M,u,\alpha \Vdash \phi$, then $M,u,\alpha \Vdash \psi$;

\item $M,w,\alpha \Vdash \exists x \phi$ \iffi there exists an $\parama \in D(w)$ such that $M,w,\alpha[\parama/x] \Vdash \phi$;

\item $M,w, \alpha \Vdash \forall x \phi$ \iffi for all $u \in W$ and all $\parama \in D(u)$, if $w \leq u$, then $M, u, \alpha[\parama/x] \Vdash \phi$;




\end{itemize}
For a set $\fsa \subseteq \lang$ of formulae, we write $\fsa \Vdash \phi$ \iffi for all models $M$, $M$-assignments $\alpha$, and worlds $w$ in $M$, if $M,w,\alpha \Vdash \psi$ for each $\psi \in \fsa$, then $M,w,\alpha \Vdash \phi$. A formula $\phi$ is \emph{valid} \iffi $\emptyset \Vdash \phi$. Finally, we define the logic $\biqid$ to be the set $\{\phi \ | \ \emptyset\Vdash \phi\}$ of all valid formulae.
\end{definition}

Note that here we define logics as \emph{sets of theorems}, and not \emph{consequence relations}. While this is fit for our purpose, the reader should be warned that historical confusions emerged around this distinction in the case of propositional bi-intuitionistic logic~\cite{GorShi20,Shi23}, notably pertaining to the deduction theorem. 


\begin{proposition}\label{prop:monotonicity} Let $M = (W,\leq,U,D,\funinterp,\predinterp)$ be a model with $\alpha$ an $M$-assignment. For any $\phi \in \lang$, if $M,w,\alpha \Vdash \phi$ and $w \leq u$, then $M,u,\alpha \Vdash \phi$.
\end{proposition}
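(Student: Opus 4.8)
The plan is to prove this by induction on the complexity $|\phi|$ of the formula $\phi$, treating each connective in the grammar of Definition~\ref{def:language} as a separate case. Fix a model $M = (W,\leq,U,D,\funinterp,\predinterp)$, an $M$-assignment $\alpha$, and worlds $w \leq u$; I assume $M,w,\alpha \Vdash \phi$ and aim to show $M,u,\alpha \Vdash \phi$. Note first that the statement must be strengthened to quantify over all $M$-assignments $\alpha$ simultaneously (or, equivalently, the induction hypothesis must be stated for all assignments), since the quantifier cases modify the assignment; this is harmless as the statement as given already ranges over an arbitrary fixed $\alpha$, but the inductive invocation will be at a modified assignment $\alpha[\parama/x]$.

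The base cases are the atomic formulae. For $\phi = p(\vec t)$, satisfaction at $w$ means $(\terminterp{\alpha}(t_1),\ldots,\terminterp{\alpha}(t_n)) \in \predinterp(w,p)$, and the monotonicity condition $\mcond$ on $\predinterp$ gives $\predinterp(w,p) \subseteq \predinterp(u,p)$, hence satisfaction at $u$; the interpretation $\terminterp{\alpha}(\cdot)$ of terms does not depend on the world, so no further work is needed. The cases $\phi = \bot$ and $\phi = \top$ are immediate. For the propositional connectives $\land$ and $\lor$ the result follows directly by applying the induction hypothesis to the immediate subformulae. The cases $\imp$ and $\forall$ are essentially trivial because their semantic clauses already quantify universally over $\leq$-successors and $\leq$ is transitive: if $w \leq u$ and the clause holds at $w$, then since every $\leq$-successor of $u$ is a $\leq$-successor of $w$, the clause holds at $u$ (for $\forall$ one also uses that $D$ is monotone, but in fact one only needs the successor quantification). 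The case $\exists$ uses the increasing domain condition $\idcond$: if $\parama \in D(w)$ witnesses $\exists x\phi$ at $w$, then $\parama \in D(u)$ since $D(w) \subseteq D(u)$, and the induction hypothesis at the assignment $\alpha[\parama/x]$ transfers $M,w,\alpha[\parama/x] \Vdash \phi$ to $M,u,\alpha[\parama/x]\Vdash\phi$.

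The one genuinely interesting case is exclusion, $\phi = \psi_1 \exc \psi_2$, and this is where I expect the main (though still modest) obstacle. Here satisfaction at $w$ yields some $v \leq w$ with $M,v,\alpha \Vdash \psi_1$ and $M,v,\alpha \not\Vdash \psi_2$; to satisfy $\psi_1 \exc \psi_2$ at $u$ I need some $v' \leq u$ with the same property, but $v$ need not be $\leq u$, so the witness cannot simply be reused and the induction hypothesis does not directly apply (indeed it points in the wrong direction). The resolution is that $\exc$ looks \emph{backwards} along $\leq$: since $v \leq w \leq u$, transitivity gives $v \leq u$, so the \emph{same} witness $v$ works, and no appeal to the induction hypothesis is needed at all for this case. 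Thus the case goes through using only transitivity of $\leq$. Assembling the cases completes the induction, and the proposition follows.
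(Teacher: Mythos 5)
Your proof is correct, and it is exactly the routine induction on formula complexity (using the monotonicity condition $\mcond$ for atoms, transitivity of $\leq$ for $\imp$, $\forall$, and $\exc$, and the increasing domain condition $\idcond$ for $\exists$) that the paper takes for granted by stating Proposition~\ref{prop:monotonicity} without proof. Your handling of the $\exc$ case — reusing the backward witness via $v \leq w \leq u$ rather than invoking the induction hypothesis — is the right resolution and matches the intended argument.
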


\begin{remark} We define a \emph{$\cd$-model} to be a model satisfying the \emph{constant domain condition}: $\cdcond$ If $w, u \in W$, then $D(w) = D(u)$. If we impose the $\cdcond$ condition on models, then first-order bi-intuitionistic logic with \emph{constant domains}, dubbed $\biqcd$, can be defined as the set of all valid formulae over the class of $\cd$-models. In what follows, we let $\idclass$ denote the class of $\id$-models and $\cdclass$ denote the class of $\cd$-models.
\end{remark}

\begin{example}
\label{ex:exists-exc}
Consider the formula 
$\forall x( (p(x) \exc \exists y p(y)) \imp \bot)$, discussed in the introduction, but with the existence predicate removed. In the semantics with increasing domains, this formula is valid. To see this, suppose otherwise, i.e. that there exists a world $w$ where the formula is false. Thus, there is a successor $w \leq u$ such that $\bar{\alpha}(x) \in D(u)$ and $p(x) \exc \exists y p(y)$ is true, for some assignment $\alpha$. 
The latter implies that for some $u'$ such that $u' \leq u$, $p(x)$ is true (i.e. $\alpha(x) \in I_P(u', p)$), but $\exists y p(y)$ is false. The former implies that $\alpha(x) \in D(u')$, so by the semantic clause for the $\exists$ quantifier, $\exists y p(y)$ must be true -- contradiction. 
\end{example}

\section{Polytree Sequent Systems}\label{sec:nested-calculi}

Let $\lab = \{w, u, v, \ldots\}$ be a countably infinite set of labels. For a formula $\phi \in \lang$ and label $w\in\lab$, we define $\lform{w}{\phi}$ to be a \emph{labeled formula}. We use $\fsa$, $\fsb$, $\fsc$, $\ldots$ to denote finite multisets of labeled formulae, 
and let $\lform{w}{\fsa}$ denote a multiset of labeled formulae all labeled with $w$.
A \emph{relational atom} is an expression of the form $\lrel{w}{u}$ and a \emph{domain atom} is an expression of the form $w : x$, where $w,u \in \lab$ and $x \in \var$. 
Intuitively, the domain atom formalizes an existence predicate: $w : x$ can be interpreted as saying that the interpretation of $x$ exists at world $w$. 
We use $\R$ and $\T$ (and annotated versions thereof) to denote finite multisets of, respectively, relational atoms and domain atoms. 
Also, we 
define $\lterm{w}{\VT{t}} = w : x_{1}, \ldots, w : x_{n}$ with $\VT{t} = \{x_{1}, \ldots, x_{n}\}$, define $\lterm{w}{\VT{\vec{t}}} = w : \VT{t_{1}}, \ldots, w : \VT{t_{n}}$ with $\vec{t} = t_{1}, \ldots, t_{n}$, and let $w : \vec{x} = w : x_{1}, \ldots, w : x_{n}$ for $\vec{x} = x_{1}, \ldots, x_{n}$. For multisets $X$ and $Y$ of labeled formulae, relational atoms, and/or domain atoms, we let $X,Y$ denote the multiset union of $X$ and $Y$, and $\lab(X)$ the set of labels occurring in $X$.

\begin{definition}[Polytree Sequent] We define a \emph{polytree sequent} to be an expression of the form $\R,\T,\fsa \sar \fsb$ such that (1) 
 if $\R \neq \emptyset$, then $\lab(\T, \fsa, \fsb) \subseteq \lab(\R)$ and if $\R = \emptyset$, then $|\lab(\T, \fsa, \fsb)| = 1$, and (2) $\R$ forms a \emph{polytree}, i.e. the graph $G = (V,E)$ such that $V = \lab(\R)$ and $E = \{(w,u) \ | \ \lrel{w}{u} \in \R \}$ is connected and free of both directed and undirected cycles. We refer to $\R,\T,\fsa$ as the \emph{antecedent} and $\fsb$ as the \emph{consequent} of a polytree sequent. We will often refer to \emph{polytree sequents} more simply as \emph{sequents}.
\end{definition}

We sometimes use $S$, $S_0$, $S_1$, $\ldots$ to denote sequents, and for $S = \R,\T,\fsa \sar \fsb$, we define $\lab(S) = \lab(\R,\T,\fsa,\fsb)$. A \emph{flat sequent} is an expression of the form $\T,\fsa \vdash \fsb$ such that $|\lab(\T,\fsa,\fsb)| = 1$, i.e. all labeled formulae and domain atoms share the same label. Polytree sequents encode certain binary graphs whose nodes are flat sequents and such that if you ignore the orientation of the edges, the graph is a tree (cf.~\cite{CiaLyoRamTiu21}). For example, the sequent
$$
S = \underbrace{\lrel{u'}{w}, \lrel{u}{w}, \lrel{w}{v}}_{\R}, \underbrace{u' : x, u : x, u : y, w : z, v : y,}_{\T}
\underbrace{w : \phi, w : \psi, v : \theta}_{\fsa} \sar \underbrace{u' : \tau, u : \chi, v : \xi}_{\fsb}
$$
can be graphically depicted as the polytree $pt(S)$, shown below:
\begin{center}
\vspace*{-.25cm}
\begin{tabular}{c}
\xymatrix@C=.15em @R=1.5em{
 & \overset{w}{\boxed{w : z, w : \phi, w : \psi \sar }}\ar@{<-}[dl]\ar@{<-}[d]\ar@{->}[dr]  &  \\
\overset{u}{\boxed{u : x, u : y \sar u : \chi}} & \overset{u'}{\boxed{u' : x \sar u' : \tau}} & \overset{v}{\boxed{v : y, v : \theta \sar v : \xi}}
}
\end{tabular}
\end{center}

\begin{remark}\label{rmk:iso-are-equal} To simplify the proofs of our results in \sect~\ref{sec:properties}, we assume w.l.o.g. that sequents with isomorphic polytree representations are mutually derivable from one another.
\end{remark}

\subsection{Semantics and Proof Systems}

The following definition specifies how to interpret sequents. In essence, we lift the semantics of $\lang$ to sequents by means of `$M$-interpretations', mapping sequents into models.

\begin{definition}[Sequent Semantics]\label{def:sequent-semantics} Let $M = (W,\leq,U,D,\funinterp,\predinterp)$ be a model and $\assign$ an $M$-assignment. We define an \emph{$M$-interpretation} to be a function $\mint$ mapping every label $w \in \lab$ to a world $\iota(w) \in W$. The \emph{satisfaction} of multisets $\R$, $\T$, and $\fsa$ are defined accordingly:
\begin{itemize}

\item $M,\iota,\assign \models \R$ \iffi for all $\lrel{w}{u} \in \R$, $\mint(w) \leq \mint(u)$;

\item $M,\iota,\assign \models \T$ \iffi for all $w : x \in \T$, $\terminterp{\assign}(x) \in D(\mint(w))$;

\item $M,\iota,\assign \models \fsa$ \iffi for all $w : \phi \in \fsa$, $M, \mint(w), \assign \Vdash \phi$.


\end{itemize}
We define a sequent $S = \R, \T, \fsa \sar \fsb$ to be \emph{satisfied} on $M$ with $\iota$ and $\assign$, written $M,\iota,\assign \models S$, \iffi if $M, \iota, \assign \models \R$, and $M, \iota, \assign \models \T$, as well as $M, \iota, \assign \models \fsa$, then there exists a $w : \psi \in \fsb$ such that $M,\iota,\assign \models w : \psi$. We write $M,\iota,\assign \not\models S$ when a sequent $S$ is \emph{not satisfied} on $M$ with $\iota$ and $\assign$. A sequent $S$ is defined to be valid \iffi for every model $M$, every $M$-interpretation $\iota$, and every $M$-assignment $\assign$, we have $M,\iota,\assign \models S$; otherwise, we say that $S$ is invalid and write $M,\iota,\assign \not\models S$.
\end{definition}

\begin{figure*}[t]

\begin{center}
\begin{tabular}{c c c}
\AxiomC{$\phantom{\fsa}$}
\RightLabel{$\ax^{\dag_{1}}$}
\UnaryInfC{$\lseq{\mathcal R}{\mathcal T}{\fsa,\lform{w}{p(\vec{t})}}{\fsb,\lform{u}{p(\vec{t})}}$}
\DisplayProof

&

\AxiomC{$\phantom{\fsa}$}
\RightLabel{$\botl$}
\UnaryInfC{$\lseq{\mathcal R}{\mathcal T}{\fsa,\lform{w}{\bot}}{\fsb}$}
\DisplayProof 

&

\AxiomC{$\lseq{\mathcal R}{\mathcal T}{\fsa}{\fsb,\lform{w}{\phi},\lform{w}{\psi}}$}
\RightLabel{$\disr$}
\UnaryInfC{$\lseq{\mathcal R}{\mathcal T}{\fsa}{\fsb,\lform{w}{\phi\lor\psi}}$}
\DisplayProof 
\end{tabular}
\end{center}
\begin{center}
\begin{tabular}{c c c}
\AxiomC{$\phantom{\fsa}$}
\RightLabel{$\topr$}
\UnaryInfC{$\lseq{\mathcal R}{\mathcal T}{\fsa}{\fsb,\lform{w}{\top}}$}
\DisplayProof

&

\AxiomC{$\lseq{\mathcal R}{\mathcal T}{\fsa,\lform{w}{\phi},\lform{w}{\psi}}{\fsb}$}
\RightLabel{$\conl$}
\UnaryInfC{$\lseq{\mathcal R}{\mathcal T}{\fsa,\lform{w}{\phi\land\psi}}{\fsb}$}
\DisplayProof

&

\AxiomC{$\lseq{\mathcal R}{\mathcal T,\lterm{w}{y}}{\fsa,\lform{w}{\phi(y/x)}}{\fsb}$}
\RightLabel{$\existsl^{\dag_{2}}$}
\UnaryInfC{$\lseq{\mathcal R}{\mathcal T}{\fsa,\lform{w}{\exists x\phi}}{\fsb}$}
\DisplayProof
\end{tabular}
\end{center}
\begin{center}
\begin{tabular}{c c}
\AxiomC{$\lseq{\mathcal R}{\mathcal T}{\fsa,\lform{w}{\phi}}{\fsb}$}
\AxiomC{$\lseq{\mathcal R}{\mathcal T}{\fsa,\lform{w}{\psi}}{\fsb}$}
\RightLabel{$\disl$}
\BinaryInfC{$\lseq{\mathcal R}{\mathcal T}{\fsa,\lform{w}{\phi\lor\psi}}{\fsb}$}
\DisplayProof

&

\AxiomC{$\lseq{\mathcal R}{\mathcal T}{\fsa}{\fsb,\lform{w}{\phi}}$}
\AxiomC{$\lseq{\mathcal R}{\mathcal T}{\fsa}{\fsb,\lform{w}{\psi}}$}
\RightLabel{$\conr$}
\BinaryInfC{$\lseq{\mathcal R}{\mathcal T}{\fsa}{\fsb,\lform{w}{\phi\land\psi}}$}
\DisplayProof
\end{tabular}
\end{center}

\begin{center}
\begin{tabular}{c c}
\AxiomC{$\lseq{\mathcal R,\lrel{u}{w}}{\mathcal T}{\fsa,\lform{u}{\phi}}{\fsb,\lform{u}{\psi}}$}
\RightLabel{$\excl^{\dag_{3}}$}
\UnaryInfC{$\lseq{\mathcal R}{\mathcal T}{\fsa,\lform{w}{\phi\exc\psi}}{\fsb}$}
\DisplayProof

&
\AxiomC{$\lseq{\mathcal R,\lrel{w}{u}}{\mathcal T}{\fsa,\lform{u}{\phi}}{\fsb,\lform{u}{\psi}}$}
\RightLabel{$\impr^{\dag_{3}}$}
\UnaryInfC{$\lseq{\mathcal R}{\mathcal T}{\fsa}{\fsb,\lform{w}{\phi\imp\psi}}$}
\DisplayProof
\end{tabular}
\end{center}

\begin{center}
\begin{tabular}{c c}
\AxiomC{$\lseq{\mathcal R}{\mathcal T, \lterm{w}{\VT{\vec{t}}}}{\fsa, \lform{w}{p(\vec{t})}}{\fsb}$}
\RightLabel{$\doms$}
\UnaryInfC{$\lseq{\mathcal R}{\mathcal T}{\fsa, \lform{w}{p(\vec{t})}}{\fsb}$}
\DisplayProof

&

\AxiomC{$\lseq{\mathcal R}{\mathcal T}{\fsa}{\fsb,\lform{w}{\exists x\phi},\lform{w}{\phi(t/x)}}$}
\RightLabel{$\existsri^{\dag_{4}}$}
\UnaryInfC{$\lseq{\mathcal R}{\mathcal T}{\fsa}{\fsb,\lform{w}{\exists x\phi}}$}
\DisplayProof
\end{tabular}
\end{center}

\begin{center}
\begin{tabular}{c}
\AxiomC{$\lseq{\mathcal R}{\mathcal T}{\fsa,\lform{w}{\phi\imp\psi}}{\fsb,\lform{u}{\phi}}$}
\AxiomC{$\lseq{\mathcal R}{\mathcal T}{\fsa,\lform{w}{\phi\imp\psi},\lform{u}{\psi}}{\fsb}$}
\RightLabel{$\impl^{\dag_{1}}$}
\BinaryInfC{$\lseq{\mathcal R}{\mathcal T}{\fsa,\lform{w}{\phi\imp\psi}}{\fsb}$}
\DisplayProof
\end{tabular}
\end{center}

\begin{center}
\begin{tabular}{c}
\AxiomC{$\lseq{\mathcal R}{\mathcal T}{\fsa}{\fsb,\lform{u}{\phi\exc\psi},\lform{w}{\phi}}$}
\AxiomC{$\lseq{\mathcal R}{\mathcal T}{\fsa,\lform{w}{\psi}}{\fsb,\lform{u}{\phi\exc\psi}}$}
\RightLabel{$\excr^{\dag_{1}}$}
\BinaryInfC{$\lseq{\mathcal R}{\mathcal T}{\fsa}{\fsb,\lform{u}{\phi\exc\psi}}$}
\DisplayProof
\end{tabular}
\end{center}

\begin{center}
\begin{tabular}{c c}
\AxiomC{$\lseq{\mathcal R}{\mathcal T}{\fsa,\lform{w}{\forall x\phi},\lform{u}{\phi(t/x)}}{\fsb}$}
\RightLabel{$\allli^{\dag_{5}}$}
\UnaryInfC{$\lseq{\mathcal R}{\mathcal T}{\fsa,\lform{w}{\forall x\phi}}{\fsb}$}
\DisplayProof

&

\AxiomC{$\lseq{\mathcal R,\lrel{w}{u}}{\mathcal T,\lterm{u}{y}}{\fsa}{\fsb,\lform{u}{\phi(y/x)}}$}
\RightLabel{$\allr^{\dag_{6}}$}
\UnaryInfC{$\lseq{\mathcal R}{\mathcal T}{\fsa}{\fsb,\lform{w}{\forall x\phi}}$}
\DisplayProof
\end{tabular}
\end{center}

\begin{flushleft}
\textbf{Side Conditions}:
\end{flushleft}

\begin{minipage}{.2\textwidth}
\begin{description}

\item[$\dag_{1} :=$] $\rable{w}{u}$

\item[$\dag_{2} :=$] $y$ is fresh

\end{description}
\end{minipage}
\begin{minipage}{.3\textwidth}
\begin{description}

\item[$\dag_{3} := $] $u$ is fresh

\item[$\dag_{4} := $] $\avail{t}{X_{w}}{\R,\T}$

\end{description}
\end{minipage}
\begin{minipage}{.4\textwidth}
\begin{description}

\item[$\dag_{5} := $] $\rable{w}{u}$ and $\avail{t}{X_{u}}{\R,\T}$

\item[$\dag_{6} := $] $u$ and $y$ are fresh

\end{description}
\end{minipage}

\caption{The System $\nid$.}
\label{fig:nested-calculi}
\end{figure*}

Given a sequent $S = \R, \T, \fsa \sar \fsb$, we define the \emph{term substitution} $S(t/x)$ to be the sequent obtained by replacing (1) every labeled formula $w : \phi$ in $\fsa,\fsb$ by $w : \phi(t/x)$ and (2) $\T$ by $\mathcal T(t/x):= (\mathcal T\setminus\{\lterm{w}{x}\mid\lterm{w}{x}\in\mathcal T\})\cup\{\lterm{w}{y}\mid\lterm{w}{x}\in\mathcal T\text{ and } y \in \VT{t} \}.$ For example, if $S = \lrel{w}{u}, \lterm{w}{x},\lterm{u}{x},\lterm{u}{y}, \lform{w}{p(x)} \sar \lform{u}{\forall y q(x,y)}$, then
$$
S(f(y,z)/x)  = \lrel{w}{u}, \lterm{w}{y},\lterm{w}{z},\lterm{u}{y},\lterm{u}{z},\lterm{u}{y},\lform{w}{p(f(y,z))} \sar \lform{u}{\forall x' q(f(y,z),x')}
$$
where 
the bound variable $y$ in $\forall y q(x,y)$ was renamed to $x'$ to avoid capture. We now define two \emph{reachability relations} $\rtable{}{}$ and $\rable{}{}$ as well as the notion of \emph{availability}~\cite{Fit14,Lyo23}---all of which are required to properly formulate certain inference rules in our calculi.
 
\begin{definition}[$\rtable{}{}$, $\rable{}{}$] Let $\R$ be a finite multiset of relational atoms such that $w,u \in \lab(\R)$. We say that $u$ is \emph{strictly reachable} from $w$, written $\rtable{w}{u}$, \iffi there exist $v_{1}, \ldots, v_{n} \in \lab(\R)$ such that $\lrel{w}{v_{1}}, \ldots, \lrel{v_{n}}{u} \in \R$ with $n \in \mathbb{N}$. We say that $u$ is \emph{reachable} from $w$, written $\rable{w}{u}$, \iffi $\rtable{w}{u}$ or $w=u$. We write $\notrable{w}{u}$ if $\rable{w}{u}$ does not hold.
\end{definition}

\begin{definition}[Available]\label{def:available} Let $\seq = \R, \T, \fsa \sar \fsb$ be a sequent with $w \in \lab(\seq)$. We define a term $t$ to be \emph{available} for $w$ in $\R,\T$, written $\avail{t}{X_{w}}{\R,\T}$, \iffi $t \in \termset(X_{w})$ such that
$$
X_{w} = \{x \ | \ \lterm{u}{x}\in\mathcal \T \text{ and } \rable{u}{w}\text{ for some $u \in \lab(\seq)$}\}.
$$
\end{definition}

Our polytree calculus $\nid$ for $\biqid$ is shown in \fig~\ref{fig:nested-calculi}. The $\ax$, $\botl$, and $\topr$ rules serve as \emph{initial rules}, the \emph{domain shift rule} $\doms$ encodes the fact that $\predinterp(p,w) \subseteq D(w)^{n}$ in any model. 
We define the \emph{principal formula} in an inference rule to be the one explicitly mentioned in the conclusion, the \emph{auxiliary formulae} to be the non-principal formulae explicitly mentioned in the premises, and an \emph{active formula} to be either a principal or auxiliary formula. For example, $w : \exists x \phi$ is principal, $w : \phi(t/x)$ is auxiliary, and both are active in $\existsri$. Note that all rules of our calculus preserve the property of being a polytree-structured sequent. We define a \emph{proof} and its \emph{height} as usual~\cite{Tak13}. 
 Two unique features of our calculi are the inclusion of {\em reachability rules} and the {\em domain shift rule} ($ds$), which we elaborate on next.

\subsection{Reachability Rules} 

A unique feature of our calculi is the inclusion of \emph{reachability rules} (introduced in~\cite{Lyo21thesis}), a generalization of \emph{propagation rules} (cf.~\cite{CasCerGasHer97,Fit72,GorPosTiu11}), which are not only permitted to propagate formulae throughout a polytree sequent when applied bottom-up, but may also check to see if data exists along certain paths. The rules $\ax$, $\impl$, $\excr$, $\existsr$, and $\alll$ serve as our reachability rules. The side conditions of our reachability rules are listed at the bottom of \fig~\ref{fig:nested-calculi}. Moreover, we define a label $u$ or a variable $y$ to be \emph{fresh} in a rule application (as in the $\existsl$ and $\allr$ rules) \iffi it does not occur in the conclusion of the rule. 

\begin{remark}\label{rmk:ncd} If we set $\dag_{4}$ := `$t \in \termset$', $\dag_{5}$ := `$\rable{w}{u}$ and $t \in \termset$', and remove the $\doms$ rule, then we obtain a polytree calculus, dubbed $\ncd$, for the constant domain version of the logic $\biqcd$. We also note that in the constant domain setting, domain atoms are unnecessary and can be omitted from sequents.
\end{remark}

\begin{figure*}[t]
\begin{small}
\begin{center}
\begin{tabular}{c c}
$\prf \ = $

&

\AxiomC{}
\RightLabel{$\ax$}
\UnaryInfC{$\R, w' : x, u: \forall x( p \lor r(x)), u:  p, v : q \sar u: p ,  w' : r(x)$}
\DisplayProof
\end{tabular}
\end{center}
\begin{center}
\AxiomC{$\prf$}
\AxiomC{}
\RightLabel{$\ax$}
\UnaryInfC{$\R, w' : x, u: \forall x( p \lor r(x)), u:  r(x), v : q \sar u: p ,  w' : r(x)$}
\RightLabel{$\disl$}
\BinaryInfC{$w R u, u R v, v R w', w' : x, u: \forall x( p \lor r(x)), u:  p \lor r(x), v : q \sar u: p ,  w' : r(x)$}
\RightLabel{$\alll$}
\UnaryInfC{$w R u, u R v, v R w', w' : x, u: \forall x( p \lor r(x)), v : q \sar u: p ,  w' : r(x)$}
\RightLabel{$\allr$}
\UnaryInfC{$w R u, u R v, u: \forall x(p \lor r(x)), v : q \sar u: p ,  v : \forall x r(x)$}
\RightLabel{$\impr$}
\UnaryInfC{$w R u, u: \forall x(p \lor r(x)) \sar u: p ,  u : q \imp \forall x r(x)$}
\RightLabel{$\disr$}
\UnaryInfC{$w R u, u: \forall x(p \lor r(x)) \sar u: p \lor (q \imp \forall x r(x))$}
\RightLabel{$\impr$}
\UnaryInfC{$\sar w: \forall x(p \lor r(x)) \imp  (p \lor (q \imp  \forall x r(x)))$}
\DisplayProof
\end{center}
\end{small}
\caption{An example proof in $\ncd$ for bi-intuitionistic logic with constant domains.}
\label{fig:int-const}
\end{figure*}

To provide intuition, we give an example showing the operation of a reachability rule.

\begin{example}\label{ex:propagation-graph-path} Let $S = \R, \T, \Gamma \sar \Delta$ such that $\R = \lrel{u}{w}, \lrel{w}{v}$, $\T = w : x, u : y, v : z$, $\Gamma = w : \forall x p(x), w : p(f(y)), w : p(z)$, and $\Delta = u : q(x) \exc q(x), v : r(y)$. A representation of $S$ as a polytree is shown below. We explain (in)valid applications of the $\alll$ reachability rule.\\
\resizebox{\columnwidth}{!}{
\xymatrix{
\overset{u}{\boxed{u : y \sar u : q(x) \exc q(x)}} \ar[r] & \overset{w}{\boxed{w : x, w : \forall x p(x), w : p(f(y)), w : p(z) \sar }}\ar[r] & \overset{v}{\boxed{v : z \sar v : r(y)}}
}
}

\medskip

\noindent
 The term $f(y)$ is available for $w$ in $\seq$ since $\rable{u}{w}$, namely there is an edge from $u$ to $w$, and $f(y) \in \termset(X_{w})$ since $X_{w} = \{x,y\}$. Therefore, we may (top-down) apply the $\alll$ rule to delete $w : p(f(y))$ and derive the sequent $\seq' = \R, \T, \Gamma' \sar \Delta$ with $\Gamma' = w : \forall x p(x), w : p(z)$.
 By contrast, $w : p(z)$ cannot be deleted via an application of $\alll$ because the term $z$ is \emph{not} available for $w$ in $\seq$ (observe that $w$ is not reachable from $v$) meaning $z \not\in \termset(X_{w})$.
\end{example}

\begin{remark}\label{rmk:constant} We note that for any set $X \subseteq \var$, $\termset(X) \neq \emptyset$ since all constants are contained in $\termset(X)$ by definition. This means that bottom-up applications of $\existsri$ and $\allli$ may instantiate existential and universal formulae with any constant.
\end{remark}

The reachability rules $\ax$, $\impl$ and $\excr$ are important to ensure completeness for both $\ncd$ and $\nid.$ The reachability rules for $\existsr$ and $\alll$ are relevant only for $\nid$ to ensure that the domains in the model do not collapse into a constant domain. We illustrate the importance of these reachability rules with a couple of examples. 

\begin{example}[An Intuitionistic Formula Valid in Constant Domain Models] Consider the intuitionistic formula $\forall x(p \lor r(x)) \imp (p \lor (q \imp \forall xr(x)))$. This formula was adapted from an example in \cite{Lopez81}, which was used to illustrate the difficulty of obtaining a sound and complete sequent system for intuitionistic logic with constant domains. A proof of this formula in $\ncd$ is shown in Figure~\ref{fig:int-const} and crucially relies on reachability rules. In the figure, the relational atoms $\R = w R u, u R v, v R w'$ in the instances of $\ax$ allow us to conclude that $\rable{u}{u}$ and $\rable{u}{w'}$, justifying the left and right instances of $\ax$, respectively.
\end{example}

\begin{example}[Non-Provability of the Quantifier Shift Axiom in the Increasing Domain Setting] Let us consider again the quantifier shift axiom $\forall x(\varphi \lor \psi) \imp (\forall x \varphi \lor \psi)$ and an attempt to construct a proof (bottom-up) of one of its instances in $\nid$.
\begin{small}
\begin{center}
\AxiomC{$w R u, u R v, v : x, u : \forall x(p(x) \lor q) \sar  v : p(x),  u : q$}
\RightLabel{$\allr$}
\UnaryInfC{$w R u,  u : \forall x(p(x) \lor q) \sar  u : \forall x p(x),  u : q$}
\RightLabel{$\disr$}
\UnaryInfC{$w R u,  u : \forall x(p(x)\lor q) \sar  u : \forall x p(x) \lor q$}
\RightLabel{$\impr$}
\UnaryInfC{$\sar w : \forall x(p(x) \lor q) \imp (\forall x p(x) \lor q)$}
\DisplayProof
\end{center}
\end{small}
It is obvious that to finish this proof, we would need to instantiate the $\forall x$ quantifier in the labeled formula $u : \forall x(p(x) \lor q)$ with $x$ by applying the $\alll$ rule. 
However, to do so, we would need to demonstrate that the world $u$ is reachable from $v$ where the domain atom $v : x$ resides. Yet, $u$ is not reachable from $v$, so $x$ is not available at $u$ to be used by $\alll$. 
\end{example}

\subsection{The Domain Shift Rule $\doms$}

Although the reachability rules for the quantifiers prevent the quantifier shift axiom from being proved, it turns out that they are not sufficient to ensure the completeness of $\nid$ with respect to the sequent semantics for the logic $\biqid$. Interestingly, this incompleteness only arises when the exclusion connective is involved---if one considers the intuitionistic fragment of $\nid$, these reachability rules are sufficient to prove completeness (see Lemma~\ref{lm:separation} in Section~\ref{sec:int-subsystems}). To see this incompleteness issue, consider the formula in Example~\ref{ex:exists-exc}, which is semantically valid, and the following attempt at a (bottom-up) construction of a proof: 
\begin{small}
\begin{center}
\AxiomC{$w R u, u R v, u' R v, u: x, u' : p(x) \sar u': \exists y p(y), v: \bot$}
\RightLabel{$\excl$}
\UnaryInfC{$w R u, u R v, u: x, v : p(x) \exc \exists y p(y)  \sar v: \bot$}
\RightLabel{$\impr$}
\UnaryInfC{$w R u, u: x \sar u: (p(x) \exc \exists y p(y)) \imp \bot$}
\RightLabel{$\allr$}
\UnaryInfC{$\sar w: \forall x( (p(x) \exc \exists y p(y)) \imp \bot)$}
\DisplayProof
\end{center}
\end{small}
We have so far applied only invertible rules, 
so the original sequent is provable \iffi the top sequent in the above derivation also is. 
To proceed with the proof construction, one needs to instantiate the existential quantifier $\exists y$ with $x$. However, the only domain atom containing $x$ is located at the world $u$, which is not available to $u'$ where the existential formula is located. 

It is not so obvious how the reachability rules for quantifiers could be amended to allow this example to be proved. Looking at the above derivation, it might be tempting to augment the calculus with a rule that allows a backward reachability condition for domain atoms, e.g., making $u:x$ available to $u'$ for when $\rable{u'}{u}$ under certain admissibility conditions, but this could easily lead to a collapse of the domains if one is not careful. Instead, our approach here is motivated by the semantic clause for predicates: when $p(x)$ holds in a world, its interpretation requires that $x$ is also defined in that world. Proof theoretically, we could think of this as postulating an axiom such as $\forall x (p(x) \imp E(x))$ where $E(x)$ is an existence predicate (which, as we recall, was behind the semantics of the domain atoms). Translated into our calculus, this gives us the $\doms$ rule as shown in Figure~\ref{fig:nested-calculi}. Using the $\doms$ rule, the above derivation can now be completed to a proof: 
\begin{small}
\begin{center}
\AxiomC{}
\RightLabel{$\ax$}
\UnaryInfC{$\R, u: x, u' : x, u' : p(x) \sar u': p(x), u': \exists y p(y), v: \bot$}
\RightLabel{$\existsr$}
\UnaryInfC{$\R, u: x, u' : x, u' : p(x) \sar u': \exists y p(y), v: \bot$}
\RightLabel{$\doms$}
\UnaryInfC{$w R u, u R v, u' R v, u: x, u' : p(x) \sar u': \exists y p(y), v: \bot$}
\DisplayProof
\end{center}
\end{small}
Note that the $\doms$ rule can be applied only to atomic predicates, but not arbitrary formulae. This rules out unsound instances, e.g., allowing the domain atom $u:x$ to be introduced when $u: p(x) \imp \bot$, which is clearly semantically not valid. It may be possible to relax the restriction to atomic predicates by imposing some positivity conditions on the occurrences of $x$, but we did not find this necessary---neither for completeness, nor for cut-elimination. 

\begin{remark}
\label{rem:logic-without-ds}
The $\doms$ rule can be removed without affecting the cut-elimination result for $\nid$. This raises the possibility of defining a first-order bi-intuitionistic logic strictly weaker than $\biqid$. It is unclear what the semantics for such a logic would look like. 
\end{remark}


\subsection{Soundness and Completeness}

\begin{theorem}[Soundness]\label{thm:soundness} Let $\seq$ be a sequent. If $\seq$ is provable in $\nid$ ($\ncd$), then $\seq$ is ($\cd$-)valid.
\end{theorem}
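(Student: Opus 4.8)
The plan is to proceed by induction on the height of the given $\nid$-proof (the $\ncd$ case being entirely analogous, with $\cd$-models replacing $\id$-models), showing that each inference rule \emph{preserves validity}: whenever every premise of a rule instance is valid, so is its conclusion. Since the three initial rules have valid conclusions outright, this suffices. Indeed, for $\botl$ the antecedent of the conclusion is never satisfied (as $M,\iota(w),\assign\not\Vdash\bot$); for $\topr$ the consequent is always satisfied (as $M,\iota(w),\assign\Vdash\top$); and for $\ax$, the side condition $\rable{w}{u}$ together with reflexivity and transitivity of $\leq$ forces $\iota(w)\leq\iota(u)$ whenever $M,\iota,\assign\models\R$, so Proposition~\ref{prop:monotonicity} lifts $M,\iota(w),\assign\Vdash p(\vec t)$ to $M,\iota(u),\assign\Vdash p(\vec t)$.

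For the inductive step I would argue contrapositively, rule by rule: assuming some model $M$, $M$-interpretation $\iota$, and $M$-assignment $\assign$ falsify the conclusion, I produce the same (or a suitably extended) triple falsifying one of the premises. The propositional rules and the $\imp$/$\exc$ rules are routine unfoldings of the semantic clauses; for instance, for $\impr$ the witness world $v\geq\iota(w)$ with $M,v,\assign\Vdash\phi$ and $M,v,\assign\not\Vdash\psi$ supplies the required extension of $\iota$ sending the fresh label $u$ to $v$, and dually $\excl$ uses a witness $v\leq\iota(w)$. For the reachability rules $\impl$ and $\excr$ one case-splits on whether the relevant auxiliary formula holds at $\iota(u)$ resp.\ $\iota(w)$, using $\rable{w}{u}\Rightarrow\iota(w)\leq\iota(u)$ together with the semantic clause for $\imp$ resp.\ $\exc$ to route the falsification into the appropriate premise. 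All of these cases additionally rely on a standard coincidence lemma---the value of $\terminterp{\assign}(t)$ and the satisfaction of $\phi$ depend only on $\assign$ restricted to $\VT{t}$ resp.\ $\FV{\phi}$---which I would establish by an easy induction beforehand, alongside a substitution lemma: $M,w,\assign\Vdash\phi(t/x)$ \iffi $M,w,\assign[\terminterp{\assign}(t)/x]\Vdash\phi$.

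The quantifier rules are where the domain machinery enters. For $\allr$ and $\existsl$, which introduce a fresh label $u$ and/or a fresh variable $y$, I extend $\iota$ and modify $\assign$ so as to name the witness $\parama\in D(v)$ provided by the semantic clause, and check that the new domain atom $\lterm{u}{y}$ is satisfied precisely because $\parama\in D(v)$ while freshness preserves the satisfaction of the rest of the sequent; the substitution lemma then passes between $\phi(y/x)$ and $\phi$. For $\allli$ and $\existsri$ the crux is the availability side condition $\avail{t}{X_u}{\R,\T}$ (resp.\ $\avail{t}{X_w}{\R,\T}$), for which I would isolate the sublemma: if $M,\iota,\assign\models\R,\T$ and $\avail{t}{X_w}{\R,\T}$, then $\terminterp{\assign}(t)\in D(\iota(w))$. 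This holds because each $x\in X_w$ carries a domain atom $\lterm{u'}{x}\in\T$ with $\rable{u'}{w}$, whence $\terminterp{\assign}(x)\in D(\iota(u'))\subseteq D(\iota(w))$ by $\idcond$; then, as $t\in\termset(X_w)$, conditions $\concond$ (constants lie in every domain) and $\funcond$ (forward direction) lift membership from the variables of $t$ to $\terminterp{\assign}(t)$ itself. Granting this, $\allli$ and $\existsri$ unfold just like their classical labeled-sequent analogues; in the $\ncd$ case the sublemma is trivial, since the constant domain condition makes $D(\iota(w))$ equal to the universe, which contains $\terminterp{\assign}(t)$ for every term. Finally, $\doms$ uses the \emph{other} direction of $\funcond$: from $M,\iota(w),\assign\Vdash p(\vec t)$ we obtain $(\terminterp{\assign}(t_1),\dots,\terminterp{\assign}(t_n))\in\predinterp(\iota(w),p)\subseteq D(\iota(w))^n$, and stripping function symbols off each $t_i$ via $\funcond$ yields $\terminterp{\assign}(x)\in D(\iota(w))$ for every $x\in\VT{\vec t}$, which is exactly what the added domain atoms $\lterm{w}{\VT{\vec t}}$ in the premise demand.

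I expect the principal obstacle to be the bookkeeping around availability and reachability, and in particular verifying the sublemma above: that reachability in $\R$ combined with $M,\iota,\assign\models\R$ yields genuine $\leq$-chains, and that conditions $\concond$ and $\funcond$ are strong enough to transfer domain membership through arbitrary terms in both directions---one direction for $\alll$/$\existsr$, the other for $\doms$. Once this is in hand, the remaining work is a mechanical inspection of the rules of Figure~\ref{fig:nested-calculi}.
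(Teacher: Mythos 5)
Your proposal is correct and follows essentially the same route as the paper's own proof: induction on proof height, validity of the initial rules via the monotonicity condition, and a contrapositive rule-by-rule analysis in which the availability side conditions are discharged exactly as you describe, using $\idcond$, $\concond$, and $\funcond$ to transfer domain membership through terms (the paper carries out your availability sublemma inline in the $\existsri$, $\allli$, and $\doms$ cases rather than isolating it). No gaps; the auxiliary coincidence and substitution lemmas you mention are standard and implicitly assumed in the paper as well.
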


\begin{proof} By induction on the height of the given proof; 
see \app~\ref{app:soundness-completeness} for details.
\end{proof}


The completeness of our polytree calculi (see \thm~\ref{thm:completeness} below) is shown by taking a sequent of the form $w : \vec{x} \sar w : \phi(\vec{x})$ as input and showing that if the sequent is not provable, then the calculus can be used to construct an infinite derivation from which a counter-model of the end sequent can be extracted. We note that completeness only holds relative to sequents of the form $w : \vec{x} \sar w : \phi(\vec{x})$, which includes a domain atom for each free variable in $\phi(\vec{x})$. This restriction is needed because quantifier rules can only (bottom-up) instantiate quantified formulae with the free variables $\vec{x}$ of $\phi(\vec{x})$ if such free variables occur as domain atoms, and such free variables must be accessible to quantifier rules to properly extract a counter-model of the end sequent (see~\cite{Lyo23arxiv} for a relevant discussion).

Below, we outline the cut-free completeness proof for $\nid$ as the proof for $\ncd$ is similar; the complete proof can be found in Appendix~\ref{app:soundness-completeness}. Our proof outline makes use of various new notions, which we now define. A \emph{pseudo-derivation} is defined to be a (potentially infinite) tree whose nodes are sequents and where every parent node corresponds to the conclusion of a rule in $\nid$ with the children nodes corresponding to the premises. We remark that a proof in $\nid$ is a finite pseudo-derivation where all top sequents are instances of $\ax$, $\botl$, or $\topr$. A \emph{branch} $\branch$ is defined to be a maximal path of sequents through a pseudo-derivation, starting from the conclusion. The following lemma is useful in proving completeness.

\begin{lemma}\label{lem:rable-preserved-up} Let $\modclass \in \{\idclass, \cdclass\}$. For each $i \in \{0,1,2\}$, let $\seq_{i} = \R_{i}, \T_{i}, \fsa_{i} \sar \fsb_{i}$ be a sequent.
\begin{enumerate}

\item If $\rable{w}{u}$ holds for the conclusion of a rule $(r)$ in $\calc(\modclass)$, then $\rable{w}{u}$ holds for the premises of $(r)$;

\item If $w : p(\vec{t}) \in \fsa_{0}, \fsb_{0}$ and $\seq_{0}$ is the conclusion of a rule $(r)$ in $\calc(\modclass)$ with $\seq_{1}$ (and $\seq_{2}$) the premise(s) of $(r)$, then $w : p(\vec{t}) \in \fsa_{1}, \fsb_{1}$ (and $w : p(\vec{t}) \in \fsa_{2}, \fsb_{2}$, resp.);

\item If $w : x \in \T_{0}$ and $\seq_{0}$ is the conclusion of a rule $(r)$ in $\calc(\modclass)$ with $\seq_{1}$ (and $\seq_{2}$) the premise(s) of $(r)$, then $w : x\in \T_{1}$ (and $w : x \in \T_{2}$, resp.).

\end{enumerate}
\end{lemma}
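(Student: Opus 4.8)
The plan is to prove all three items by a uniform case analysis on the last rule $(r)$ applied, inspecting the rules of $\calc(\modclass)$ in Figure~\ref{fig:nested-calculi} (taking into account the modifications of Remark~\ref{rmk:ncd} for the $\cdclass$ case). The common observation driving every case is that the inference rules only ever \emph{add} relational atoms, domain atoms, and labeled formulae to their premises (relative to the conclusion), and never delete anything from the multisets $\R$, $\T$, or the displayed context $\fsa, \fsb$ — the rules that ``consume'' material (such as the reachability rules $\alll$ and $\existsri$) only do so conceptually at the principal-formula position, while everything that was already present in the conclusion reappears unchanged in each premise. So in each case one reads off that $\R_0 \subseteq \R_i$ (up to the isomorphism convention of Remark~\ref{rmk:iso-are-equal}, when the rule introduces a fresh label), $\T_0 \subseteq \T_i$, and $\fsa_0, \fsb_0 \subseteq \fsa_i, \fsb_i$, from which all three claims follow.

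For item (1), I would note that if $(r)$ does not touch $\R$ at all, then $\R_1 = \R_0$ (and $\R_2 = \R_0$) and the reachability statement $\rable{w}{u}$ is literally the same in premise and conclusion. The only rules that modify $\R$ are $\excl$, $\impr$, and $\allr$, each of which adds a single relational atom $\lrel{v}{v'}$ to $\R_0$ to obtain the premise, where (by the freshness side conditions $\dag_3$, $\dag_6$) the new label is not among $\lab(S_0)$. Adding edges to the polytree can only create new reachability pairs, never destroy existing ones: if there is a directed path from $w$ to $u$ in the graph of $\R_0$, the same path exists in the graph of $\R_0, \lrel{v}{v'}$, and $w = u$ is trivially preserved. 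Hence $\rable{w}{u}$ in the conclusion implies $\rable{w}{u}$ in the premise(s). Items (2) and (3) are even more immediate: no rule removes a labeled atomic formula $w : p(\vec t)$ or a domain atom $w : x$ that sits in the \emph{side} context (as opposed to being the principal formula), so whenever such an atom occurs in $\fsa_0, \fsb_0$ (resp.\ $\T_0$) it still occurs in every premise. The one rule requiring a moment's care is $\existsl$, where $\T$ is extended by $\lterm{w}{y}$ with $y$ fresh, and $\doms$, where $\T$ is extended by $\lterm{w}{\VT{\vec t}}$ — in both cases the old $\T_0$ is retained, so the claim holds.

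The main (and essentially only) obstacle is purely bookkeeping: one must be careful about the rules that introduce a fresh label ($\excl$, $\impr$, $\allr$, and, for $\T$, $\existsl$), since strictly speaking the premise is not obtained from the conclusion by a subset inclusion on the nose but only after matching labels up to the polytree isomorphism of Remark~\ref{rmk:iso-are-equal}; invoking that remark disposes of this subtlety. I expect the whole argument to be a short, routine induction-free case check — there is no induction on proof height needed, as the statement is about a single rule application — and the write-up can simply enumerate the rules of Figure~\ref{fig:nested-calculi}, handling the ``context-preserving'' rules in one sweep and then treating $\excl$, $\impr$, $\allr$, $\existsl$, and $\doms$ explicitly.
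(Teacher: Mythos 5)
Your proposal is correct and matches the paper's argument, which simply observes that each claim follows by inspecting the rules of $\calc(\modclass)$: no rule deletes relational atoms, domain atoms, or atomic labeled formulae going from conclusion to premise, and added edges can only extend reachability. Your write-up is just a more detailed version of that same rule-by-rule inspection (the appeal to Remark~\ref{rmk:iso-are-equal} is not even needed, since the premise multisets literally contain those of the conclusion).
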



The lemma tells us that propagation paths, the position of atomic formulae, and the position of terms are bottom-up preserved in rule applications.
 
\begin{theorem}[Completeness]\label{thm:completeness} If $w : \vec{x} \sar w : \phi(\vec{x})$ is ($\cd$-)valid, then $w : \vec{x} \sar w : \phi(\vec{x})$ is provable in $\nid$ ($\ncd$).
\end{theorem}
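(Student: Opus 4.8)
The plan is to prove completeness by contraposition: assuming that $w : \vec{x} \sar w : \phi(\vec{x})$ is not provable in $\nid$, I would build a (possibly infinite) pseudo-derivation via a fair saturation procedure (a standard proof-search strategy) and extract from one of its open branches a counter-model falsifying the end sequent. The proof search would systematically apply all rules of $\nid$ bottom-up in a fair manner, ensuring that every applicable rule instance---including instantiations of $\existsri$ and $\allli$ with every available term (cf. Remark~\ref{rmk:constant}), every creation of fresh successors via $\impr$, $\excl$, $\allr$, $\existsl$, every $\doms$ application---is eventually carried out along every branch. Since the end sequent is unprovable, $\konig$'s lemma (or rather its contrapositive, applied to the assumption that all branches close) guarantees the existence of an open, saturated branch $\branch$; more precisely, one shows that unprovability is preserved upward through at least one premise of each rule, yielding an infinite open branch.

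Next I would define the candidate counter-model $M$ from the saturated branch $\branch$. The worlds are the labels occurring anywhere in $\branch$; the preorder $\leq$ is the reflexive-transitive closure of the union of all relational atoms $\lrel{w}{u}$ appearing on $\branch$ (transitivity and reflexivity being forced since the semantics and the reachability side conditions $\dag_1$, $\dag_5$ already work modulo $\rable{}{}$); the universe $U$ is the set of all terms built from variables appearing in domain atoms on $\branch$ (a term model), with $D(w)$ the set of terms $t$ with $\avail{t}{X_w}{\R,\T}$ on $\branch$---this is where Lemma~\ref{lem:rable-preserved-up}(1),(3) is crucial, as it guarantees that once a domain atom or a reachability fact holds it persists upward, so $D$ is well-defined and satisfies the increasing domain condition $\idcond$; the function interpretation $\funinterp$ is the Herbrand interpretation; and $\predinterp(w,p)$ is the set of tuples $\vec{t}$ with $w : p(\vec{t})$ occurring in the \emph{antecedent} on $\branch$. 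Monotonicity $\mcond$ follows from Lemma~\ref{lem:rable-preserved-up}(2) together with the $\ax$ rule (which, being a reachability rule, forbids $w : p(\vec{t})$ from being in both antecedent and consequent at reachable labels), and condition $\concond$ (constants are in every domain) holds because constants are available everywhere (Remark~\ref{rmk:constant}), while $\funcond$ is a routine check on the term structure of $X_w$. For the $\ncd$ case one additionally checks $\cdcond$, which holds because in $\ncd$ the side conditions $\dag_4$, $\dag_5$ no longer depend on availability, so quantifier instantiation is unrestricted and domain atoms are irrelevant.

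The heart of the argument is the \emph{truth lemma}: for every labeled formula $w : \psi$ occurring on $\branch$, if it lies in the antecedent then $M, \iota, \assign \Vdash \psi$ at $\iota(w) = w$, and if it lies in the consequent then $M, \iota, \assign \not\Vdash \psi$ at $w$, where $\assign$ is the identity assignment on variables. This is proved by induction on $|\psi|$. The atomic case is by definition of $\predinterp$, using that $\branch$ is open (so no atom occurs on both sides at reachable labels, by the $\ax$ rule) and using the $\doms$ rule to ensure that whenever $w : p(\vec{t})$ is in the antecedent, the relevant domain atoms $w : \VT{\vec{t}}$ are present so that $\vec{t} \in D(w)^n$ as required by the atomic satisfaction clause. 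The propositional connective cases use saturation under the corresponding left/right rules. The modal-like cases for $\imp$ and $\exc$ use saturation under $\impr$/$\impl$ and $\excr$/$\excl$: e.g. if $w : \phi \imp \psi$ is in the consequent, saturation under $\impr$ provides a fresh successor $u$ with $w : \phi$ in the antecedent and $u : \psi$ in the consequent, and if it is in the antecedent, saturation under $\impl$ ensures that for every $u$ with $\rable{w}{u}$, either $u : \phi$ is in the consequent or $u : \psi$ is in the antecedent. The quantifier cases are the delicate ones: for $w : \forall x \phi$ in the antecedent, saturation under $\allli$ gives, for every $u$ with $\rable{w}{u}$ and every term $t$ available at $u$, the labeled formula $u : \phi(t/x)$ in the antecedent---and by the definition of $D(u)$ via availability, these are exactly the elements of $D(u)$, matching the semantic clause for $\forall$; dually, $w : \forall x \phi$ in the consequent triggers $\allr$, producing a fresh successor $u$ with a fresh domain atom $u : y$ and $u : \phi(y/x)$ in the consequent, which by the semantic clause falsifies $\forall x \phi$ at $w$. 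The analogous argument handles $\exists$ using $\existsl$ and $\existsri$.

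I expect the main obstacle to be exactly the interplay between the availability condition, the domain atoms, and the $\doms$ rule in the atomic and universal cases of the truth lemma---making sure that the term model $D(w)$ defined via $\avail{}{}{}$ is simultaneously large enough for the $\existsri$/$\allli$ saturation to have instantiated everything it needs (so the $\forall$-antecedent and $\exists$-consequent clauses come out right) and small enough that every atom asserted true genuinely has its arguments in the domain (which is what $\doms$ secures). Lemma~\ref{lem:rable-preserved-up} is the technical workhorse that keeps these sets coherent up the branch, and the openness of $\branch$ together with the reachability formulation of $\ax$ is what prevents the counter-model from degenerating. A secondary subtlety is arranging the saturation procedure to be genuinely fair with respect to the infinitely many available terms and the infinitely many (label, term) pairs for the quantifier rules, which is handled by a standard dovetailing/scheduling argument; I would state this as a book-keeping lemma and defer the details to the appendix.
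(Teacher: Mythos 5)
Your overall strategy coincides with the paper's own proof (fair bottom-up saturation, K\"onig's lemma to extract an infinite open branch $\branch$, a term counter-model read off the branch, and a truth lemma by induction on formula complexity), but one step fails as written: the definition of the predicate interpretation and the ensuing monotonicity claim. You set $\predinterp(w,p)$ to be the tuples $\vec{t}$ with $w : p(\vec{t})$ in the antecedent \emph{at $w$ itself}, and claim $\mcond$ follows from Lemma~\ref{lem:rable-preserved-up}(2) together with the reachability formulation of $\ax$. Lemma~\ref{lem:rable-preserved-up}(2) only says that an atom labeled $w$ persists \emph{upward along the branch} at the same label; no rule of $\nid$ ever copies an antecedent atom from $w$ to a label $u$ with $\rable{w}{u}$, and the $\ax$ side condition only prevents the atom from occurring in the \emph{consequent} at reachable labels. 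Hence with your definition one can have $w : p(\vec{t}) \in \fsa^{\branch}$, $w \leq u$, and $\vec{t} \notin \predinterp(u,p)$, so the structure you build violates $\mcond$ and is not a model in the sense of Definition~\ref{def:model}, which undermines both the model verification and the truth lemma.

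The repair is exactly what the paper does: close the interpretation under reachability, i.e.\ put $\vec{t} \in \predinterp(u,p)$ \iffi some $v$ with $v \twoheadrightarrow^{*}_{\R^{\branch}} u$ has $v : p(\vec{t}) \in \fsa^{\branch}$. Then $\mcond$ holds by transitivity of reachability, the consequent atomic case of the truth lemma is secured by the openness of $\branch$ together with the reachability condition of $\ax$ (no $v$ reachable to $u$ can carry the atom on the left when $u$ carries it on the right), and $\predinterp(u,p) \subseteq D(u)^{n}$ still follows from the $\doms$ saturation plus the forward propagation of availability. The rest of your outline (domains via availability, $\idcond$ from persistence of domain and relational atoms, the quantifier cases via $\allli$/$\existsri$ saturation over all available terms, and the $\ncd$ variant) matches the paper's argument. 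A minor slip: in your $\impr$ case the fresh successor $u$ should receive $u : \phi$ (not $w : \phi$) in the antecedent.
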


\begin{proof} (Outline.) We assume that $\seq = w : \vec{x} \sar w : \phi(\vec{x})$ is not provable in $\nid$ and show that a model $M$ can be defined which witnesses that $\seq$ is invalid. To prove this, we first define a proof-search procedure $\prove$ that bottom-up applies rules from $\nid$ to $w : \vec{x} \sar w : \phi(\vec{x})$. Second, we show how a model $M$ can be extracted from failed proof-search. We now describe the proof-search procedure $\prove$ and let $\prec$ be a well-founded, strict linear order over the set $\termset$ of terms.\\

\noindent
$\prove$. Let us take $w : \vec{x} \sar w : \phi(\vec{x})$ as input and continue to the next step. We show here some key selected steps; the complete $\prove$ procedure can be found in Appendix~\ref{app:soundness-completeness}. 

$\ax$, $\botl$, and $\topr$. Suppose $\branch_{1}, \ldots, \branch_{n}$ are all branches occurring in the current pseudo-derivation and let $\seq_{1}, \ldots, \seq_{n}$ be the top sequents of each respective branch. For each $1 \leq i \leq n$, we halt the computation of $\prove$ on each branch $\branch_{i}$ where $\seq_{i}$ is of the form $\ax$, $\botl$, or $\topr$. If $\prove$ is halted on each branch $\branch_{i}$, then $\prove$ returns $\success$ because a proof of the input has been constructed. However, if $\prove$ did not halt on each branch $\branch_{i}$ with $1 \leq i \leq n$, then let $\branch_{j_{1}}, \ldots, \branch_{j_{k}}$ be the remaining branches for which $\prove$ did not halt. For each such branch, copy the top sequent above itself, and continue to the next step.

\medskip

$\doms$. Suppose $\branch_{1}, \ldots, \branch_{n}$ are all branches occurring in the current pseudo-derivation and let $\seq_{1}, \ldots, \seq_{n}$ be the top sequents of each respective branch. For each $1 \leq i \leq n$, we consider $\branch_{i}$ and extend the branch with bottom-up applications of $\doms$ rules. Let $\branch_{k+1}$ be the current branch under consideration, and assume that $\branch_{1},\ldots,\branch_{k}$ have already been considered. We assume that the top sequent in $\branch_{k+1}$ is of the form
$$
\seq_{k+1} = \lseq{\mathcal R}{\mathcal T}{\fsa, w : p_{1}(\vec{t}_{1}), \ldots, w_{\ell} : p_{\ell}(\vec{t}_{\ell})}{\fsb}
$$
where all atomic input formulae are displayed in $\seq_{k+1}$ above. We successively consider each atomic input formula and bottom-up apply $\doms$, yielding a branch extending $\branch_{k+1}$ with a top sequent saturated under $\doms$ applications. After these operations have been performed for each branch $\branch_{i}$ with $1 \leq i \leq n$, we continue to the next step.

\medskip

$\existsl$. Suppose $\branch_{1}, \ldots, \branch_{n}$ are all branches occurring in the current pseudo-derivation and let $\seq_{1}, \ldots, \seq_{n}$ be the top sequents of each respective branch. For each $1 \leq i \leq n$, we consider $\branch_{i}$ and extend the branch with bottom-up applications of $\existsl$ rules. Let $\branch_{k+1}$ be the current branch under consideration, and assume that $\branch_{1},\ldots,\branch_{k}$ have already been considered. We assume that the top sequent in $\branch_{k+1}$ is of the form
$$
\seq_{k+1} = \R, \T, \fsa, w_{1} : \exists x_{1} \phi_{1}, \ldots, w_{m} : \exists x_{m} \phi_{m} \vdash \fsb
$$
where all existential input formulae $w_{i} : \exists x_{i} \phi_{i}$ are displayed in $\seq_{k+1}$ above. We consider each formula $w_{i} : \exists x_{i} \phi_{i}$ in turn, and bottom-up apply the $\existsl$ rule. These rule applications extend $\branch_{k+1}$ such that
$$
\R, \T', \fsa, w_{1} : \phi_{1}(y_{1}/x_{1}), \ldots, w_{n} : \phi_{m}(y_{m}/x_{m}) \vdash \fsb
$$
is now the top sequent of the branch with $y_{1},\ldots,y_{m}$ fresh variables and $\T' = \T, w_{1} : y_{1}, \ldots, w_{m} : y_{m}$. After these operations have been performed for each branch $\branch_{i}$ with $1 \leq i \leq n$, we continue to the next step.

\medskip

$\existsri$. Suppose $\branch_{1}, \ldots, \branch_{n}$ are all branches occurring in the current pseudo-derivation and let $\seq_{1}, \ldots, \seq_{n}$ be the top sequents of each respective branch. For each $1 \leq i \leq n$, we consider $\branch_{i}$ and extend the branch with bottom-up applications of $\existsri$ rules. Let $\branch_{k+1}$ be the current branch under consideration, and assume that $\branch_{1},\ldots,\branch_{k}$ have already been considered. We assume that the top sequent in $\branch_{k+1}$ is of the form
$$
\seq_{k+1} = \R, \T, \fsa \vdash w_{1} : \exists x_{1} \phi_{1}, \ldots, w_{m} : \exists x_{m} \phi_{m}, \fsb
$$
 where all existential formulae $w_{i} : \exists x_{i} \phi_{i}$ are displayed in $\seq_{k+1}$ above. We consider each labeled formula $w_{m} : \exists x_{m} \phi_{i}$ in turn, and bottom-up apply the $\existsri$ rule. Let $w_{\ell+1} : \exists x_{\ell+1} \phi_{\ell+1}$ be the current formula under consideration, and assume that $w_{1} : \exists x_{1} \phi_{1}, \ldots, w_{\ell} : \exists x_{\ell} \phi_{\ell}$ have already been considered. Recall that $\prec$ is a well-founded, strict linear order over the set $\termset$ of terms. Choose the $\prec$-minimal term $t \in \termset(X_{w_{\ell+1}})$ that has yet to be picked to instantiate $w_{\ell+1} : \exists x_{\ell+1} \phi_{\ell+1}$ and bottom-up apply the $\existsri$ rule, thus adding $w_{\ell+1} : \phi_{\ell+1}(t/x_{\ell+1} )$. We perform these operations for each branch $\branch_{i}$ with $1 \leq i \leq n$. 
 
\medskip

The remaining rules of $\nid$ are processed in a similar fashion. The $\prove$ procedure will saturate open branches of the pseudo-derivation that is under construction by repeatedly (bottom-up) applying rules from $\nid$ in a roundabout fashion. 

\medskip

Next, we aim to show that if $\prove$ does not return $\success$, then a model $M$, $M$-interpretation $\iota$, and $M$-assignment $\assign$ can be defined such that $M, \iota, \assign \not\models \seq$. If $\prove$ halts, i.e. $\prove$ returns $\success$, then a proof of $\seq$ may be obtained by `contracting'  all redundant inferences from the `$\ax$, $\botl$, and $\topr$' step of $\prove$. Therefore, in this case, since a proof exists, we have obtained a contradiction to our assumption. As a consequence, we have that $\prove$ does not halt, that is, $\prove$ generates an infinite tree with finite branching. By K\"onig's lemma, an infinite branch must exist in this infinite tree, which we denote by $\branch$. We define a model $M = (W,\leq,U,D,\funinterp,\predinterp)$ by means of this branch as follows: 
Let us define the following sets, all of which are obtained by taking the union of each multiset of relational atoms, domain atoms, antecedent labeled formulae, and consequent labeled formulae (resp.) occurring within a sequent in $\branch$:
$$
\R^{\branch} = \!\!\!\!\!\!\!\! \bigcup_{(\R, \T, \fsa \sar \fsb) \in \branch} \!\!\!\!\!\!\!\! \R 
\qquad 
\T^{\branch} = \!\!\!\!\!\!\!\! \bigcup_{(\R, \T, \fsa \sar \fsb) \in \branch} \!\!\!\!\!\!\!\! \T
\qquad
\fsa^{\branch} = \!\!\!\!\!\!\!\! \bigcup_{(\R, \T, \fsa \sar \fsb) \in \branch} \!\!\!\!\!\!\!\! \fsa
\qquad
\fsb^{\branch} = \!\!\!\!\!\!\!\! \bigcup_{(\R, \T, \fsa \sar \fsb) \in \branch} \!\!\!\!\!\!\!\! \fsb
$$
We now define: (1) $u \in W$ \iffi $u \in \lab(\R^{\branch},\T^{\branch}, \fsa^{\branch}, \fsb^{\branch})$, (2) $\leq \ = \{(u,v) \ | \ uRv \in \R\}^{*}$ where $*$ denotes the reflexive-transitive closure, (3) $t \in U$ \iffi there exists a label $u \in \lab(\R^{\branch},\T^{\branch}, \fsa^{\branch}, \fsb^{\branch})$ such that $t \in \termset(X_{u})$, (4) $t \in D(u)$ \iffi $t \in \termset(X_{u})$, and (5) $(t_{1}, \ldots, t_{n}) \in \predinterp(u,p)$ \iffi $v,u \in \lab(\R^{\branch},\T^{\branch}, \fsa^{\branch}, \fsb^{\branch})$, $v \twoheadrightarrow^{*}_{\R^{\branch}} u$, and $v : p(t_{1}, \ldots, t_{n}) \in \fsa^{\branch}$.

It can be shown that $M$ is indeed a model. 
Let us define $\assign$ to be the $M$-assignment mapping every variable in $U$ to itself and every variable in $\var \setminus U$ arbitrarily. To finish the proof of completeness, we now argue the following by mutual induction on the complexity of the formula $\psi$: (1) if $u : \psi \in \fsa^{\branch}$, then $M,u,\assign \Vdash \psi$, and (2) if $u : \psi \in \fsb^{\branch}$, then $M,u,\assign \not\Vdash \psi$. 
 Let $\iota$ to be the $M$-interpretation such that $\iota(u) = u$ for $u \in W$ and $\iota(v) \in W$ for $v \not\in W$. By the proof above, $M,\iota,\assign \not\models w : \vec{x} \sar w : \phi(\vec{x})$, showing that if a sequent of the form $w : \vec{x} \sar w : \phi(\vec{x})$ is not provable in $\nid$, then it is invalid, that is, every valid sequent of the form $w : \vec{x} \sar w : \phi(\vec{x})$ is provable in $\nid$.
\end{proof}

\begin{remark} We remark that cut admissibility follows from the soundness of the $\cut$ rule (see \fig~\ref{fig:admiss-rules}) and the completeness theorem above. However, this method of proof has two downsides: first, the restriction in the completeness theorem above implies that cut admissibility only holds for proofs with an end sequent of the form $w : \vec{x} \sar w : \phi(\vec{x})$. Second, this (semantic) method of proof does not define an algorithm showing how instances of $\cut$ can be permuted upward and eliminated from a given proof. In \sect~\ref{sec:properties}, we will prove that cut admissibility holds for all proofs and will provide such an algorithm (see \thm~\ref{thm:cut-elim-int}).
\end{remark}


\subsection{Intuitionistic Subsystems}
\label{sec:int-subsystems}

We end this section by discussing two subsystems of $\nid$ and $\ncd$ arising from restricting the connectives to the intuitionistic fragment. In the former case, we obtain a proof system for the usual first-order intuitionistic logic (with non-constant domains) $\ilq$, and in the latter, we obtain a proof system for intuitionistic logic with constant domains $\ilqc$.


\begin{corollary}[Conservativity]
\label{cor:conservativity}
Let $\varphi$ be an intuitionistic formula (i.e.~a formula with no occurrences of $\exc$). Then, $\varphi$ is valid in $\ilq$ ($\ilqc$) iff $\sar w:\varphi$ is provable in $\nid$ (respectively, $\ncd$). 
\end{corollary}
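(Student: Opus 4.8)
The plan is to prove the two directions of the biconditional separately, reducing conservativity to the soundness and completeness results already established, together with a semantic bridge between the sequent semantics of $\biqid$ (resp.\ $\biqcd$) restricted to the intuitionistic fragment and the usual Kripke semantics for $\ilq$ (resp.\ $\ilqc$). First I would observe that an $\id$-model $(W,\leq,U,D,\funinterp,\predinterp)$, when we ignore the exclusion connective, is precisely a first-order intuitionistic Kripke model with increasing domains (and the conditions $\concond$, $\funcond$ on $\funinterp$ merely ensure term interpretations land in the right domains), and that on formulae with no occurrence of $\exc$ the satisfaction clause $\Vdash$ of Definition~\ref{def:semantic-clauses} coincides clause-by-clause with the standard intuitionistic one. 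Hence for an intuitionistic $\varphi$ we have $\varphi$ valid in $\ilq$ iff $\varphi$ is valid in the sense of Definition~\ref{def:semantic-clauses}, and likewise with $\cd$-models for $\ilqc$.

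For the direction ``$\varphi$ valid in $\ilq$ $\Rightarrow$ $\,\sar w:\varphi$ provable in $\nid$'': if $\varphi$ is valid in $\ilq$ then by the remark above it is valid in the sense of Definition~\ref{def:semantic-clauses}, so the sequent $\,\sar w:\varphi$ is valid (it has empty antecedent, so $\FV{\varphi}=\emptyset$ and the end-sequent already has the shape $w:\vec x \sar w:\phi(\vec x)$ with $\vec x$ empty), and Theorem~\ref{thm:completeness} gives a proof of $\,\sar w:\varphi$ in $\nid$. Conversely, if $\,\sar w:\varphi$ is provable in $\nid$, then by Theorem~\ref{thm:soundness} it is valid, i.e.\ $\varphi$ is valid in the sense of Definition~\ref{def:semantic-clauses}, hence valid in $\ilq$. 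The $\ilqc$/$\ncd$ case is identical, using the $\cd$-model versions of Theorems~\ref{thm:soundness} and~\ref{thm:completeness} and Remark~\ref{rmk:ncd}. One subtlety to address explicitly: the completeness theorem is stated for sequents $w:\vec x\sar w:\phi(\vec x)$ carrying a domain atom for each free variable; since $\varphi$ is a \emph{sentence} here (the statement concerns validity in $\ilq$, i.e.\ $\emptyset\Vdash\varphi$), $\vec x$ is empty and no domain atoms are needed, so the restriction is vacuous. If one wanted the statement for arbitrary intuitionistic formulae with free variables, one would instead invoke completeness on $w:\vec x\sar w:\varphi(\vec x)$ and note that domain atoms behave like the trivial existence predicate in the intuitionistic reading; but as stated the sentence case suffices.

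The only real content beyond bookkeeping is the semantic bridge in the first paragraph, and the one point that needs care there is that proofs in $\nid$ of an end-sequent in the intuitionistic fragment might a priori use exclusion-introduction rules or the full bi-intuitionistic machinery on \emph{subformulae that are not subformulae of $\varphi$} — but this is not an obstacle for conservativity as phrased, because we route entirely through soundness/completeness and never inspect proof shapes. (It would be an obstacle if we wanted a syntactic subformula-property argument; that is what Lemma~\ref{lm:separation} of Section~\ref{sec:int-subsystems} is for, and one could alternatively give a purely proof-theoretic argument: every rule of $\nid$ other than $\excl,\excr$ has the intuitionistic subformula property, and by cut-elimination (Theorem~\ref{thm:cut-elim-int}) plus Lemma~\ref{lm:separation} a cut-free proof of $\,\sar w:\varphi$ with $\varphi$ intuitionistic uses only the intuitionistic rules, which are exactly a polytree/nested calculus for $\ilq$.) I expect the main obstacle, such as it is, to be stating the correspondence between $\id$-models and intuitionistic Kripke models cleanly enough that the coincidence of the two satisfaction relations on $\exc$-free formulae is immediate by induction on formula complexity — a routine but necessary check — after which both directions are one-line appeals to Theorems~\ref{thm:soundness} and~\ref{thm:completeness}.
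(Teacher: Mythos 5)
Your proposal is correct and matches the paper's own argument: the paper simply notes that the corollary is ``straightforward from Definition~\ref{def:sequent-semantics}'', i.e.\ exactly your route of observing that on $\exc$-free formulae the sequent semantics over ($\cd$-)$\id$-models coincides with intuitionistic Kripke validity and then appealing to Theorems~\ref{thm:soundness} and~\ref{thm:completeness}. Your explicit handling of the restriction in the completeness theorem (domain atoms for free variables, trivial in the sentence case) is a reasonable filling-in of a detail the paper leaves implicit.
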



The proof of Corollary~\ref{cor:conservativity} is straightforward from 
Definition~\ref{def:sequent-semantics}. 
We show here a stronger {\em proof-theoretic} conservativity result: we can in fact extract a purely intuitionistic fragment out of $\nid$, where every sequent in the fragment is interpretable in the semantics without the existence predicate. 
We prove this via syntactic means, by showing how we can translate intuitionistic proofs in $\calc(\idclass)$ to proofs in Gentzen's $\lj$~\cite{Gen35a,Gen35b}. A key idea is to first define a formula interpretation of a polytree sequent, and then show that every inference rule corresponds to a valid implication in $\lj$. We start by defining a notion of intuitionistic (polytree) sequent. 



\begin{definition}
\label{def:int-seq}
    A sequent $S = \R, \T, \Gamma \sar \Delta$ is an {\em intuitionistic sequent} iff $\R$ is a tree rooted at node $u$ such that 
    \begin{itemize}
    \item every formula in $S$ is an intuitionistic formula (i.e. it contains no occurrences of $\exc$),   
    \item for every labeled formula $w : \phi$ in $S$ and variable $x \in VT(\phi)$, $x$ is available for $w$, and
    \item if $w: x$ and $z: x$ are in $\T$, then $w = z.$ 
    \end{itemize}
\end{definition}



By $\icalc(\idclass)$ we denote the restriction to intuitionistic sequents of the proof system $\calc(\idclass)$ without the $\doms$ rule.
The next lemma states an important property of $\nid$, called the \emph{separation property}, which was first discussed in the context of tense logics~\cite{GorPosTiu11}.

\begin{lemma}[Separation]
\label{lm:separation}
An intuitionistic sequent $S$ is provable in $\icalc(\idclass)$ \iffi it is provable in $\calc(\idclass).$
\end{lemma}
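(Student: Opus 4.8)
The plan is to prove both directions of the biconditional, with the forward direction being trivial and the backward direction carrying the real content.

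\textbf{The easy direction.} Since $\icalc(\idclass)$ is, by definition, a restriction of $\calc(\idclass)$ (to intuitionistic sequents, with the $\doms$ rule dropped), any proof in $\icalc(\idclass)$ is already a proof in $\calc(\idclass)$. So if $S$ is provable in $\icalc(\idclass)$, it is provable in $\calc(\idclass)$ verbatim. This needs no work beyond observing that the rules of $\icalc(\idclass)$ form a subset of those of $\calc(\idclass)$.

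\textbf{The hard direction.} Suppose $S$ is an intuitionistic sequent provable in $\calc(\idclass)$; we must produce a proof in $\icalc(\idclass)$. The plan is to take a $\calc(\idclass)$-proof $\pi$ of $S$ and show that it can be transformed into an $\icalc(\idclass)$-proof. There are two things to eliminate: occurrences of the $\doms$ rule, and occurrences of sequents that are not intuitionistic sequents (i.e. sequents violating one of the three clauses of Definition~\ref{def:int-seq}). The first observation is that since $S$ contains no $\exc$, and no rule of $\nid$ other than $\excl$ and $\excr$ introduces $\exc$, and neither $\excl$ nor $\excr$ can be the lowermost rule whose conclusion is $\exc$-free while having an $\exc$-containing premise in a way that is not already present below --- more carefully, one argues by a straightforward induction that every sequent in $\pi$ is $\exc$-free, because reading the rules bottom-up, no rule ever creates an $\exc$ that was not already present in the conclusion (exclusion subformulae only ever get decomposed, never synthesised, going upward). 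So the first clause of Definition~\ref{def:int-seq} is automatically satisfied throughout $\pi$.

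For the remaining two clauses and the removal of $\doms$: the key step I expect to be the crux is showing that $\doms$ is \emph{admissible} in $\icalc(\idclass)$ over intuitionistic sequents, or better, that $\doms$ can simply be permuted out. Here one uses the second clause of Definition~\ref{def:int-seq}: in an intuitionistic sequent every variable of every labeled formula $w:\phi$ is already available for $w$, so the domain atoms that a bottom-up $\doms$ application would add are, in the relevant sense, already derivable/redundant, and Lemma~\ref{lem:rable-preserved-up} guarantees that availability and the positions of atomic formulae are preserved going upward in rule applications, so this redundancy propagates. One must also check that the branching/eigenvariable rules $\existsl$, $\allr$, $\excl$, $\impr$ applied within $\pi$ keep us inside the class of intuitionistic sequents --- here the third clause (uniqueness of the label carrying a given variable in $\T$) is maintained because these rules introduce \emph{fresh} variables, so no clash of the form $w:x, z:x$ with $w\neq z$ can arise; and the ``$\R$ is a tree rooted at $u$'' condition is maintained because $\excl$ (which would add a backward edge $uRw$) is absent from the $\exc$-free proof, while $\impr$ and $\allr$ only add forward edges to fresh nodes. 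The main obstacle, then, is the careful bookkeeping to show that stripping $\doms$ applications from $\pi$ (replacing each $\doms$-conclusion/premise pair by the premise, after verifying the added domain atoms were never essentially used below by a reachability side condition $\dag_4$, $\dag_5$ in a way that availability from the intuitionistic-sequent hypothesis does not already supply) yields a well-formed $\icalc(\idclass)$-derivation; this may require first establishing height-preserving admissibility of weakening for domain atoms in $\icalc(\idclass)$ (analogous to the structural-rule results announced for Section~\ref{sec:properties}) so that the domain atoms introduced by $\doms$ higher up can be threaded down or simply dropped. I would structure the argument as an induction on the height of $\pi$, with the $\doms$ case handled by the admissibility/redundancy argument above and every other case handled by reapplying the same rule to the inductively obtained $\icalc(\idclass)$-subproofs after checking intuitionistic-sequenthood is preserved.
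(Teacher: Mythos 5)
Your proposal is correct and follows essentially the same route as the paper's own (outline) proof: the $\icalc(\idclass)$-to-$\calc(\idclass)$ direction is immediate, and for the converse you verify that every sequent in a cut-free $\calc(\idclass)$-proof of an intuitionistic end sequent remains (almost) intuitionistic---no $\exc$ can appear, $\excl$ never fires so the tree shape survives, and only $\doms$ can break the third clause---and then you remove $\doms$. The only cosmetic difference is that the paper permutes $\doms$ upward until it disappears, whereas you delete its redundant domain atoms using clause-2 availability; this is the same underlying fact, though the auxiliary lemma you need for the deletion is the hp-admissible $\idr$-style removal of a domain atom already available from a reachable label (as in Lemma~\ref{lem:ndr-admiss}), rather than weakening.
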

\begin{proof} (Outline.)
One direction, from $\icalc(\idclass)$ to $\calc(\idclass)$ is trivial. For the other direction, suppose $\prf$ is a proof of $S$ in $\calc(\idclass).$ 
By induction on the structure of $\prf$, it can be shown that there is a proof $\prf'$ in $\calc(\idclass)$ in which every sequent in $\prf'$ is \emph{almost} intuitionistic---it satisfies all the requirements in Definition~\ref{def:int-seq} except possibly the last condition (due to the possible use of the $\doms$ rule). Then, from $\prf'$ we can construct another proof $\prf''$ of $S$ that does not use $\doms$, by showing that one can always permute the rule $\doms$ up until it disappears. Since all the rules of $\calc(\idclass)$, other than $\doms$, preserve the property of being an intuitionistic sequent, it then follows that $\prf''$ is a proof in $\icalc(\idclass).$
\end{proof}

To translate a proof in $\icalc(\idclass)$ to $\lj$, we need to interpret a polytree sequent as a formula. This turns out to be quite difficult, due to the difficulty in interpreting the scopes of domain atoms, when interpreting them as universally quantified variables. Fortunately, in the case of intuitionistic sequents, the scopes of such variables follow a straightforward lexical scoping (i.e.~their scopes are over formulae in the subtrees).
To define the translation, we first relax the requirement on the domain atoms in intuitionistic sequents: a {\em quasi-intuitionistic sequent} is defined as in Definition~\ref{def:int-seq}, except that in the second clause, $x$ is either available for $w$, or it does not occur in $\T$. Obviously an intutionistic sequent is also a quasi-intuitionistic sequent. 
Given a quasi-intuitionistic sequent $S$ and a label $w$, we write $S_w$ to denote the quasi-intuitionistic sub-sequent of $S$ that is rooted in $w$, i.e.~the sequent obtained from $S$ by removing any relational atoms, domain atoms, and labeled formulae that mention a world $v$ not reachable from $w.$ Given a multiset of labeled formulae $\Gamma$, we denote with $\Gamma_u$ the labeled formulae in $\Gamma$ that are labeled with $u$.

\begin{definition}
\label{def:formula-interp}
Let  $S=\R, \T, \Gamma \sar \Delta$ be a quasi-intuitionistic sequent. We define its formula interpretation $F(S)$ recursively on the height of the sequent tree and suppose $S$ is rooted at $u$.
\begin{itemize}
\item If $S$ is a flat sequent, then $F(X) = \forall \vec x (\bigwedge \Gamma \imp \bigvee \Delta)$ where $\vec x$ are all the variables in $\T$;
\item otherwise, if $u$ has $n$ successors $w_1, \ldots, w_n$, then
$$
F(S) = \forall \vec{x} (\bigwedge \Gamma_u \imp (\bigvee \Delta_u \lor F(S_{w_1}) \lor \cdots \lor F(S_{w_n}))).  
$$

\end{itemize}

\end{definition}

The following proof-theoretic conservativity result can then be proved using a standard translation technique for relating nested sequents and traditional Gentzen sequent calculi~\cite{CloustonDGT13}. 
\begin{proposition}
\label{lm:hilbert}
Let $S$ be an intuitionistic sequent. $S$ is provable in $\icalc(\idclass)$ \iffi $F(S)$ is provable in $\lj$. 
\end{proposition}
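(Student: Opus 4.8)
The plan is to prove both directions by induction, using the formula interpretation $F(S)$ of Definition~\ref{def:formula-interp} as the bridge between $\icalc(\idclass)$ and $\lj$. For the left-to-right direction, I would proceed by induction on the height of a proof $\prf$ of $S$ in $\icalc(\idclass)$, showing that $F(S)$ is provable in $\lj$. For each inference rule $(r)$ of $\icalc(\idclass)$ with conclusion $S$ and premise(s) $S_1$ (and $S_2$), the key lemma to establish is that the implication $F(S_1) \imp F(S)$ (or $F(S_1) \land F(S_2) \imp F(S)$) is provable in $\lj$; combined with the induction hypothesis ($F(S_i)$ provable in $\lj$) and modus ponens, this yields $F(S)$ provable in $\lj$. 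The base cases are the initial rules: $\ax$, $\botl$, $\topr$. For $\ax$, using the side condition $\rable{w}{u}$ and the definition of $F$, the sequent translates into a formula of the shape $\forall\vec{x}(\cdots \imp (\cdots \lor F(S_w) \lor \cdots))$ where $F(S_w)$ eventually exposes $p(\vec t)$ on both sides of a nested implication-disjunction, which is an $\lj$-theorem essentially by the monotonicity of truth up the tree (nested $\imp$/$\lor$); here the availability condition in Definition~\ref{def:int-seq} guarantees the free variables of $p(\vec t)$ are in scope of the relevant $\forall\vec x$.

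The bulk of the work is the case analysis over the propositional, structural, and quantifier rules of $\icalc(\idclass)$. For the propositional rules ($\conl$, $\conr$, $\disl$, $\disr$, $\impl$, $\impr$), since these act locally at a single node $w$ of the polytree, the translation reduces to showing that the corresponding Gentzen-style inference is derivable in $\lj$ underneath the prefix $\forall\vec x(\bigwedge\Gamma_w \imp (\bigvee\Delta_w \lor \cdots))$ — this is the standard "nested-sequent-to-Gentzen" bookkeeping of~\cite{CloustonDGT13}, and the main subtlety is that $\impr$ (and $\impl$ with its reachability side condition) creates or traverses an edge in the polytree, so $F$ must correctly relate $F(S)$ at node $w$ to $F(S')$ where a fresh child $u$ of $w$ has been added; one checks that adding the child $u$ with $\Gamma_u = \{\phi\}$, $\Delta_u = \{\psi\}$ turns $\bigvee\Delta_w \lor \cdots$ into $\bigvee\Delta_w \lor \cdots \lor (\phi \imp \psi)$, matching $\impr$ exactly. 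For the quantifier rules: $\allr$ adds a fresh child $u$ and a fresh domain atom $u : y$, so $F(S')$ gains a $\forall y$ binder at that child, and one must check $F(S') \imp F(S)$ amounts to proving $\forall\vec x(\cdots \imp (\cdots \lor \forall y(\cdots \imp \phi(y/x)))) \imp \forall\vec x(\cdots \imp (\cdots \lor \forall x\phi))$ in $\lj$, which holds (the fresh $y$ is just the bound $x$ renamed). The case $\alll$ uses the availability side condition $\avail{t}{X_u}{\R,\T}$: this precisely guarantees $t$'s variables are bound by $\forall$-quantifiers appearing on the path from the root down to $u$, so instantiating is sound in $\lj$; similarly for $\existsl$ (fresh-variable introduction, lexically scoped by the new $u : y$) and $\existsr$. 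The condition that $\doms$ is excluded from $\icalc(\idclass)$ is essential here — $\doms$ would add a domain atom $w : x$ not tied to any binder in $F$, breaking the lexical-scoping match.

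For the right-to-left direction, I would assume $F(S)$ is provable in $\lj$ and argue $S$ is provable in $\icalc(\idclass)$. The cleanest route is to exploit cut-elimination / invertibility: since the single-formula sequent $\sar w : F(S)$ is ($\cd$-unnecessarily, just) valid whenever $F(S)$ is an intuitionistic theorem, and since all the relevant rules of $\calc(\idclass)$ are height-preserving invertible (established in Section~\ref{sec:properties}), one shows by induction on $|F(S)|$ that a proof of $\sar w : F(S)$ in $\icalc(\idclass)$ can be stepwise transformed — by applying $\impr$, $\conr$, $\disr$, $\allr$ in the pattern dictated by the shape of $F(S)$ — into a proof of $S$ itself, reconstructing the polytree structure of $S$ node by node from the nested $\imp/\lor/\forall$ structure of $F(S)$. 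Alternatively, and perhaps more robustly, one invokes soundness (Theorem~\ref{thm:soundness}) plus completeness (Theorem~\ref{thm:completeness}): $F(S)$ provable in $\lj$ implies $F(S)$ is $\ilq$-valid, hence (by the semantic correspondence, checking $M,\iota,\assign \models S$ iff $M,\iota',\assign' \Vdash F(S)$ for suitably related data) $S$ is valid, hence $S$ is provable in $\calc(\idclass)$, hence by Lemma~\ref{lm:separation} provable in $\icalc(\idclass)$ — but this requires $S$ to be within the scope of the completeness theorem, so some care about the form $w : \vec x \sar w : \phi$ is needed, and it also imports the heavy completeness machinery. I expect the main obstacle to be the left-to-right direction's handling of the quantifier rules: verifying that the availability/reachability side conditions on $\alll$ and $\existsr$ align exactly with the lexical scoping of $\forall$-binders produced by $F$, and that the fresh-variable conditions of $\allr$ and $\existsl$ are compatible with the "no two domain atoms share a variable" clause of Definition~\ref{def:int-seq}, so that $F$ is well-defined and the induction goes through without variable-capture pathologies.
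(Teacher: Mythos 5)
Your left-to-right direction is exactly the paper's proof: for each rule of $\icalc(\idclass)$ with premises $S_1,\ldots,S_n$ and conclusion $S$, show that $F(S_1)\land\cdots\land F(S_n)\imp F(S)$ is provable in $\lj$, and then simulate the given polytree proof step by step using these implications together with cut (your ``modus ponens'' plays the same role), following the nested-to-Gentzen translation strategy of~\cite{CloustonDGT13}; your discussion of how availability, reachability, and freshness align with the lexical scoping of the $\forall$-binders in $F$, and of why excluding $\doms$ matters, is precisely the content the paper leaves implicit in its outline.

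Where you go beyond the paper is the converse direction, which the paper's outline does not spell out at all. Both of your routes are workable, but note a small unjustified step in your ``cleanest route'': from $\lj\vdash F(S)$ you cannot jump directly to having a proof of $\sar w:F(S)$ in $\icalc(\idclass)$ via validity alone --- you need either Theorem~\ref{thm:completeness} (which does apply here, since $F(S)$ is closed for an intuitionistic sequent, so $\sar w:F(S)$ has the required form with $\vec x$ empty) or a direct simulation of the $\lj$ rules inside the polytree calculus using the admissibility of $\cut$, weakening and contraction (Theorem~\ref{thm:cut-elim-int} and Figure~\ref{fig:admiss-rules}), followed by Lemma~\ref{lm:separation} to land in $\icalc(\idclass)$. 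Once that is supplied, your plan of peeling $F(S)$ apart by hp-invertibility (Lemma~\ref{lem:inv}) of $\allr$, $\impr$, $\conl$, $\disr$, etc., to reconstruct the polytree of $S$ node by node is sound, and your second, purely semantic route is also fine provided it is combined with exactly this kind of syntactic reconstruction (or applied only to $\sar w:F(S)$ rather than to $S$ itself), since the completeness theorem is restricted to sequents of the form $w:\vec x\sar w:\phi(\vec x)$ --- a caveat you correctly flag.
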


\begin{proof} (Outline.) The proof is tedious, but not difficult and follows a general strategy to translate nested sequent proofs (which, recall, are notational variants of polytree sequent proofs) to traditional sequent proofs (with cuts) from the literature, see e.g., the translation from nested sequent to traditional sequent proofs for full intuitionistic linear logic~\cite{CloustonDGT13}. 
For every inference rule in $\icalc(\idclass)$ of the form:
\begin{center}
\AxiomC{$S_1 \quad \cdots \quad S_n$}
    \UnaryInfC{$S$}
\DisplayProof
\end{center}
we show that the formula $F(S_1) \land \cdots \land F(S_n) \imp F(S)$ is provable in $\lj$. Then, given any proof in $\icalc(\idclass)$, we simulate every inference step with its corresponding implication, followed by a cut. 
\end{proof}

As far as we know, for intuitionistic logic with constant domains $\ilqc$, there is no formalization in the traditional Gentzen sequent calculus that admits cut-elimination. There is, however, a formalization in prefixed tableaux by Fitting~\cite{Fit14}, which happens to be a syntactic variant of the intuitionistic fragment of $\ncd$ (shown in~\cite{Lyo21a}).


\section{Cut-Elimination}\label{sec:properties}

\begin{figure*}[t]

\begin{center}
\begin{tabular}{c c c}
\AxiomC{$\lseq{\mathcal R}{\mathcal T}{\Gamma}{\Delta}$}
\RightLabel{$\wkv$}
\UnaryInfC{$\lseq{\mathcal R}{\mathcal T,\lterm{w}{x}}{\Gamma}{\Delta}$}
\DisplayProof



&

\AxiomC{$\lseq{\mathcal R}{\mathcal T,\lterm{w}{x},\lterm{u}{x}}{\Gamma}{\Delta}$}
\RightLabel{$\idr^{\dag_{1}}$}
\UnaryInfC{$\lseq{\mathcal R}{\mathcal T,\lterm{w}{x}}{\Gamma}{\Delta}$}
\DisplayProof

&

\AxiomC{$\lseq{\mathcal R}{\mathcal T}{\Gamma}{\Delta}$}
\RightLabel{$\iwk$}
\UnaryInfC{$\lseq{\mathcal R}{\mathcal T}{\Gamma,\Sigma}{\Delta,\Pi}$}
\DisplayProof
\end{tabular}
\end{center}

\begin{center}
\begin{tabular}{c c c}
\AxiomC{$\phantom{\fsa}$}
\RightLabel{$\gax^{\dag_{1}}$}
\UnaryInfC{$\lseq{\mathcal R}{\mathcal T}{\Gamma,\lform{w}{\phi}}{\Delta,\lform{u}{\phi}}$}
\DisplayProof

&

\AxiomC{$\lseq{\mathcal R,\lrel{w}{v}}{\mathcal T}{\Gamma}{\Delta}$}
\RightLabel{$\brf^{\dag_{2}}$}
\UnaryInfC{$\lseq{\mathcal R,\lrel{u}{v}}{\mathcal T}{\Gamma}{\Delta}$}
\DisplayProof

&

\AxiomC{$\lseq{\mathcal R,\lrel{v}{u}}{\mathcal T}{\Gamma}{\Delta}$}
\RightLabel{$\brb^{\dag_{3}}$}
\UnaryInfC{$\lseq{\mathcal R,\lrel{v}{w}}{\mathcal T}{\Gamma}{\Delta}$}
\DisplayProof
\end{tabular}
\end{center}

\begin{center}
\begin{tabular}{c c c}
\AxiomC{$\lseq{\mathcal R}{\mathcal T,\lterm{w}{x}}{\Gamma}{\Delta}$}
\RightLabel{$\cdr$}
\UnaryInfC{$\lseq{\mathcal R}{\mathcal T}{\Gamma}{\Delta}$}
\DisplayProof

&

\AxiomC{$\lseq{\mathcal R}{\mathcal T}{\Gamma,\lform{w}{\phi},\lform{w}{\phi}}{\Delta}$}
\RightLabel{$\ctrl$}
\UnaryInfC{$\lseq{\mathcal R}{\mathcal T}{\Gamma,\lform{w}{\phi}}{\Delta}$}
\DisplayProof

&

\AxiomC{$\lseq{\mathcal R}{\mathcal T}{\Gamma}{\Delta,\lform{w}{\phi},\lform{w}{\phi}}$}
\RightLabel{$\ctrr$}
\UnaryInfC{$\lseq{\mathcal R}{\mathcal T}{\Gamma}{\Delta,\lform{w}{\phi}}$}
\DisplayProof
\end{tabular}
\end{center}

\begin{center}
\begin{tabular}{c c}
\AxiomC{$\lseq{\mathcal R,\lrel{w}{u}}{\mathcal T}{\Gamma}{\Delta}$}
\RightLabel{$\mrg$}
\UnaryInfC{$\lseq{\mathcal R(w/u)}{\mathcal T(w/u)}{\Gamma(w/u)}{\Delta(w/u)}$}
\DisplayProof

&

\AxiomC{$\lseq{\mathcal R}{\mathcal T}{\Gamma}{\Delta,\lform{w}{\phi}}$}
\AxiomC{$\lseq{\mathcal R}{\mathcal T}{\Gamma,\lform{u}{\phi}}{\Delta}$}
\RightLabel{$\cut^{\dag_{1}}$}
\BinaryInfC{$\lseq{\mathcal R}{\mathcal T}{\Gamma}{\Delta}$}
\DisplayProof
\end{tabular}
\end{center}

\begin{center}
\begin{tabular}{c c c}
\AxiomC{$\lseq{\mathcal R}{\mathcal T}{\Gamma}{\Delta}$}
\RightLabel{$\psub$}
\UnaryInfC{$\lseq{\mathcal R}{\mathcal T(t/x)}{\Gamma(t/x)}{\Delta(t/x)}$}
\DisplayProof

&

\AxiomC{$\lseq{\mathcal R}{\mathcal T}{\Gamma}{\Delta,\lform{w}{\bot}}$}
\RightLabel{$\botr$}
\UnaryInfC{$\lseq{\mathcal R}{\mathcal T}{\Gamma}{\Delta}$}
\DisplayProof

&

\AxiomC{$\lseq{\mathcal R}{\mathcal T}{\Gamma,\lform{w}{\top}}{\Delta}$}
\RightLabel{$\topl$}
\UnaryInfC{$\lseq{\mathcal R}{\mathcal T}{\Gamma}{\Delta}$}
\DisplayProof
\end{tabular}
\end{center}

\begin{center}
\begin{tabular}{c c}
\AxiomC{$\lseq{\mathcal R}{\mathcal T}{\Gamma}{\Delta,\lform{w}{\Pi}}$}
\RightLabel{$\lwr^{\dag_{1}}$}
\UnaryInfC{$\lseq{\mathcal R}{\mathcal T}{\Gamma}{\Delta,\lform{u}{\Pi}}$}
\DisplayProof

&

\AxiomC{$\lseq{\mathcal R}{\mathcal T}{\Gamma,\lform{u}{\Sigma}}{\Delta}$}
\RightLabel{$\lft^{\dag_{1}}$}
\UnaryInfC{$\lseq{\mathcal R}{\mathcal T}{\Gamma,\lform{w}{\Sigma}}{\Delta}$}
\DisplayProof
\end{tabular}
\end{center}

\begin{flushleft}
\textbf{Side Conditions}:
\end{flushleft}
\begin{description}

\item[$\dag_{1} :=$] $\rable{w}{u}$

\item[$\dag_{2} :=$] $\rable{w}{u}$ and $\notrable{u}{v}$

\item[$\dag_{3} :=$] $\rable{w}{u}$ and $\notrable{w}{v}$

\end{description}

\caption{Admissible rules.}
\label{fig:admiss-rules}
\end{figure*}

In this section, we show that $\nid$ and $\ncd$ satisfy a sizable number of favorable properties culminating in syntactic cut-elimination. We explain here some key steps; 
the full details are available in Appendix~\ref{app:proofs}. 

$\nid$ and $\ncd$ can be seen as first-order extensions of Postniece's deep-nested sequent calculus for bi-intuitionistic logic $\mathrm{DBiInt}$~\cite{Pos09,GorPosTiu08}. 
Cut-elimination for $\mathrm{DBiInt}$~\cite{GorPosTiu08} was proven in two stages. First, cut-elimination was proven for a ``shallow'' version of the nested sequent calculus $\mathrm{LBiInt}$, which can be seen as a variant of a display calculus~\cite{Bel82}. The cut-elimination proof for this shallow calculus follows from Belnap's generic cut-elimination for display calculi~\cite{Bel82}. Second, cut-free proofs in the shallow calculus are shown to be translatable to proofs in the deep-nested calculus. We do not have the corresponding shallow versions of $\nid$ and $\ncd$, so we cannot rely on Belnap's generic cut-elimination. It may be possible to define shallow versions of our calculi, and then follow the same methodology outlined in \cite{GorPosTiu08} to prove cut-elimination, but we find that a direct cut-elimination proof is simpler, e.g., it avoids the need for proving the admissibility of certain structural rules called the display postulates~\cite{Bel82}, which lets one transition from shallow to deep inference systems. 

Since our polytree sequents are a restriction of ordinary labeled sequents, another possible approach to cut-elimination is to apply the methodology for labeled sequent calculi~\cite{Neg05}. A main issue in adapting this methodology is ensuring that the proof transformations needed in cut-elimination preserve the polytree structure of sequents. A key proof transformation in a typical cut-elimination proof for labeled calculi is label substitution: given a proof $\pi_1$ and labels $u$ and $w$, one constructs another proof $\pi_2$ by replacing $u$ with $w$ everywhere in $\pi_1$ and adjusts the inference rules accordingly. This is typically needed in the reduction of a cut where the last rules in both branches of the cut apply to the cut formula, and where one of the rules introduces (reading the rule bottom up) a new label and a new relational atom (e.g., $\impr$). Such a substitution operation may not preserve polytree structures. Another notable difference between our calculi and traditional labeled calculi is the absence of structural rules manipulating relational atoms. These differences mean that cut-elimination techniques for labeled sequent calculi cannot be immediately applied in our setting. 

Instead, our cut-elimination proof builds on an approach by Pinto and Uustalu \cite{PinUus10,PinUus18}, which deals with a polytree sequent calculus for \emph{propositional} bi-intuitionistic logic. We thus provide a series of proof transformations, culminating in the elimination of cuts, which shares similarities with their work in the propositional case and expands in the first-order direction. These transformations are captured in proofs of the admissibility of rules shown in Figure~\ref{fig:admiss-rules}. We illustrate some key transformations and why they are needed, through an example of a cut where $\impl$ and $\impr$ are applied to the cut formula. The formal details are available in the proof of Theorem~\ref{thm:cut-elim-int}. 

Suppose we have the instance of cut shown below left, where $\prf_{1}$ is shown below right and $\prf_{2}$ is shown below bottom with $\rable{w}{u}.$
\begin{center}
\begin{tabular}{@{\hskip 0cm} c @{\hskip .05cm} c}
\AxiomC{$\pi_1$}
\UnaryInfC{$\R, \T, \Gamma \sar w: \varphi \imp \psi, \Delta$}
\AxiomC{$\pi_2$}
\UnaryInfC{$\R, \T, \Gamma, w: \varphi \imp \psi \sar \Delta$}
\RightLabel{$cut$}
\BinaryInfC{$\R, \T, \Gamma \sar \Delta$}
\DisplayProof

&

\AxiomC{$\pi_1'$}
\UnaryInfC{$\R, \T, w R w', \Gamma, w':\varphi \sar w': \psi, \Delta$}
\RightLabel{{$\impr$}}
\UnaryInfC{$\R, \T, \Gamma \sar w: \varphi \imp \psi, \Delta$}
\DisplayProof
\end{tabular}
\end{center}
\begin{center}
\AxiomC{$\pi_3$}
\UnaryInfC{$\R, \T, w : \varphi \imp \psi, \Gamma \sar \Delta, u : \varphi$}
\AxiomC{$\pi_4$}
\UnaryInfC{$\R, \T, \Gamma, w : \varphi \imp \psi, u : \psi \sar \Delta$}
\RightLabel{$\impl$}
\BinaryInfC{$\R, \T, \Gamma, w : \varphi \imp \psi \sar \Delta$}
\DisplayProof
\end{center}
A typical cut reduction strategy would be to cut $\pi_{1}$ with $\pi_{3}$ and $\pi_{4}$ (both with cut formula $\phi \imp \psi$), producing the proofs $\pi_{5}$ and $\pi_{6}$ of $\R, \Gamma \sar \Delta, u : \phi$ and $\R, \Gamma, u : \psi \sar \Delta$, respectively. Next, one would cut $\pi_{5}$ with $\pi_{1}'$ (with cut formula $\phi$), producing a proof $\pi_{7}$, and then cut $\pi_{7}$ with $\pi_{6}$ (with cut formula $\psi$). There are a couple of issues with this strategy: (1) the cut formulae in the last two instances of cut have mismatched labels, i.e., $w'$ on one side and $u$ on the other ; (2) the label $w'$ and the relational atom $w R w'$ are not present in the conclusions of $\pi_5$ and $\pi_6$, so the contexts of the premises of the cuts do not match. 

To fix these issues, we first need to transform the proof $\pi_1'$ into two proofs $\prf_{5}$ and $\prf_{6}$, shown below left and right, respectively. As shown in the cut-elimination proof, a transformation that we use in this case is one that is represented by the rule $\iwk$. This ensures that the contexts match the contexts of the concluding sequents in $\prf_{3}$ and $\prf_{4}$.
\begin{center}
\begin{tabular}{c c}
\AxiomC{$\pi_1'$}
\UnaryInfC{$\R, w R w', \T, \Gamma, w':\varphi \sar w': \psi, \Delta$}
\RightLabel{{$\impr$}}
\UnaryInfC{$\R, \T, \Gamma \sar w: \varphi \imp \psi, \Delta$}
\RightLabel{$\iwk$}
\UnaryInfC{$\R, \T, \Gamma \sar w: \varphi \imp \psi, u : \phi, \Delta$}
\DisplayProof

&

\AxiomC{$\pi_1'$}
\UnaryInfC{$\R, w R w', \T, \Gamma, w':\varphi \sar w': \psi, \Delta$}
\RightLabel{{$\impr$}}
\UnaryInfC{$\R, \T, \Gamma \sar w: \varphi \imp \psi, \Delta$}
\RightLabel{$\iwk$}
\UnaryInfC{$\R, \T, \Gamma, w:\varphi \imp \psi, u: \psi \sar \Delta$}
\DisplayProof
\end{tabular}
\end{center}
We then cut $\prf_{3}$ and $\prf_{4}$ with $\prf_{5}$ and $\prf_{6}$, respectively, which yields proofs $\prf_{7}$ and $\prf_{8}$ of $\R, \Gamma \sar u : \phi, \Delta$ and $\R, \Gamma, u: \psi \sar \Delta$, respectively. At this stage, we want to cut $\prf_{7}$ with $\prf_{1}'$, and then cut the resulting proof with $\prf_{8}$. However, this cut cannot be performed until the label $w'$ and its associated relational atom are removed from the conclusion of $\prf_{1}'$. Simply substituting $u$ for $w'$ may break the polytree shape of the sequent, e.g., if there is a $v$ such that $w R v$ and $v R u$ are in $\R$, then replacing $u$ for $w'$ in $w R w'$ would break the polytree shape of the sequent. So the relational atoms in the sequent also need to be modified. A transformation that we use in this case is represented by the rule $\brf$, which shifts the relational atom $wRw'$ `forward' from $w$ to $u$ given that $\rable{w}{u}$. We also need another transformation to `merge' the label $u$ with the label $w'$, deleting the relational atom in the process, represented as the $\mrg$ rule.
\begin{center}
\AxiomC{$\pi_7$}
\RightLabel{$\iwk$}
\UnaryInfC{$\R, \T, \Gamma \sar u : \phi, u : \psi, \Delta$}

\AxiomC{$\pi_1'$}
\UnaryInfC{$\R, w R w', \T, \Gamma, w':\varphi \sar w': \psi, \Delta$}
\RightLabel{$\brf$}
\UnaryInfC{$\R, u R w', \T, \Gamma, w':\varphi \sar w': \psi, \Delta$}
\RightLabel{$\mrg$}
\UnaryInfC{$\R, \T, \Gamma, u: \varphi \sar u : \psi, \Delta$}

\RightLabel{$cut$}
\BinaryInfC{$\R, \T, \Gamma \sar u : \psi, \Delta$}
\DisplayProof
\end{center}
 If we cut the above proof with $\prf_{8}$, we obtain a proof of $\R, \T, \Gamma \sar \Delta.$ Here we gloss over the termination arguments, but the details are available in the proof of Theorem~\ref{thm:cut-elim-int}.

The above example illustrates one among several proof transformations needed in cut-elimination. These transformations make use of the auxiliary rules in Figure~\ref{fig:admiss-rules}. The bulk of the cut-elimination proof consists of showing these rules (height-preserving/hp-) admissible and the rules of $\nid$ and $\ncd$ height-preserving invertible (i.e., hp-invertible). (NB. We take the notions of (hp-)admissibility and (hp-)invertibility to be defined as usual.) 
 All (hp-)admissible rules preserve the polytree structure of sequents, and with the exception of $\gax$, $\cut$, and $\cdr$, all rules in \fig~\ref{fig:admiss-rules} are \emph{hp-admissible} in \emph{both} calculi. The $\gax$ and $\cut$ rules are strictly \emph{admissible} in both calculi, while $\cdr$ is hp-admissible in only $\calc(\cdclass)$ as the availability conditions are not imposed on rules, rendering domain atoms unnecessary (see Remark~\ref{rmk:ncd}). We now discuss 
 some of the most interesting rules of Figure~\ref{fig:admiss-rules}.

Let us first explain the rules $\brf$ and $\brb$. For some labels $w$, $u$, and $v$, assume $\rable{w}{u}$ and $\notrable{u}{v}$ for $\mathcal R := \R', \lrel{w}{v}$. Then, we know that (1) $u$ and $v$ are on two different paths passing through $w$ of the polytree generated from $\mathcal R$, and (2) there is no vertex between $v$ and $w$ since otherwise a cycle would be present in $\R$. In this scenario, the rule $\brf$ (for \emph{br}anch \emph{f}orward) allows one to move the polytree `rooted' at $v$ forward by connecting it to $u$ instead of $w$ as shown left in the below figure. The rule $\brb$ has a similar functionality; for some labels $w$, $u$, and $v$, assume $\rable{w}{u}$ and $\notrable{w}{v}$ for $\mathcal R := \R', \lrel{v}{u}$. Then, the rule $\brb$ (for \emph{br}anch \emph{b}ackward) lets one move the polytree `rooted' at $v$ backward by connecting it to $w$ instead of $u$ as shown right in the below figure.

\begin{figure}[h]
    \centering

\begin{minipage}{0.45\textwidth}
\begin{tikzpicture}
\draw (0,0) node[point] (w0) [label=below:{$w$}]{};
\draw (1.5,0) node[point](w1)[label=below:{$u$}]{};
\draw (0.5,0.4) node[point](w2)[label=left:{$v$}]{};

\path[->] (w0) edge[dashed] (w1);
\path[->] (w0) edge (w2);
\path[-] (w2) edge[dashed] (1.1,0.3);
\path[-] (w2) edge[dashed] (1,0.8);
\path[-] (1.1,0.3) edge[dashed] (1,0.8);

\draw (2.25,0.2125) node {$\Rightarrow$};

\draw (3,0) node[point] (w0) [label=below:{$w$}]{};
\draw (4.5,0) node[point](w1)[label=below:{$u$}]{};
\draw (5,0.4) node[point](w2)[label=left:{$v$}]{};

\path[->] (w0) edge[dashed] (w1);
\path[->] (w1) edge (w2);
\path[-] (w2) edge[dashed] (5.6,0.3);
\path[-] (w2) edge[dashed] (5.5,0.8);
\path[-] (5.6,0.3) edge[dashed] (5.5,0.8);
\end{tikzpicture}
\end{minipage}
\begin{minipage}{0.05\textwidth}
\ \\
\end{minipage}
\begin{minipage}{0.45\textwidth}
\begin{tikzpicture}
\draw (0,0) node[point] (w0) [label=above:{$w$}]{};
\draw (1.5,0) node[point](w1)[label=above:{$u$}]{};
\draw (1,-0.4) node[point](w2)[label=below:{$v$}]{};

\path[->] (w0) edge[dashed] (w1);
\path[->] (w2) edge (w1);
\path[-] (w2) edge[dashed] (0.4,-0.3);
\path[-] (w2) edge[dashed] (0.5,-0.8);
\path[-] (0.4,-0.3) edge[dashed] (0.5,-0.8);

\draw (2.25,-0.1875) node {$\Rightarrow$};

\draw (4,0) node[point] (w0) [label=above:{$w$}]{};
\draw (5.5,0) node[point](w1)[label=above:{$u$}]{};
\draw (3.5,-0.4) node[point](w2)[label=below:{$v$}]{};

\path[->] (w0) edge[dashed] (w1);
\path[->] (w2) edge (w0);
\path[-] (w2) edge[dashed] (2.9,-0.3);
\path[-] (w2) edge[dashed] (3,-0.8);
\path[-] (2.9,-0.3) edge[dashed] (3,-0.8);
\end{tikzpicture}
\end{minipage}
    
\caption{The left and right diagrams demonstrate the functionality of $\brf$ and $\brb$, respectively.\label{fig:branching-for-back-rules}}
\end{figure}

The $\mrg$ rule merges a label and its direct successor and 
corresponds to the rules $nodemergeD$ and $nodemergeU$ of Pinto and Uustalu~\cite{PinUus10}. The rule $\idr$ reflects the redundancy of a variable labeled by two labels such that one is reachable by the other. Model-theoretically, this redundancy follows from the fact that if $x$ is interpreted at $w$, and $u$ is reachable from $w$ (in a model), then $x$ is interpreted at $u$ as well, showing the domain atom $u : x$ superfluous in the premise. Note that when the labels $w$ and $u$ are identical, then the rule represents a contraction on domain atoms; as $\rable{w}{w}$ always holds, we have that identical domain atoms can always be contracted in sequents. The rules $\lwr$ and $\lft$ allow us to modify the labels of formulae in a sequent by looking at its underlying polytree. More precisely, reading $\lwr$ and $\lft$ bottom-up, if $\rable{w}{u}$ we can both \emph{lower} the label of $\lform{u}{\Pi}$ on the right of the sequent to $w$, and \emph{lift} the label of $\lform{w}{\Sigma}$ on the left of the sequent to $u$.

\begin{lemma}\label{lem:inv}
All non-initial rules in $\calc(\modclass)$ are hp-invertible.
\end{lemma}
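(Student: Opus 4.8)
**The plan is to prove hp-invertibility of each non-initial rule by induction on the height of the derivation of the rule's conclusion**, splitting into cases according to the last rule applied in that derivation. For most rules, the argument follows the standard pattern: either the last rule acts on a side formula (or relational/domain atom) that is untouched by the rule we are inverting, in which case we apply the induction hypothesis to the premise(s) and re-apply the last rule; or the last rule is precisely the one introducing the principal formula of the rule we are inverting, in which case the premise(s) we need are already present in the derivation, possibly after a weakening. The non-branching single-premise rules whose inversion only removes a connective ($\disr$, $\conl$, $\existsl$, $\impr$, $\excl$, $\allr$, $\doms$) are the routine cases; the two-premise rules ($\disl$, $\conr$) require inverting into two premises simultaneously but are otherwise equally routine.

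First I would handle the rules that introduce a fresh label and/or a fresh variable, namely $\impr$, $\excl$, and $\allr$, since these are where polytree structure and freshness interact. Here hp-invertibility asserts that from a proof of, say, $\R,\T,\fsa \sar \fsb, \lform{w}{\phi\imp\psi}$ we obtain one of $\R,\lrel{w}{u},\T,\fsa,\lform{u}{\phi}\sar \lform{u}{\psi},\fsb$ with $u$ fresh. The key move is that, since $u$ is fresh, before descending into the derivation we may rename any eigenvariables/eigenlabels introduced along the way to avoid clashes with $u$, and then use the hp-admissibility of $\iwk$ (introducing the labeled formulae $\lform{u}{\phi}$, $\lform{u}{\psi}$) together with the hp-admissibility of the relevant structural rules from Figure~\ref{fig:admiss-rules} to patch in the new relational atom $\lrel{w}{u}$. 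For $\allr$ we additionally need to insert the domain atom $\lterm{u}{y}$, which is again handled by hp-admissibility of $\wkv$. When the last rule is the matching $\impr$ (resp. $\excl$, $\allr$) on the very formula being inverted, the required premise is available directly up to renaming of the fresh label/variable, which is sound by Remark~\ref{rmk:iso-are-equal} together with hp-admissibility of $\psub$ and a label-renaming admissibility.

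The reachability rules ($\ax$, $\impl$, $\excr$, $\existsri$, $\allli$) are not in the scope of this lemma (they are not being inverted), but they appear as possible last rules in every case, and here I would invoke Lemma~\ref{lem:rable-preserved-up}: the side conditions $\rable{w}{u}$ and $\avail{t}{X_{w}}{\R,\T}$ are bottom-up preserved, so after applying the induction hypothesis to the premise(s) the reachability rule can be legitimately re-applied. This is the point where the earlier lemma does real work. I would also note the interaction with $\doms$: since $\doms$ only adds domain atoms, inverting another rule through a $\doms$ step just needs $\wkv$-admissibility to restore those atoms in the premise.

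**The main obstacle** I expect is the bookkeeping around fresh labels and variables in $\impr$, $\excl$, and $\allr$ when these rules are themselves the last rule applied but on a \emph{different} principal formula than the one we are inverting — we must ensure the eigenlabel/eigenvariable introduced by that last rule does not collide with the fresh label demanded by the inversion. This is resolved cleanly by first invoking an admissible renaming of eigenvariables (and the isomorphism-closure of Remark~\ref{rmk:iso-are-equal} for relabeling), but stating it carefully so that the polytree shape is preserved at every step is the delicate part; all of this leans on the hp-admissibility results for the structural rules of Figure~\ref{fig:admiss-rules}, which we establish prior to this lemma. Since the full case analysis is long and mechanical, I would present the $\impr$ case in detail as representative and relegate the remaining cases to the appendix.
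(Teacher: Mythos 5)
There is a genuine gap, and it is one of scope. The lemma claims that \emph{all} non-initial rules of $\calc(\modclass)$ are hp-invertible, and this includes the reachability rules $\impl$, $\excr$, $\existsri$, $\allli$ (and also $\doms$) --- only $\ax$, $\botl$, $\topr$ are initial. You explicitly declare these rules ``not in the scope of this lemma (they are not being inverted)'' and only treat them as possible last rules inside the induction, so your argument never establishes their invertibility. This is not a harmless omission: the cut-elimination proof uses precisely these instances of the lemma (e.g.\ in the $\rone=\impl$, $\rone=\existsr$ and $\rone=\alll$ cases one inverts the rule on the second cut premise to add $\lform{u}{\phi}$, $\lform{u}{\psi}$, $\lform{v}{\psi(t/x)}$, etc.). The fix is cheap and is exactly what the paper does: because these rules retain their principal formula in the premise(s), their premises are just the conclusion with additional labeled formulae (resp.\ domain atoms, for $\doms$) at already existing labels, so hp-invertibility is immediate from the hp-admissibility of $\iwk$ (resp.\ $\wkv$), with no induction at all. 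For the remaining rules ($\conl$, $\conr$, $\disl$, $\disr$, $\impr$, $\excl$, $\existsl$, $\allr$) your height induction with a case split on the last rule, renaming the fresh label/variable in the matching case (via Remark~\ref{rmk:iso-are-equal} and the admissible substitution), is essentially the paper's argument.

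A secondary imprecision: you propose to ``patch in the new relational atom $\lrel{w}{u}$'' using hp-admissible structural rules from Figure~\ref{fig:admiss-rules}, but that figure contains no rule that introduces a relational atom or a fresh label ($\iwk$ only adds formulae at labels already present, since sequents must satisfy $\lab(\T,\fsa,\fsb)\subseteq\lab(\R)$). Fortunately this step is not needed if the induction is set up correctly: the extra material $\lrel{w}{u}$, $\lform{u}{\phi}$, $\lform{u}{\psi}$ (and $\lterm{u}{y}$ for $\allr$) is part of the sequent whose proof the induction hypothesis delivers, it is carried through when you re-apply the last rule (whose reachability, availability and freshness side conditions are unaffected by adding a fresh leaf under $w$, possibly after renaming eigenlabels/eigenvariables), and in the matching case the premise of the last rule already has the desired shape up to renaming of the fresh label. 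If you rewrite the $\impr$/$\excl$/$\existsl$/$\allr$ cases in this way and add the one-line $\iwk$/$\wkv$ argument for $\impl$, $\excr$, $\existsri$, $\allli$, $\doms$, the proof matches the paper's.
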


Finally, we can prove the admissibility of the $\cut$ rule. As our proof proceeds via local transformations of proofs, the proof is constructive and yields a cut-elimination algorithm.


\begin{theorem}[Cut-elimination]\label{thm:cut-elim-int}
The $\cut$ rule is admissible in $\calc(\modclass)$. 
\end{theorem}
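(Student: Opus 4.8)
(Plan.)
The plan is to carry out the constructive cut-elimination sketched after Figure~\ref{fig:admiss-rules}: first assemble a stock of (height-preserving, hp-) admissible structural rules together with hp-invertibility of the logical rules, and then eliminate topmost instances of $\cut$ by a nested induction.

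\textbf{Step 1 (structural infrastructure).} First I would show, by routine induction on proof height, that every rule of Figure~\ref{fig:admiss-rules} except $\gax$, $\cut$ and $\cdr$ is hp-admissible in $\calc(\modclass)$, that $\gax$ is admissible, and that $\cdr$ is admissible in general and hp-admissible in $\calc(\cdclass)$, where the availability side conditions are vacuous and domain atoms may be dropped (cf.\ Remark~\ref{rmk:ncd}). The delicate point throughout is that each transformation preserves the polytree shape of sequents: for $\brf$, $\brb$ and $\mrg$ this is the content of Figure~\ref{fig:branching-for-back-rules}, and one must check that the side conditions $\rable{}{}$ and $\notrable{}{}$ on the remaining rule instances are maintained, which follows from Lemma~\ref{lem:rable-preserved-up}. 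I would also record hp-admissibility of $\psub$, $\iwk$ and $\wkv$, since these are needed for context matching and for the quantifier cases. Step~2 is then Lemma~\ref{lem:inv}: all non-initial rules of $\calc(\modclass)$ are hp-invertible, again by induction on height, using the hp-admissible weakenings and the structural rules of Step~1 to realign labels when a rule (read bottom-up) introduces a fresh label and relational atom.

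\textbf{Step 3 (cut-elimination proper).} I eliminate a topmost $\cut$ by a principal induction on the complexity $|\phi|$ of the cut formula and a subsidiary induction on the sum of the heights of the two premise derivations $\pi_1,\pi_2$; iterating from the leaves upward then yields full admissibility. The analysis is by cases on the last rules $r_1$ of $\pi_1$ and $r_2$ of $\pi_2$. If $r_1$ or $r_2$ is an initial rule, the conclusion of the cut is itself an instance of $\ax$, $\botl$ or $\topr$, or follows by hp-admissible weakening, with $\gax$-admissibility used when the active atom of $\ax$ must be transported along a path. If $\phi$ is not principal in $r_1$ (symmetrically $r_2$), the cut is permuted above $r_1$ and the subsidiary induction is applied; when $\phi$ is split into both premises of a two-premise rule, one cuts in each and restores multiplicity with hp-admissible $\ctrl$, $\ctrr$, $\idr$. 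When $r_i$ is $\existsl$, $\allr$, $\excl$ or $\impr$, its fresh label or variable is first renamed away from the other premise via $\psub$ and label renaming (available by Remark~\ref{rmk:iso-are-equal}); when $r_i$ is a reachability rule its side condition still holds after the cut because $\cut$ leaves $\R,\T$ unchanged (together with Lemma~\ref{lem:rable-preserved-up}). Finally, if $\phi$ is principal in both $r_1$ and $r_2$, the cut is reduced to cuts on the immediate subformulae of $\phi$, of strictly smaller complexity, so the principal induction applies: the $\land,\lor$ cases are standard; the $\imp$ case ($\impr$ vs.\ $\impl$) follows the pattern in the excerpt, using $\iwk$ to match contexts, then $\brf$ (or $\brb$) to shift the relational atom created by $\impr$ onto the label $u$ of $\impl$, then $\mrg$ to merge that label with the fresh one, obtaining matching premises on which the smaller cuts on $\phi$ and on $\psi$ can be performed; the $\exc$ case is the mirror image; for $\forall$ ($\allr$ vs.\ $\alll$) and $\exists$ ($\existsl$ vs.\ $\existsri$) one additionally applies $\psub$ to instantiate the eigenvariable introduced by $\allr$/$\existsl$ with the witness term supplied by $\alll$/$\existsri$, after relocating the relational and domain atoms of $\allr$ via $\brf$ and $\mrg$, using $\idr$ to discard the now-redundant domain atom and Lemma~\ref{lem:inv} to strip the quantifier so that the residual cut is on $\phi$.

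\textbf{Main obstacle.} The crux — what makes this harder than cut-elimination for ordinary Gentzen or ordinary labeled calculi — is the principal case for $\impr/\impl$ (and dually $\excl/\excr$): the fresh-label rules create a witness world $w'$ that must be identified with a possibly distant world $u$ on the cut side while keeping the underlying graph a polytree, which cannot be done by naive label substitution. Making the chain $\iwk$, $\brf/\brb$, $\mrg$ deliver premises with literally matching contexts — including the multiset of domain atoms, subject to the availability side conditions of the quantifier and reachability rules — and arranging the induction measure so that these auxiliary cuts are on strictly smaller formulae while the commutative cases strictly decrease height, is where the real work lies; proving the Step~1 rules hp-admissible (rather than merely admissible) is precisely what keeps the subsidiary height induction well-founded.
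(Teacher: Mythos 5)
Your plan matches the paper's proof essentially step for step: the same stock of hp-admissible structural rules from Figure~\ref{fig:admiss-rules} plus hp-invertibility (Lemma~\ref{lem:inv}), the same double induction (primary on cut-formula complexity, secondary on the sum of premise heights), and the same repairs in the principal cases via $\iwk$, $\brf$/$\brb$, $\mrg$, $\psub$ and $\idr$ to keep contexts matched and sequents polytree-shaped. The only slip is your aside that $\cdr$ is admissible in general --- it is hp-admissible only in $\calc(\cdclass)$ --- but this rule plays no role in the cut-elimination argument, so the plan stands.
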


\begin{proof} 
We proceed by a primary induction (PIH) on the complexity of the cut formula, and a secondary induction (SIH) on the sum of the heights of the proofs of the premises of $\cut$. Assume that we have proofs of the following form, with $\rable{w}{u}$.
\begin{center}
\begin{tabular}{c@{\hspace{1cm}} c}
\AxiomC{$\pi_1$}
\RightLabel{$\rone$}
\UnaryInfC{$\lseq{\mathcal R}{\mathcal T}{\fsa}{\fsb,\lform{w}{\phi}}$}
\DisplayProof
&
\AxiomC{$\pi_2$}
\RightLabel{$\rtwo$}
\UnaryInfC{$\lseq{\mathcal R}{\mathcal T}{\fsa,\lform{u}{\phi}}{\fsb}$}
\DisplayProof \\
\end{tabular}
\end{center}
We argue that there is a proof of $\lseq{\mathcal R}{\mathcal T}{\fsa}{\fsb}$ by a case distinction on $\rone$ and $\rtwo$, the last rules applied in the above proofs. We focus on some interesting cases; 
the remaining cases can be found in Appendix~\ref{app:proofs}.

\medskip

\noindent $\rone=\ax:$ Then $\lseq{\mathcal R}{\mathcal T}{\fsa}{\fsb,\lform{w}{\phi}}$ is of the form $\lseq{\mathcal R}{\mathcal T}{\fsa_0,\lform{v_0}{p(\vec{t})}}{\fsb_0,\lform{v_1}{p(\vec{t})}}$ where $\rable{v_0}{v_1}$. 
If $\lform{v_1}{p(\vec{t})}$ is $\lform{w}{\phi}$, then we have that $\lseq{\mathcal R}{\mathcal T}{\fsa,\lform{u}{\phi}}{\fsb}$ is of the form $\lseq{\mathcal R}{\mathcal T}{\fsa_0,\lform{v_0}{p(\vec{t})},\lform{u}{p(\vec{t})}}{\fsb}$ where $\fsa=\fsa_0,\lform{v_0}{p(\vec{t})}$. Given that $\rable{v_0}{v_1}$ and $\rable{v_1}{u}$, we can apply the hp-admissibility of $\lft$ on the latter to obtain a proof of $\lseq{\mathcal R}{\mathcal T}{\fsa_0,\lform{v_0}{p(\vec{t})},\lform{v_0}{p(\vec{t})}}{\fsb}$. Consequently, we obtain a proof of $\lseq{\mathcal R}{\mathcal T}{\fsa_0,\lform{v_0}{p(\vec{t})}}{\fsb}$, i.e.~of $\lseq{\mathcal R}{\mathcal T}{\fsa}{\fsb}$, using the hp-admissibility of $\ctrl$. If $\lform{v_1}{p(\vec{t})}$ is not $\lform{w}{\phi}$, then we have that $\lseq{\mathcal R}{\mathcal T}{\fsa}{\fsb}$ is of the form $\lseq{\mathcal R}{\mathcal T}{\fsa_0,\lform{v_0}{p(\vec{t})}}{\fsb_0,\lform{v_1}{p(\vec{t})}}$ where $\rable{v_0}{v_1}$. The latter is provable by the admissibility of $\gax$.\\



\noindent $\rone= \doms:$ Then $\lseq{\mathcal R}{\mathcal T}{\fsa}{\fsb,\lform{w}{\phi}}$ is of the form $\lseq{\mathcal R}{\mathcal T}{\fsa_0, \lform{v}{p(\vec{t})}}{\fsb,\lform{w}{\phi}}$ and we have a proof of $\lseq{\mathcal R}{\mathcal T, \lterm{v}{\VT{\vec{t}}}}{\fsa_0, \lform{v}{p(\vec{t})}}{\fsb,\lform{w}{\phi}}$. 
Consequently, we know that $\lseq{\mathcal R}{\mathcal T}{\fsa}{\fsb}$ is of the form $\lseq{\mathcal R}{\mathcal T}{\fsa_0, \lform{v}{p(\vec{t})}}{\fsb}$. We also have that $\lseq{\mathcal R}{\mathcal T}{\fsa,\lform{u}{\phi}}{\fsb}$ is of the form $\lseq{\mathcal R}{\mathcal T}{\fsa_0, \lform{v}{p(\vec{t})},\lform{u}{\phi}}{\fsb}$.
We can repeatedly apply the hp-admissibility of $\wkv$ on the proof of the latter to obtain a proof of 
$S := \lseq{\mathcal R}{\mathcal T, \lterm{v}{\VT{\vec{t}}}}{\fsa_0, \lform{v}{p(\vec{t})},\lform{u}{\phi}}{\fsb}$. Then, we proceed as follows:
\begin{center}
\AxiomC{$\lseq{\mathcal R}{\mathcal T, \lterm{v}{\VT{\vec{t}}}}{\fsa_0, \lform{v}{p(\vec{t})}}{\fsb,\lform{w}{\phi}}$}
\AxiomC{$S$}
\RightLabel{SIH}
\dashedLine
\BinaryInfC{$\lseq{\mathcal R}{\mathcal T, \lterm{v}{\VT{\vec{t}}}}{\fsa_0, \lform{v}{p(\vec{t})}}{\fsb}$}
\RightLabel{$\doms$}
\UnaryInfC{$\lseq{\mathcal R}{\mathcal T}{\fsa_0, \lform{v}{p(\vec{t})}}{\fsb}$}
\DisplayProof
\end{center}
Note that the instance of SIH is justified because the sum of the heights of the proofs of the premises has decreased.\\

\noindent $\rone=\excl:$ Then $\lseq{\mathcal R}{\mathcal T}{\fsa}{\fsb,\lform{w}{\phi}}$ is of the form $\lseq{\mathcal R}{\mathcal T}{\fsa_0,\lform{v}{\psi\exc\chi}}{\fsb,\lform{w}{\phi}}$ and we a have proof of $\lseq{\mathcal R,\lrel{v_0}{v}}{\mathcal T}{\fsa_0,\lform{v_0}{\psi}}{\fsb,\lform{w}{\phi},\lform{v_0}{\chi}}$. Consequently, we know that $\lseq{\mathcal R}{\mathcal T}{\fsa}{\fsb}$ is of the form $\lseq{\mathcal R}{\mathcal T}{\fsa_0,\lform{v}{\psi\exc\chi}}{\fsb}$. We also have that $\lseq{\mathcal R}{\mathcal T}{\fsa,\lform{u}{\phi}}{\fsb}$ is of the form $\lseq{\mathcal R}{\mathcal T}{\fsa_0,\lform{v}{\psi\exc\chi},\lform{u}{\phi}}{\fsb}$. 
We apply Lemma \ref{lem:inv} 
on the proof of the latter sequent to obtain a proof of $\lseq{\mathcal R,\lrel{v_0}{v}}{\mathcal T}{\fsa_0,\lform{v_0}{\psi},\lform{u}{\phi}}{\fsb,\lform{v_0}{\chi}}$, which we call $S$. Thus, we proceed as shown below. Note that the instance of SIH is justified as the sum of the heights of the proofs of the premises is smaller than that of the original cut.
\begin{center}
\AxiomC{$\lseq{\mathcal R,\lrel{v_0}{v}}{\mathcal T}{\fsa_0,\lform{v_0}{\psi}}{\fsb,\lform{w}{\phi},\lform{v_0}{\chi}}$}

\AxiomC{$S$}

\RightLabel{SIH}
\dashedLine
\BinaryInfC{$\lseq{\mathcal R,\lrel{v_0}{v}}{\mathcal T}{\fsa_0,\lform{v_0}{\psi}}{\fsb,\lform{v_0}{\chi}}$}
\RightLabel{$\excl$}
\UnaryInfC{$\lseq{\mathcal R}{\mathcal T}{\fsa_0,\lform{v}{\psi\exc\chi}}{\fsb}$}
\DisplayProof
\end{center}

\begin{wrap}
\noindent \textbf{(XII)} $\mathbf{\rone=\excr:}$ Then $\lseq{\mathcal R}{\mathcal T}{\fsa}{\fsb,\lform{w}{\phi}}$ is of the form $\lseq{\mathcal R}{\mathcal T}{\fsa}{\fsb_0,\lform{v}{\psi\exc\chi}}$ and we have proofs of $\lseq{\mathcal R}{\mathcal T}{\fsa}{\fsb_0,\lform{v}{\psi\exc\chi},\lform{v_0}{\psi}}$ and $\lseq{\mathcal R}{\mathcal T}{\fsa,\lform{v_0}{\chi}}{\fsb_0,\lform{v}{\psi\exc\chi}}$ where $\rable{v_0}{v}$.

If $\lform{w}{\phi}$ is not $\lform{v}{\psi\exc\chi}$, then we have proofs of $\lseq{\mathcal R}{\mathcal T}{\fsa}{\fsb_1,\lform{v}{\psi\exc\chi},\lform{v_0}{\psi},\lform{w}{\phi}}$, which we call $S_0$, and $\lseq{\mathcal R}{\mathcal T}{\fsa,\lform{v_0}{\chi}}{\fsb_1,\lform{v}{\psi\exc\chi},\lform{w}{\phi}}$, which we call $S_1$, and $\lseq{\mathcal R}{\mathcal T}{\fsa,\lform{u}{\phi}}{\fsb}$ is of the form $\lseq{\mathcal R}{\mathcal T}{\fsa,\lform{u}{\phi}}{\fsb_1,\lform{v}{\psi\exc\chi}}$. Then, we proceed as follows where $\pi$ is the first proof displayed.

\begin{center}
\AxiomC{$S_0$}
\AxiomC{$\lseq{\mathcal R}{\mathcal T}{\fsa,\lform{u}{\phi}}{\fsb_1,\lform{v}{\psi\exc\chi}}$}
\RightLabel{Lem.\ref{lem:iwk-admiss}}
\dashedLine
\UnaryInfC{$\lseq{\mathcal R}{\mathcal T}{\fsa,\lform{u}{\phi}}{\fsb_1,\lform{v}{\psi\exc\chi},\lform{v_0}{\psi}}$}
\RightLabel{SIH}
\dashedLine
\BinaryInfC{$\lseq{\mathcal R}{\mathcal T}{\fsa}{\fsb_1,\lform{v}{\psi\exc\chi},\lform{v_0}{\psi}}$}
\DisplayProof
\end{center}
\begin{center}
\AxiomC{$\pi$}

\AxiomC{$S_1$}
\AxiomC{$\lseq{\mathcal R}{\mathcal T}{\fsa,\lform{u}{\phi}}{\fsb_1,\lform{v}{\psi\exc\chi}}$}
\RightLabel{Lem.\ref{lem:iwk-admiss}}
\dashedLine
\UnaryInfC{$\lseq{\mathcal R}{\mathcal T}{\fsa,\lform{v_0}{\chi},\lform{u}{\phi}}{\fsb_1,\lform{v}{\psi\exc\chi}}$}
\RightLabel{SIH}
\dashedLine
\BinaryInfC{$\lseq{\mathcal R}{\mathcal T}{\fsa,\lform{v_0}{\chi}}{\fsb_1,\lform{v}{\psi\exc\chi}}$}

\RightLabel{$\excr$}
\BinaryInfC{$\lseq{\mathcal R}{\mathcal T}{\fsa}{\fsb_1,\lform{v}{\psi\exc\chi}}$}
\DisplayProof
\end{center}

If $\lform{w}{\phi}$ is $\lform{v}{\psi\exc\chi}$, then we have proofs of $\lseq{\mathcal R}{\mathcal T}{\fsa}{\fsb,\lform{v}{\psi\exc\chi},\lform{v_0}{\psi}}$ and $\lseq{\mathcal R}{\mathcal T}{\fsa,\lform{v_0}{\chi}}{\fsb,\lform{v}{\psi\exc\chi}}$, and $\lseq{\mathcal R}{\mathcal T}{\fsa,\lform{u}{\phi}}{\fsb}$ is of the form $\lseq{\mathcal R}{\mathcal T}{\fsa,\lform{u}{\psi\exc\chi}}{\fsb}$. In this case, we need to consider the shape of $\rtwo$. If $\lform{u}{\psi\exc\chi}$ is not principal in $\rtwo$, then we use the proof of $\lseq{\mathcal R}{\mathcal T}{\fsa}{\fsb,\lform{v}{\psi\exc\chi}}$ with SIH to cut $\lform{u}{\psi\exc\chi}$ from the premises of $\rtwo$, and then reapply $\rtwo$ to reach our goal. If $\lform{u}{\psi\exc\chi}$ is principal in $\rtwo$, then the premise of $\rtwo$ is of the shape $\lseq{\mathcal R,\lrel{v_1}{u}}{\mathcal T}{\fsa,\lform{v_1}{\psi}}{\fsb,\lform{v_1}{\chi}}$. Then, we proceed as follows where $\pi_0$ and $\pi_1$ are (in this order) the first proofs given.
\begin{center}
\AxiomC{$\lseq{\mathcal R}{\mathcal T}{\fsa}{\fsb,\lform{v}{\psi\exc\chi}}$}
\RightLabel{Lem.\ref{lem:iwk-admiss}}
\dashedLine
\UnaryInfC{$\lseq{\mathcal R}{\mathcal T}{\fsa}{\fsb,\lform{v_0}{\psi},\lform{v}{\psi\exc\chi}}$}
\AxiomC{$\lseq{\mathcal R}{\mathcal T}{\fsa,\lform{u}{\psi\exc\chi}}{\fsb,\lform{v_0}{\psi}}$}
\RightLabel{SIH}
\dashedLine
\BinaryInfC{$\lseq{\mathcal R}{\mathcal T}{\fsa}{\fsb,\lform{v_0}{\psi}}$}
\DisplayProof
\end{center}
\begin{center}
\AxiomC{$\lseq{\mathcal R}{\mathcal T}{\fsa}{\fsb,\lform{v}{\psi\exc\chi}}$}
\RightLabel{Lem.\ref{lem:iwk-admiss}}
\dashedLine
\UnaryInfC{$\lseq{\mathcal R}{\mathcal T}{\fsa,\lform{v_0}{\chi}}{\fsb,\lform{v}{\psi\exc\chi}}$}
\AxiomC{$\lseq{\mathcal R}{\mathcal T}{\fsa,\lform{u}{\psi\exc\chi},\lform{v_0}{\chi}}{\fsb}$}
\RightLabel{SIH}
\dashedLine
\BinaryInfC{$\lseq{\mathcal R}{\mathcal T}{\fsa,\lform{v_0}{\chi}}{\fsb}$}
\RightLabel{Lem.\ref{lem:iwk-admiss}}
\dashedLine
\UnaryInfC{$\lseq{\mathcal R}{\mathcal T}{\fsa,\lform{v_0}{\psi},\lform{v_0}{\chi}}{\fsb}$}
\DisplayProof
\end{center}

\begin{center}
\AxiomC{$\pi_0$}

\AxiomC{$\lseq{\mathcal R,\lrel{v_1}{u}}{\mathcal T}{\fsa,\lform{v_1}{\psi}}{\fsb,\lform{v_1}{\chi}}$}
\RightLabel{Lem.\ref{lem:brf-brb-admiss}}
\dashedLine
\UnaryInfC{$\lseq{\mathcal R,\lrel{v_1}{v_0}}{\mathcal T}{\fsa,\lform{v_1}{\psi}}{\fsb,\lform{v_1}{\chi}}$}
\RightLabel{Lem.\ref{lem:mrgf-mrgb-admiss}}
\dashedLine
\UnaryInfC{$\lseq{\mathcal R}{\mathcal T}{\fsa,\lform{v_0}{\psi}}{\fsb,\lform{v_0}{\chi}}$}
\AxiomC{$\pi_1$}
\RightLabel{PIH}
\dashedLine
\BinaryInfC{$\lseq{\mathcal R}{\mathcal T}{\fsa,\lform{v_0}{\psi}}{\fsb}$}

\RightLabel{PIH}
\dashedLine
\BinaryInfC{$\lseq{\mathcal R}{\mathcal T}{\fsa}{\fsb}$}
\DisplayProof
\end{center}
\end{wrap}

\noindent $\rone=\existsr:$ Then there are two cases to consider: in the left premise $\lseq{\mathcal R}{\mathcal T}{\fsa}{\fsb,\lform{w}{\phi}}$ of $\cut$, either (1) $\lform{w}{\phi}$ is not the principal formula $\lform{v}{\exists x\psi}$, or (2) $\lform{w}{\phi}$ is the principal formula. 
 In case (1), we have a proof of $\lseq{\mathcal R}{\mathcal T}{\fsa}{\fsb_1,\lform{v}{\exists x\psi},\lform{v}{\psi(t/x)},\lform{w}{\phi}}$, which we call $S$, and $\lseq{\mathcal R}{\mathcal T}{\fsa,\lform{u}{\phi}}{\fsb}$ is of the form $\lseq{\mathcal R}{\mathcal T}{\fsa,\lform{u}{\phi}}{\fsb_1,\lform{v}{\exists x\psi}}$. We proceed as follows.
\begin{center}
\AxiomC{$S$}

\AxiomC{$\lseq{\mathcal R}{\mathcal T}{\fsa,\lform{u}{\phi}}{\fsb_1,\lform{v}{\exists x\psi}}$}
\RightLabel{Lem.
\ref{lem:inv}}
\dashedLine
\UnaryInfC{$\lseq{\mathcal R}{\mathcal T}{\fsa,\lform{u}{\phi}}{\fsb_1,\lform{v}{\exists x\psi},\lform{v}{\psi(t/x)}}$}

\RightLabel{SIH}
\dashedLine
\BinaryInfC{$\lseq{\mathcal R}{\mathcal T}{\fsa}{\fsb_1,\lform{v}{\exists x\psi},\lform{v}{\psi(t/x)}}$}
\RightLabel{$\existsr$}
\UnaryInfC{$\lseq{\mathcal R}{\mathcal T}{\fsa}{\fsb_1,\lform{v}{\exists x\psi}}$}
\DisplayProof
\end{center}
In case (2), we have proof a of $\lseq{\mathcal R}{\mathcal T}{\fsa}{\fsb,\lform{v}{\exists x\psi},\lform{v}{\psi(t/x)}}$, and $\lseq{\mathcal R}{\mathcal T}{\fsa,\lform{u}{\phi}}{\fsb}$ is of the form $\lseq{\mathcal R}{\mathcal T}{\fsa,\lform{u}{\exists x\psi}}{\fsb}$. In this case, we need to consider the shape of $\rtwo$. If $\lform{u}{\exists x\psi}$ is not principal in $\rtwo$, then we apply the hp-invertibility of $\rtwo$ (\cref{lem:inv}) to the left premise of $\cut$ and use SIH to cut the result with the premise of $\rtwo$, applying $\rtwo$ afterward to reach our goal. If $\lform{u}{\exists x\psi}$ is principal in $\rtwo$, then the premise of $\rtwo$ is of the shape $\lseq{\mathcal R}{\mathcal T,\lterm{v}{y}}{\fsa,\lform{v}{\psi(y/x)}}{\fsb}$ where $y$ is fresh. Then, we proceed as follows where $\pi$ is the first proof given and $x_0,\dots,x_n$ are all the variables appearing in $t$.
\begin{center}
\AxiomC{$\lseq{\mathcal R}{\mathcal T}{\fsa}{\fsb,\lform{v}{\exists x\psi},\lform{v}{\psi(t/x)}}$}
\AxiomC{$\lseq{\mathcal R}{\mathcal T}{\fsa,\lform{u}{\exists x\psi}}{\fsb}$}
\RightLabel{$\iwk$}
\dashedLine
\UnaryInfC{$\lseq{\mathcal R}{\mathcal T}{\fsa,\lform{u}{\exists x\psi}}{\fsb,\lform{v}{\psi(t/x)}}$}
\RightLabel{SIH}
\dashedLine
\BinaryInfC{$\lseq{\mathcal R}{\mathcal T}{\fsa}{\fsb,\lform{v}{\psi(t/x)}}$}
\DisplayProof
\end{center}
\begin{center}
\AxiomC{$\pi$}

\AxiomC{$\lseq{\mathcal R}{\mathcal T,\lterm{v}{y}}{\fsa,\lform{v}{\psi(y/x)}}{\fsb}$}
\RightLabel{$(t/y)$}
\dashedLine
\UnaryInfC{$\lseq{\mathcal R}{\mathcal T,\lterm{v}{x_0},\dots,\lterm{v}{x_n}}{\fsa,\lform{v}{\psi(t/x)}}{\fsb}$}
\RightLabel{$\idr$}
\dashedLine
\UnaryInfC{$\lseq{\mathcal R}{\mathcal T}{\fsa,\lform{v}{\psi(t/x)}}{\fsb}$}

\RightLabel{PIH}
\dashedLine
\BinaryInfC{$\lseq{\mathcal R}{\mathcal T}{\fsa}{\fsb}$}
\DisplayProof
\end{center}
Note that the step involving $\idr$ 
is justified as $t$ is available for $v$, meaning for each $x_{i} \in \VT{t}$, there exists a domain atom $u_{i} : x_{i}$ such that $\rable{u_{i}}{v}$, showing $\idr$ applicable.
\end{proof}

\section{Concluding Remarks} 
\label{sec:conclusion}

Our analysis indicates that there may be two interesting and possibly distinct first-order extensions of bi-intuitionistic logic that may be worth exploring. The first is to consider a logic with decreasing domains, i.e., if $w \leq u$ then $D(u) \subseteq D(w)$ in the Kripke model. Semantically, this logic is easy to define, but its proof theory is not at all obvious. We are looking into the possibility of formalizing a notion of ``non-existence predicate,'' that is dual to the existence predicate, suggested by Restall~\cite{Res05}. This non-existence predicate may play a similar (but dual) role to the existence predicate in $\nid.$ The other extension is motivated from the proof theoretical perspective. As mentioned in Remark~\ref{rem:logic-without-ds}, it seems that one can obtain a subsystem of $\nid$ without the domain-shift rule $\doms$ that satisfies cut-elimination. As discussed in Section~\ref{sec:nested-calculi}, the $\doms$ rule is crucial to ensure the completeness of $\biqid$ in the presence of the exclusion operator, and so, a natural question to ask is what the semantics of such a logic would look like.



\bibliography{bibfile}

\appendix

\section{Soundness and Completeness Proofs}\label{app:soundness-completeness}

\subsection{Soundness}

\begin{customthm}{\ref{thm:soundness}}[Soundness] Let $\seq$ be a sequent. If $\seq$ is provable in $\nid$ ($\ncd$), then $\seq$ is ($\cd$-)valid.
\end{customthm}

\begin{proof} We argue the claim by induction on the height of the given derivation and consider the $\nid$ case as the $\ncd$ case is similar.

\textit{Base case.} It is straightforward to show that any instance of $\botl$ or $\topr$ is valid; hence, we focus on $\ax$ and show that any instance thereof is valid. Let us consider the following instance of $\ax$, where $\rable{w}{u}$ due to the side condition imposed on $\ax$, that is, there exist $v_{1}, \ldots, v_{n} \in \lab(\R)$ such that $wRv_{1}, \ldots, v_{n}Ru \in \R$.
\begin{center}
\AxiomC{}
\RightLabel{$\ax$}
\UnaryInfC{$\lseq{\mathcal R}{\mathcal T}{\Gamma,\lform{w}{p(\vec{t})}}{\Delta,\lform{u}{p(\vec{t})}}$}
\DisplayProof
\end{center}
Let us suppose $\lseq{\mathcal R}{\mathcal T}{\Gamma,\lform{w}{p(\vec{t})}}{\Delta,\lform{u}{p(\vec{t})}}$ is invalid, i.e., a model $M = (W,\leq,U,D,\funinterp,\predinterp)$, $M$-interpretation $\iota$, and $M$-assignment $\assign$ exist such that the following hold: $\iota(w) \leq \iota(v_{1}), \ldots, \iota(v_{n}) \leq \iota(u)$, $M, \iota(w), \assign \Vdash p(\vec{t})$, and $M, \iota(u), \assign \not\Vdash p(\vec{t})$. By the monotonicity condition $\mcond$ (see \dfn~\ref{def:model}), it must be that $M, \iota(u), \assign \Vdash p(\vec{t})$, giving a contradiction. Thus, every instance of $\ax$ must be valid.

\textit{Inductive step.} We prove the inductive step by contraposition, showing that if the conclusion of the last inference in the given proof is invalid, then at least one premise of the final inference must be invalid. We make a case distinction based on the last rule applied in the given derivation.

$\doms$. Suppose $\lseq{\mathcal R}{\mathcal T}{\fsa, \lform{w}{p(\vec{t})}}{\fsb}$ is invalid with $\vec{t} = t_{1}, \ldots, t_{n}$. Then, there exists a model $M$, $M$-interpretation $\iota$, and $M$-assignment $\assign$ such that $M, \iota(w), \assign \Vdash p(\vec{t})$. Therefore, $(\overline{\assign}(t_{1}),\ldots,\overline{\assign}(t_{n})) \in \predinterp(w,p)$, and since $\predinterp(w,p) \subseteq D(w)^{n}$, we have that $\overline{\assign}(t_{i}) \in D(\iota(w))$ for $1 \leq i \leq n$. By the $\concond$ and $\funcond$ conditions, we know that $M, \iota, \assign \models \lterm{w}{\VT{\vec{t}}}$. Therefore, $\lseq{\mathcal R}{\mathcal T, \lterm{w}{\VT{\vec{t}}}}{\fsa, \lform{w}{p(\vec{t})}}{\fsb}$ is invalid as well.

$\conl$. If we assume that $\lseq{\mathcal R}{\mathcal T}{\Gamma,\lform{w}{\phi\land\psi}}{\Delta}$ is invalid, then there exists a model $M$, $M$-interpretation $\iota$, and $M$-assignment $\assign$ such that $M, \iota(w), \assign \Vdash \phi \land \psi$, implying that $M, \iota(w), \assign \Vdash \phi$ and $M, \iota(w), \assign \Vdash \psi$, showing that the premise $\lseq{\mathcal R}{\mathcal T}{\Gamma,\lform{w}{\phi},\lform{w}{\psi}}{\Delta}$ is invalid as well.

$\conr$. Let us suppose that $\lseq{\mathcal R}{\mathcal T}{\Gamma}{\Delta,\lform{w}{\phi\land\psi}}$ is invalid. Then, there exists an model $M$, $M$-interpretation $\iota$, and $M$-assignment $\assign$ such that $M, \iota(w), \assign \not\Vdash \phi \land \psi$. Hence, either $M, \iota(w), \assign \not\Vdash \phi$ or $M, \iota(w), \assign \not\Vdash \psi$. In the first case, the left premise of $\conr$ is invalid, and in the second case, the right premise of $\conr$ is invalid.

$\disl$. Similar to the $\conr$ case.

$\disr$. Similar to the $\conl$ case.

$\impl$. Assume $\lseq{\mathcal R}{\mathcal T}{\Gamma,\lform{w}{\phi\imp\psi}}{\Delta}$ is invalid and $\rable{w}{u}$, i.e. a sequence $w\sr v_{1}, \ldots, v_{n}\sr u$ of relational atoms exist in $\R$. By our assumption, there exists a model $M$, $M$-interpretation $\iota$, and $M$-assignment such that $\iota(w) \leq \iota(v_{1}), \ldots, \iota(v_{n}) \leq \iota(u)$ and $M, \iota(w), \assign \Vdash \phi \imp \psi$. Because $M, \iota(w), \assign \Vdash \phi \imp \psi$ and $\leq$ is transitive, we know that either $M, \iota(u), \assign \not\Vdash \phi$ or $M,\iota(u),\assign \Vdash \psi$. In the first case, the left premise of $\impl$ is invalid, and in the second case, the right premise of $\impl$ is invalid.

$\impr$. Assume that $\lseq{\mathcal R}{\mathcal T}{\Gamma}{\Delta,\lform{w}{\phi\imp\psi}}$ is invalid. Then, there exists a model $M$, an $M$-interpretation $\iota$, and an $M$-assignment $\assign$ such that $M, \iota(w), \assign \not\Vdash \phi \imp \psi$. Hence, there exists a world $u$ such that $\iota(w) \leq u$, $M, u, \assign \Vdash \phi$, and $M, u, \assign \not\Vdash \psi$. Let $\iota'(v) = \iota(v)$ for all labels $v \neq u$ and $\iota'(u) = u$ otherwise. Then, $M$, $\iota'$, and $\assign$ falsify the premise of $\impr$, showing it invalid.

$\excl$. Similar to the $\impr$ case above. 

$\excr$. Similar to the $\impl$ case above. 

$\existsl$. Suppose that $\seq = \lseq{\mathcal R}{\mathcal T}{\Gamma,\lform{w}{\exists x\phi}}{\Delta}$ is invalid. Then, there exists a model $M$, an $M$-interpretation $\iota$, and an $M$-assignment $\assign$ such that $M, \iota(w), \assign \Vdash \exists x \phi$. Therefore, there exists an $\parama \in D(\iota(w))$ such that $M, \iota(w), \assign[\parama/y] \Vdash \phi(y/x)$ with $y$ not occurring in $\seq$. Then, as $y$ is fresh, $M$, $\iota$, and $\assign[\parama/y]$ falsify the premise of $\existsl$, showing it invalid.

$\existsri$. Suppose that $\seq = \lseq{\mathcal R}{\mathcal T}{\Gamma}{\Delta,\lform{w}{\exists x\phi}}$ is invalid. Then, there exists a model $M$, and $M$-interpretation $\iota$, and $M$-assignment such that $M, \iota(w), \assign \not\Vdash \exists x \phi$. By the side condition on $\existsri$, we know that $\avail{t}{X_{w}}{\R,\T}$, meaning there exist labels $u_{1}, \ldots, u_{n} \in \lab(\seq)$ such that $u_{1} : x_{1}, \ldots, u_{n} : x_{n} \in \T$, $\VT{t} = \{x_{1}, \ldots, x_{n}\}$, and $\rable{u_{1}}{w}, \ldots, \rable{u_{n}}{w}$. It follows that $\iota(u_{1}) \leq \iota(w), \ldots, \iota(u_{n}) \leq \iota(w)$ and $\terminterp{\assign}(x_{1}) \in D(\iota(u_{1})), \ldots, \terminterp{\assign}(x_{n}) \in D(\iota(u_{n}))$. By the increasing domain condition $\idcond$, we have that $\terminterp{\assign}(x_{1}) \in D(\iota(w)), \ldots, \terminterp{\assign}(x_{n}) \in D(\iota(w))$. Therefore, by the $\concond$ and $\funcond$ conditions, we know that $\terminterp{\assign}(t) \in D(\iota(w))$, showing that $M, \iota(w), \assign \not\Vdash \phi(t/x)$, and thus, the premise is invalid.

$\allli$. Suppose that $\seq = \lseq{\mathcal R}{\mathcal T}{\fsa,\lform{w}{\forall x\phi}}{\fsb}$ is invalid. Then, there exists a model $M$, and $M$-interpretation $\iota$, and $M$-assignment $\assign$ such that $M, \iota(w), \assign \Vdash \forall x \phi$. By the side condition on $\allli$, we know that $\rable{w}{u}$ and $\avail{t}{X_{w}}{\R,\T}$. By the latter fact, there exist labels $v_{1}, \ldots, v_{n} \in \lab(\seq)$ such that $v_{1} : x_{1}, \ldots, v_{n} : x_{n} \in \T$, $\VT{t} = \{x_{1}, \ldots, x_{n}\}$, and $\rable{v_{1}}{w}, \ldots, \rable{v_{n}}{w}$. It follows that $\iota(v_{1}) \leq \iota(w), \ldots, \iota(v_{n}) \leq \iota(w)$ and $\terminterp{\assign}(x_{1}) \in D(\iota(v_{1})), \ldots, \terminterp{\assign}(x_{n}) \in D(\iota(v_{n}))$. By the increasing domain condition $\idcond$, we have that $\terminterp{\assign}(x_{1}) \in D(\iota(w)), \ldots, \terminterp{\assign}(x_{n}) \in D(\iota(w))$. Therefore, by the $\concond$ and $\funcond$ conditions and our assumption, we know that $\terminterp{\assign}(t) \in D(\iota(w))$, showing that $M, \iota(w), \assign \Vdash \phi(t/x)$. By the fact that $\rable{w}{u}$, we know $\iota(w) \leq \iota(u)$ and $\terminterp{\assign}(t) \in D(\iota(u))$, showing that $M, \iota(u), \assign \Vdash \phi(t/x)$ by \prp~\ref{prop:monotonicity}. Thus, the premise is invalid.

$\allr$. Let us assume that $\lseq{\mathcal R}{\mathcal T}{\Gamma}{\Delta,\lform{w}{\forall x\phi}}$ is invalid. Then, there exists a model $M$, an $M$-interpretation $\iota$, and an $M$-assignment $\assign$ such that $M, \iota(w), \assign \not\Vdash \forall x \phi$. Thus, there exists a world $u \in W$ such that $\iota(w) \leq u$, $\parama \in D(u)$, and $M, u, \assign[\parama/y] \not\Vdash \phi(y/x)$. We define $\mint'(v) = \mint(v)$ if $v \neq u$ and $\mint'(u) = u$. Then, $M$, $\iota'$, and $\assign[\parama/y]$ falsify the premise $\lseq{\mathcal R,w\leq u}{\mathcal T,\lterm{u}{y}}{\Gamma}{\Delta,\lform{u}{\phi(y/x)}}$, showing it invalid.
\end{proof}

\subsection{Completeness}

We prove the cut-free completeness of $\nid$ by showing that if a sequent of the form $w : \vec{x} \sar w : \phi(\vec{x})$ is not provable, then a counter-model can be constructed witnessing the invalidity of the sequent. We focus on the proof for $\nid$ as it is more involved than the similar proof for $\ncd$. Our proof makes use of various new notions, which we now define. A \emph{pseudo-derivation} is defined to be a (potentially infinite) tree whose nodes are sequents and where every parent node corresponds to the conclusion of a rule in $\nid$ with the children nodes corresponding to the premises. We remark that a proof in $\nid$ is a finite pseudo-derivation where all sequents at leaves are instances of $\ax$, $\botl$, or $\topr$. A \emph{branch} $\branch$ is defined to be a maximal path of sequents through a pseudo-derivation, starting from the conclusion. The following lemma is useful in our proof of completeness.

\begin{customlem}{\ref{lem:rable-preserved-up}} Let $\modclass \in \{\idclass, \cdclass\}$. For each $i \in \{0,1,2\}$, let $\seq_{i} = \R_{i}, \T_{i}, \fsa_{i} \sar \fsb_{i}$ be a sequent.
\begin{enumerate}

\item If $\rable{w}{u}$ holds for the conclusion of a rule $(r)$ in $\calc(\modclass)$, then $\rable{w}{u}$ holds for the premises of $(r)$;

\item If $w : p(\vec{t}) \in \fsa_{0}, \fsb_{0}$ and $\seq_{0}$ is the conclusion of a rule $(r)$ in $\calc(\modclass)$ with $\seq_{1}$ (and $\seq_{2}$) the premise(s) of $(r)$, then $w : p(\vec{t}) \in \fsa_{1}, \fsb_{1}$ (and $w : p(\vec{t}) \in \fsa_{2}, \fsb_{2}$, resp.);

\item If $w : x \in \T_{0}$ and $\seq_{0}$ is the conclusion of a rule $(r)$ in $\calc(\modclass)$ with $\seq_{1}$ (and $\seq_{2}$) the premise(s) of $(r)$, then $w : x\in \T_{1}$ (and $w : x \in \T_{2}$, resp.).

\end{enumerate}
\end{customlem}

\begin{proof} Each claim can be seen to hold by inspecting the rules of $\calc(\modclass)$.
\end{proof}

The lemma tells us that propagation paths, the position of atomic formulae, and the position of terms are bottom-up preserved in rule applications.

\begin{customthm}{\ref{thm:completeness}}  If $w : \vec{x} \sar w : \phi(\vec{x})$ is ($\cd$-)valid, then $w : \vec{x} \sar w : \phi(\vec{x})$ is provable in $\nid$ ($\ncd$).
\end{customthm}

\begin{proof} We assume that $\seq = w : \vec{x} \sar w : \phi(\vec{x})$ is not derivable in $\nid$ and show that a model $M$ can be defined which witnesses that $\seq$ is invalid. To prove this, we first define a proof-search procedure $\prove$ that bottom-up applies rules from $\nid$ to $w : \vec{x} \sar w : \phi(\vec{x})$. Second, we show how an $M$ can be extracted from failed proof-search. We now describe the proof-search procedure $\prove$ and let $\prec$ be a well-founded, strict linear order over the set $\termset$ of terms.\\

\noindent
$\prove$. Let us take $w : \vec{x} \sar w : \phi(\vec{x})$ as input and continue to the next step.\\

$\ax$, $\botl$, and $\topr$. Suppose $\branch_{1}, \ldots, \branch_{n}$ are all branches occurring in the current pseudo-derivation and let $\seq_{1}, \ldots, \seq_{n}$ be the top sequents of each respective branch. For each $1 \leq i \leq n$, we halt the computation of $\prove$ on each branch $\branch_{i}$ where $\seq_{i}$ is of the form $\ax$, $\botl$, or $\topr$. If $\prove$ is halted on each branch $\branch_{i}$, then $\prove$ returns $\success$ because a proof of the input has been constructed. However, if $\prove$ did not halt on each branch $\branch_{i}$ with $1 \leq i \leq n$, then let $\branch_{j_{1}}, \ldots, \branch_{j_{k}}$ be the remaining branches for which $\prove$ did not halt. For each such branch, copy the top sequent above itself, and continue to the next step.

\medskip

$\doms$. Suppose $\branch_{1}, \ldots, \branch_{n}$ are all branches occurring in the current pseudo-derivation and let $\seq_{1}, \ldots, \seq_{n}$ be the top sequents of each respective branch. For each $1 \leq i \leq n$, we consider $\branch_{i}$ and extend the branch with bottom-up applications of $\doms$ rules. Let $\branch_{k+1}$ be the current branch under consideration, and assume that $\branch_{1},\ldots,\branch_{k}$ have already been considered. We assume that the top sequent in $\branch_{k+1}$ is of the form
$$
\seq_{k+1} = \lseq{\mathcal R}{\mathcal T}{\fsa, w_{1} : p_{1}(\vec{t}_{1}), \ldots, w_{\ell} : p_{\ell}(\vec{t}_{\ell})}{\fsb}
$$
where all atomic input formulae are displayed in $\seq_{k+1}$ above. We successively consider each atomic input formula and bottom-up apply $\doms$, yielding a branch extending $\branch_{k+1}$ with a top sequent saturated under $\doms$ applications. After these operations have been performed for each branch $\branch_{i}$ with $1 \leq i \leq n$, we continue to the next step.

\medskip

$\disl$. Suppose $\branch_{1}, \ldots, \branch_{n}$ are all branches occurring in the current pseudo-derivation and let $\seq_{1}, \ldots, \seq_{n}$ be the top sequents of each respective branch. For each $1 \leq i \leq n$, we consider $\branch_{i}$ and extend the branch with bottom-up applications of $\disl$ rules. Let $\branch_{k+1}$ be the current branch under consideration, and assume that $\branch_{1},\ldots,\branch_{k}$ have already been considered. We assume that the top sequent in $\branch_{k+1}$ is of the form
$$
\seq_{k+1} = \R, \T, \fsa, w_{1} : \phi_{1} \lor \psi_{1}, \ldots, w_{m} : \phi_{m} \lor \psi_{m} \sar \fsb
$$
 where all disjunctive formulae $w_{i} : \phi_{i} \lor \psi_{i}$ are displayed in $\seq_{k+1}$ above. We consider each labeled formula $w_{i} : \phi_{i} \lor \psi_{i}$ in turn, and bottom-up apply the $\disl$ rule, which gives $2^{m}$ new branches extending $\branch_{k+1}$ such that each branch has a top sequent of the form $\seq_{k+1} = \R, \T, \fsa, w_{1} : \chi_{1} \ldots, w_{m} : \chi_{m} \sar \fsb$ with $\chi_{i} \in \{\phi_{i},\psi_{i}\}$ and $1 \leq i \leq m$. After these operations have been performed for each branch $\branch_{i}$ with $1 \leq i \leq n$, we continue to the next step.
 
\medskip 
 
$\disr$. Suppose $\branch_{1}, \ldots, \branch_{n}$ are all branches occurring in the current pseudo-derivation and let $\seq_{1}, \ldots, \seq_{n}$ be the top sequents of each respective branch. For each $1 \leq i \leq n$, we consider $\branch_{i}$ and extend the branch with bottom-up applications of $\disr$ rules. Let $\branch_{k+1}$ be the current branch under consideration, and assume that $\branch_{1},\ldots,\branch_{k}$ have already been considered. We assume that the top sequent in $\branch_{k+1}$ is of the form
$$
\seq_{k+1} = \R, \T, \fsa \sar w_{1} : \phi_{1} \lor \psi_{1}, \ldots, w_{m} : \phi_{m} \lor \psi_{m}, \fsb
$$
 where all disjunctive formulae $w_{i} : \phi_{i} \lor \psi_{i}$ are displayed in $\seq_{k+1}$ above. We consider each labeled formula $w_{i} : \phi_{i} \lor \psi_{i}$ in turn, and bottom-up apply the $\disr$ rule. These $\disr$ rule applications extend $\branch_{k+1}$ such that $\R, \T, \fsa \sar w_{1} : \phi_{1}, w_{1} :  \psi_{1}, \ldots, w_{m} : \phi_{m}, w_{m} : \psi_{m}, \fsb$ is now the top sequent of the branch. After these operations have been performed for each branch $\branch_{i}$ with $1 \leq i \leq n$, we continue to the next step.
 
\medskip

$\conl$. Similar to the $\disr$ case above.

\medskip

$\conr$. Similar to the $\disl$ case above.

\medskip

$\impl$. Suppose $\branch_{1}, \ldots, \branch_{n}$ are all branches occurring in the current pseudo-derivation and let $\seq_{1}, \ldots, \seq_{n}$ be the top sequents of each respective branch. For each $1 \leq i \leq n$, we consider $\branch_{i}$ and extend the branch with bottom-up applications of $\impl$ rules. Let $\branch_{k+1}$ be the current branch under consideration, and assume that $\branch_{1},\ldots,\branch_{k}$ have already been considered. We assume that the top sequent in $\branch_{k+1}$ is of the form
$$
\seq_{k+1} = \R, \T, \fsa, w_{1} : \phi_{1} \imp \psi_{1}, \ldots, w_{m} : \phi_{m} \imp \psi_{m} \sar \fsb
$$
 where all implicational formulae $w_{i} : \phi_{i} \imp \psi_{i}$ are displayed in $\seq_{k+1}$ above. We consider each formula $w_{i} : \phi_{i} \imp \psi_{i}$ in turn, and bottom-up apply the $\impl$ rule. Let $w_{\ell + 1} : \phi_{\ell+1} \imp \psi_{\ell+1}$ be the current formula under consideration, and assume that $w_{1} : \phi_{1} \imp \psi_{1}, \ldots, w_{\ell} : \phi_{\ell} \imp \psi_{\ell}$ have already been considered. For every label $u$ such that $\rable{w_{\ell + 1}}{u}$, bottom-up apply the $\impl$ rule. After these operations have been performed for each branch $\branch_{i}$ with $1 \leq i \leq n$, we continue to the next step.

\medskip

$\impr$. Suppose $\branch_{1}, \ldots, \branch_{n}$ are all branches occurring in the current pseudo-derivation and let $\seq_{1}, \ldots, \seq_{n}$ be the top sequents of each respective branch. For each $1 \leq i \leq n$, we consider $\branch_{i}$ and extend the branch with bottom-up applications of $\impr$ rules. Let $\branch_{k+1}$ be the current branch under consideration, and assume that $\branch_{1},\ldots,\branch_{k}$ have already been considered. We assume that the top sequent in $\branch_{k+1}$ is of the form
$$
\seq_{k+1} = \R, \T, \fsa \sar w_{1} : \phi_{1} \imp \psi_{1}, \ldots, w_{m} : \phi_{m} \imp \psi_{m}, \fsb
$$
where all implicational formulae $w_{i} : \phi_{i} \imp \psi_{i}$ are displayed in $\seq_{k+1}$ above. We consider each formula $w_{i} : \phi_{i} \imp \psi_{i}$ in turn, and bottom-up apply the $\impr$ rule. These $\impr$ rule applications extend $\branch_{k+1}$ such that
$$
\R, w_{1} \sr u_{1}, \ldots, w_{m} \sr u_{m}, \T, \fsa, u_{1} : \phi_{1}, \ldots, u_{m} : \phi_{m} \sar u_{1} : \psi_{1}, \ldots, u_{m} : \psi_{m}, \fsb
$$
is now the top sequent of the branch with $u_{1}, \ldots, u_{m}$ fresh. After these operations have been performed for each branch $\branch_{i}$ with $1 \leq i \leq n$, we continue to the next step.

\medskip

$\excl$. Similar to the $\impr$ case above.

\medskip

$\excr$. Similar to the $\impl$ case above.

\medskip

$\existsl$. Suppose $\branch_{1}, \ldots, \branch_{n}$ are all branches occurring in the current pseudo-derivation and let $\seq_{1}, \ldots, \seq_{n}$ be the top sequents of each respective branch. For each $1 \leq i \leq n$, we consider $\branch_{i}$ and extend the branch with bottom-up applications of $\existsl$ rules. Let $\branch_{k+1}$ be the current branch under consideration, and assume that $\branch_{1},\ldots,\branch_{k}$ have already been considered. We assume that the top sequent in $\branch_{k+1}$ is of the form
$$
\seq_{k+1} = \R, \T, \fsa, w_{1} : \exists x_{1} \phi_{1}, \ldots, w_{m} : \exists x_{m} \phi_{m} \vdash \fsb
$$
where all existential input formulae $w_{i} : \exists x_{i} \phi_{i}$ are displayed in $\seq_{k+1}$ above. We consider each formula $w_{i} : \exists x_{i} \phi_{i}$ in turn, and bottom-up apply the $\existsl$ rule. These $\existsl$ rule applications extend $\branch_{k+1}$ such that
$$
\R, \T, w_{1} : y_{1}, \ldots, w_{m} : y_{m}, \fsa, w_{1} : \phi_{1}(y_{1}/x_{1}), \ldots, w_{n} : \phi_{m}(y_{m}/x_{m}) \vdash \fsb
$$
is now the top sequent of the branch with $y_{1},\ldots,y_{m}$ fresh variables. After these operations have been performed for each branch $\branch_{i}$ with $1 \leq i \leq n$, we continue to the next step.

\medskip

$\existsri$. Suppose $\branch_{1}, \ldots, \branch_{n}$ are all branches occurring in the current pseudo-derivation and let $\seq_{1}, \ldots, \seq_{n}$ be the top sequents of each respective branch. For each $1 \leq i \leq n$, we consider $\branch_{i}$ and extend the branch with bottom-up applications of $\existsri$ rules. Let $\branch_{k+1}$ be the current branch under consideration, and assume that $\branch_{1},\ldots,\branch_{k}$ have already been considered. We assume that the top sequent in $\branch_{k+1}$ is of the form
$$
\seq_{k+1} = \R, \T, \fsa \vdash w_{1} : \exists x_{1} \phi_{1}, \ldots, w_{m} : \exists x_{m} \phi_{m}, \fsb
$$
 where all existential formulae $w_{i} : \exists x_{i} \phi_{i}$ are displayed in $\seq_{k+1}$ above. We consider each labeled formula $w_{m} : \exists x_{m} \phi_{i}$ in turn, and bottom-up apply the $\existsri$ rule. Let $w_{\ell+1} : \exists x_{\ell+1} \phi_{\ell+1}$ be the current formula under consideration, and assume that $w_{1} : \exists x_{1} \phi_{1}, \ldots, w_{\ell} : \exists x_{\ell} \phi_{\ell}$ have already been considered. Recall that $\prec$ is a well-founded, strict linear order over the set $\termset$ of terms. Choose the $\prec$-minimal term $t \in \termset(X_{w_{\ell+1}})$ that has yet to be picked to instantiate $w_{\ell+1} : \exists x_{\ell+1} \phi_{\ell+1}$ and bottom-up apply the $\existsri$ rule, thus adding $w_{\ell+1} : \phi_{\ell+1}(t/x_{\ell+1} )$. After these operations have been performed for each branch $\branch_{i}$ with $1 \leq i \leq n$, we continue to the next step.
 
 \medskip

$\allli$. Similar to the $\existsri$ case above.

\medskip

$\allr$. Similar to the $\impr$ and $\existsl$ cases above. We now loop back to the `$\ax$, $\botl$, and $\topr$' step.

\medskip

\noindent
This concludes the description of $\prove$.\\
 
We now argue that if $\prove$ does not return $\success$, then a model $M$, $M$-interpretation $\iota$, and $M$-assignment $\assign$ can be defined such that $M, \iota, \assign \not\models \seq$. If $\prove$ halts, i.e. $\prove$ returns $\success$, then a proof of $\seq$ may be obtained by `contracting'  all redundant inferences from the `$\ax$, $\botl$, and $\topr$' step of $\prove$, which contradicts our assumption. Therefore, $\prove$ does not halt, that is, $\prove$ generates an infinite tree with finite branching. By K\"onig's lemma, an infinite branch must exist in this infinite tree, which we denote by $\branch$. We define a model $M = (W,\leq,U,D,\funinterp,\predinterp)$ by means of this branch.

First, let us define the following multisets, all of which are obtained by taking the union of each multiset of relational atoms, domain atoms, antecedent labeled formulae, and consequent labeled formulae (resp.) occurring within a sequent in $\branch$:
$$
\R^{\branch} = \!\!\!\!\!\!\!\! \bigcup_{(\R, \T, \fsa \sar \fsb) \in \branch} \!\!\!\!\!\!\!\! \R 
\qquad 
\T^{\branch} = \!\!\!\!\!\!\!\! \bigcup_{(\R, \T, \fsa \sar \fsb) \in \branch} \!\!\!\!\!\!\!\! \T
\qquad
\fsa^{\branch} = \!\!\!\!\!\!\!\! \bigcup_{(\R, \T, \fsa \sar \fsb) \in \branch} \!\!\!\!\!\!\!\! \fsa
\qquad
\fsb^{\branch} = \!\!\!\!\!\!\!\! \bigcup_{(\R, \T, \fsa \sar \fsb) \in \branch} \!\!\!\!\!\!\!\! \fsb
$$

\begin{itemize}

\item $u \in W$ \iffi $u \in \lab(\R^{\branch},\T^{\branch}, \fsa^{\branch}, \fsb^{\branch})$;

\item $\leq \ = \ \{(u,v) \ | \ uRv \in \R\}^{*}$ where $*$ denotes the reflexive-transitive closure;

\item $t \in U$ \iffi there exists a label $u \in \lab(\R^{\branch},\T^{\branch}, \fsa^{\branch}, \fsb^{\branch})$ such that $t \in \termset(X_{u})$;

\item $t \in D(u)$ \iffi $t \in \termset(X_{u})$;


\item $(t_{1}, \ldots, t_{n}) \in \predinterp(u,p)$ \iffi $v,u \in \lab(\R^{\branch}\!, \T^{\branch}\!, \fsa^{\branch}\!, \fsb^{\branch})$, $v \twoheadrightarrow^{*}_{\R^{\branch}} u$, and $v : p(t_{1}, \ldots, t_{n}) \in \fsa^{\branch}\!.$

\end{itemize}

We now verify that $M$ is indeed a model. Observe that $W \neq \emptyset$ since $w \in W$ by definition and the relation $\leq$ is reflexive and transitive by definition. Furthermore, by definition, $D(u) \subseteq U$ for each $u \in W$ and $U = \bigcup_{u \in W} D(u)$. Also, since our language contains at least one constant symbol $a$ (see \rmk~\ref{rmk:constant}), we know that $a \in \termset(X_{u})$ for each $u \in W$, i.e. for each $u \in W$, $D(u) \neq \emptyset$. Let us now argue that $M$ satisfies the increasing domain condition $\idcond$, and assume $u,v \in W$, $t \in D(u)$, and $u \leq v$. Since $t \in D(u)$, $t \in \termset(X_{u})$, and since $u \leq v$, we know that $u \twoheadrightarrow^{*}_{\R^{\branch}} v$. It follows that $\termset(X_{u}) \subseteq \termset(X_{v})$ by \dfn~\ref{def:available}, showing that $t \in \termset(X_{v})$, and thus, $t \in D(v)$. It is simple to confirm that $\funinterp$ satisfies the $\concond$ and $\funcond$ conditions as $t \in D(u)$ \iffi $t \in \termset(X_{u})$, and $\termset(X_{u})$ contains every constant and is closed under the formation of terms by definition. For each $n$-ary predicate $p$ and world $u \in W$, $\predinterp(p,u) \subseteq D(u)^{n}$. To show this, suppose that $(t_{1}, \ldots, t_{n}) \in \predinterp(p,u)$. Then, there exists a label $v \in \lab(\R^{\branch},\T^{\branch}, \fsa^{\branch}, \fsb^{\branch})$ such that $v \twoheadrightarrow^{*}_{\R^{\branch}} u$, and $v : p(t_{1}, \ldots, t_{n}) = v : p(\vec{t}) \in \fsa^{\branch}$. By the $\doms$ step of $\prove$, we know that $w : \VT{\vec{t}} \in \T^{\branch}$. It follows that $t_{1}, \ldots, t_{n} \in \termset(X_{u})$, implying that $(t_{1}, \ldots, t_{n}) \in D(u)^{n}$. Finally, we argue that $M$ satisfies the monotonicity condition $\mcond$, and therefore, we assume $u,v \in W$, $u \leq v$, and $(t_{1}, \ldots, t_{n}) \in \predinterp(p,u)$. Since $u \leq v$, we know that there exist $w_{1}, \ldots, w_{n} \in \lab(\R^{\branch},\T^{\branch}, \fsa^{\branch}, \fsb^{\branch})$ such that $uRw_{1}, \ldots, w_{n}Rv$, implying that $u \twoheadrightarrow^{*}_{\R^{\branch}} v$. Since $(t_{1}, \ldots, t_{n}) \in \predinterp(p,u)$, there exists a $v'$ such that $v' \twoheadrightarrow^{*}_{\R^{\branch}} u$ and $v' : p(t_{1}, \ldots, t_{n}) \in \fsa^{\branch}$. Hence, $v' \twoheadrightarrow^{*}_{\R^{\branch}} v$ because $v' \twoheadrightarrow^{*}_{\R^{\branch}} u$ and $u \twoheadrightarrow^{*}_{\R^{\branch}} v$, which shows that $(t_{1}, \ldots, t_{n}) \in \predinterp(p,v)$.
 
We have now confirmed that $M$ is indeed a model. Let us define $\assign$ to be the $M$-assignment mapping every variable in $U$ to itself and every variable in $\var \setminus U$ arbitrarily. To finish the proof of completeness, we now argue the following by mutual induction on the complexity of $\psi$: (1) if $u : \psi \in \fsa^{\branch}$, then $M,u,\assign \Vdash \psi$, and (2) if $u : \psi \in \fsb^{\branch}$, then $M,u,\assign \not\Vdash \psi$. We argue the cases where $\psi$ is of the form $p(t_{1}, \ldots, t_{n})$ or $\forall x \chi$, and omit the remaining cases as they are straightforward or similar.
 
\begin{itemize}

\item $u : p(t_{1},\ldots,t_{n}) \in \fsa^{\branch}$. In this case, $(t_{1},\ldots,t_{n}) \in \predinterp(p,u)$ by the definition of $\predinterp$, implying that $M,u,\assign \Vdash p(t_{1},\ldots,t_{n})$.

\item $u : p(t_{1},\ldots,t_{n}) \in \fsb^{\branch}$. Observe that if a label $v \in \lab(\R^{\branch},\T^{\branch}, \fsa^{\branch}, \fsb^{\branch})$ exists such that $v \twoheadrightarrow^{*}_{\R^{\branch}} u$ and $v : p(t_{1},\ldots,t_{n}) \in \fsa^{\branch}$, then due to the `$\ax$, $\botl$, and $\topr$' step of $\prove$, $\branch$ would be finite. However, as this is not the case, it must be that no label $v  \in \lab(\R^{\branch},\T^{\branch}, \fsa^{\branch}, \fsb^{\branch})$ exists such that $v \twoheadrightarrow^{*}_{\R^{\branch}} u$ and $v : p(t_{1},\ldots,t_{n}) \in \fsa^{\branch}$, showing that $(t_{1}, \ldots, t_{n}) \not\in \predinterp(u,p)$, i.e. $M,u,\assign \not\Vdash p(t_{1},\ldots,t_{n})$.

\item $u : \forall x \chi \in \fsa^{\branch}$. Suppose $v \in W$, $t \in D(v)$, and $u \leq v$. By the assumption that $t \in D(v)$, we know that $\avail{t}{X_{u}}{\R,\T}$ holds for some sequent $\seq = \R,\T,\fsa \sar \fsb$ in $\branch$. Let us suppose w.l.o.g. that $\seq$ is the first such sequent in $\branch$ for which this holds. By \lem~\ref{lem:rable-preserved-up}, it follows that $\avail{t}{X_{w}}{\R',\T'}$ holds for every sequent $\seqb = \R', \T', \fsa' \sar \fsb'$ above $\seq$ in $\branch$. By the assumption that $u \leq v$, we know that $u \twoheadrightarrow^{*}_{\R^{\branch}} v$. Hence, at some point the $\allli$ step of $\prove$ will be applicable at or above $\seq$ in $\branch$, meaning $v : \chi(t/x) \in \fsa^{\branch}$. By IH, we have that $M,v,\assign \Vdash \chi(t/x)$, from which it follows that $M,w,\assign \Vdash \forall x \chi$ by our assumptions.

\item $u : \forall x \chi \in \fsb^{\branch}$. Due to the $\allr$ step of $\prove$, a  sequent of the form $\lseq{\mathcal R,\lrel{u}{v}}{\mathcal T,\lterm{v}{y}}{\fsa}{\fsb,\lform{v}{\chi(y/x)}}$ with $v$ and $y$ fresh must occur in $\branch$. By the definition of $\leq$ and $D(v)$, as well as \lem~\ref{lem:rable-preserved-up}, it follows that $u \leq v$ and $y \in D(v)$. By IH and the definition of $\assign$, $M,v,\assign \not\Vdash \chi(y/x)$, and so, $M,u,\assign \not\Vdash \forall x \chi$.

\end{itemize}

 Let $\iota$ to be the $M$-interpretation such that $\iota(u) = u$ for $u \in W$ and $\iota(v) \in W$ for $v \not\in W$. By the proof above, $M,\iota,\assign \not\models w : \vec{x} \sar w : \phi(\vec{x})$, showing that if a  sequent of the form $w : \vec{x} \sar w : \phi(\vec{x})$ is not derivable in $\nid$, then it is invalid, that is, every valid  sequent of the form $w : \vec{x} \sar w : \phi(\vec{x})$ is provable in $\nid$.
\end{proof}

\section{Admissibility and Invertibility Proofs}\label{app:proofs}

\begin{lemma}\label{lem:gax-admiss}
 The $\gax$ rule is derivable (and therefore, admissible) in $\calc(\modclass)$.
\end{lemma}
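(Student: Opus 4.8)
The plan is to prove the stronger statement that \emph{every} instance of $\gax$ is the conclusion of a derivation built solely from the rules of $\calc(\modclass)$ (from which admissibility follows at once), by induction on the complexity $|\phi|$ of the displayed formula. Throughout I fix multisets $\R,\T,\Gamma,\Delta$ and labels $w,u$ with $\rable{w}{u}$, and aim at the sequent $\lseq{\R}{\T}{\Gamma, \lform{w}{\phi}}{\Delta, \lform{u}{\phi}}$. In the base cases this sequent is, depending on $\phi$, literally an instance of $\ax$ (when $\phi = p(\vec t)$, whose side condition is exactly $\rable{w}{u}$), of $\botl$ (when $\phi = \bot$), or of $\topr$ (when $\phi = \top$).

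For the inductive step I would, case by case, decompose the principal formula on \emph{both} sides of the sequent and close the resulting leaves with the induction hypothesis applied to strictly simpler subformulae. The propositional binary connectives are routine: e.g.\ for $\phi = \psi\land\chi$ apply $\conl$ to $\lform{w}{\psi\land\chi}$ and $\conr$ to $\lform{u}{\psi\land\chi}$, leaving two premises, each carrying $\lform{w}{\psi}$ (resp.\ $\lform{w}{\chi}$) on the left and $\lform{u}{\psi}$ (resp.\ $\lform{u}{\chi}$) on the right; since $\R$ is untouched $\rable{w}{u}$ still holds, so the IH applies. The case $\phi = \psi\lor\chi$ is dual, via $\disr$ then $\disl$.

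The cases worth spelling out are those handled by the reachability rules and the quantifier rules, where the side conditions must be re-checked against the fresh labels, relational atoms, and domain atoms that get introduced. For $\phi = \psi\imp\chi$: apply $\impr$ to $\lform{u}{\psi\imp\chi}$, which (read bottom-up) adds a fresh label $u'$ and the relational atom $\lrel{u}{u'}$; from $\rable{w}{u}$ and $\lrel{u}{u'}$ we get $\rable{w}{u'}$, so $\impl$ applies to $\lform{w}{\psi\imp\chi}$ targeting $u'$, and both premises then carry $\lform{u'}{\psi}$ (resp.\ $\lform{u'}{\chi}$) on both sides with $\rable{u'}{u'}$, so the IH closes them. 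The case $\phi = \psi\exc\chi$ is handled dually: apply $\excl$ to $\lform{w}{\psi\exc\chi}$, introducing a fresh $v$ with $\lrel{v}{w}$, whence $\rable{v}{u}$; then $\excr$ to $\lform{u}{\psi\exc\chi}$ targeting $v$; then close by the IH. For $\phi = \exists x\psi$: apply $\existsl$ to $\lform{w}{\exists x\psi}$, introducing a fresh variable $y$ and (in $\nid$) the domain atom $\lterm{w}{y}$; since $\rable{w}{u}$, the term $y$ is available for $u$, so $\existsri$ applies to $\lform{u}{\exists x\psi}$ with witness $y$, leaving a single premise with $\lform{w}{\psi(y/x)}$ on the left and $\lform{u}{\psi(y/x)}$ on the right, closed by the IH since $|\psi(y/x)| = |\psi| < |\exists x\psi|$. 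The case $\phi = \forall x\psi$ is symmetric: apply $\allr$ to $\lform{u}{\forall x\psi}$ (introducing a fresh $v$, $\lrel{u}{v}$, and $\lterm{v}{y}$), note that $\rable{w}{v}$ and that $y$ is available for $v$, apply $\allli$ to $\lform{w}{\forall x\psi}$ targeting $v$ with witness $y$, and close by the IH. The same derivations go through verbatim in $\ncd$, with the availability requirements in $\dag_{4}$ and $\dag_{5}$ replaced by the weaker condition `$t\in\termset$'.

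The only obstacle I anticipate is the bookkeeping of the side conditions of the reachability and quantifier rules in these four cases: one has to verify, in each, that the relevant reachability relation survives the introduction of the fresh label/atom and that the chosen witness term is available at the target label. In every case this follows immediately from $\rable{w}{u}$ together with the newly added relational or domain atom, so no genuine difficulty arises; since derivability entails admissibility, the lemma follows.
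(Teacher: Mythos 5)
Your proposal is correct and matches the paper's approach: the paper proves this lemma precisely by induction on the complexity of the principal formula, building a derivation of each $\gax$ instance from the rules of $\calc(\modclass)$, which is exactly what you spell out (including the correct handling of the side conditions in the $\imp$, $\exc$, and quantifier cases). Your write-up is simply a detailed version of the paper's one-line argument.
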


\begin{proof} By induction on the complexity of the principal formula $\phi$.
\end{proof}

\begin{lemma}\label{lem:brf-brb-admiss}
 The rules $\brf$ and $\brb$ are hp-admissible in $\calc(\modclass)$.
\end{lemma}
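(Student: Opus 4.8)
The plan is to prove hp-admissibility of $\brf$ by induction on the height of a derivation $\pi$ of the premise $\lseq{\mathcal R,\lrel{w}{v}}{\mathcal T}{\Gamma}{\Delta}$, and to obtain $\brb$ by the mirror-image argument, reading the polytree `backwards'. Throughout, both the premise and the conclusion of the rule are assumed to be well-formed polytree sequents, so the branch move $\lrel{w}{v}\mapsto\lrel{u}{v}$ does not destroy the polytree shape --- here the side conditions $\rable{w}{u}$ and $\notrable{u}{v}$, together with the uniqueness of undirected paths in a tree, are what is needed. The key preliminary observation, call it $(\star)$, is that the branch move never \emph{loses} reachability: if $\rable{a}{b}$ in $\mathcal R,\lrel{w}{v}$, then $\rable{a}{b}$ in $\mathcal R,\lrel{u}{v}$. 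Indeed, the only walks affected by the move are those traversing the edge $w\to v$, and any such walk $a\to\cdots\to w\to v\to\cdots\to b$ can be rerouted as $a\to\cdots\to w\to\cdots\to u\to v\to\cdots\to b$ using a witness of $\rable{w}{u}$. Since availability of a term for a label (Definition~\ref{def:available}) is monotone in the reachability relation, $(\star)$ also gives that the availability side conditions $\dag_4$ and $\dag_5$ are preserved by the move.

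In the base case, $\pi$ is a single instance of $\ax$, $\botl$, or $\topr$. The rules $\botl$ and $\topr$ are immediate, as they have no relational side condition and the branch move leaves labeled formulae and domain atoms untouched; for $\ax$, the reachability side condition $\rable{v_0}{v_1}$ holds in $\mathcal R,\lrel{w}{v}$, hence in $\mathcal R,\lrel{u}{v}$ by $(\star)$, so the conclusion is again an instance of $\ax$. In the inductive step we commute the branch move above the last rule $(r)$ of $\pi$. If $(r)$ only affects labeled formulae or domain atoms, we apply the induction hypothesis to each premise of $(r)$ --- whose relational component still contains $\lrel{w}{v}$ and still satisfies $\dag_2$ --- and reapply $(r)$; any reachability or availability side conditions of $(r)$ survive by $(\star)$ and monotonicity. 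If $(r)$ adds a fresh label together with a new relational atom (namely $\impr$, $\excl$, or $\allr$), that label is fresh and hence distinct from $w$, $u$, $v$, so $\rable{w}{u}$ is still witnessed and $\notrable{u}{v}$ still holds in the enlarged relation (the fresh label contributes no new walk into the already-present $v$); hence the induction hypothesis applies to the premise with $\lrel{w}{v}$ replaced by $\lrel{u}{v}$, and reapplying $(r)$ finishes the case, with polytree-ness of the new sequents following from that of the conclusion together with freshness.

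The main obstacle I anticipate is the bookkeeping for precisely these structure-reshaping rules $\impr$, $\excl$, $\allr$: one must check that after the branch move is commuted upward the side conditions of \emph{both} $(r)$ and the branch rule hold simultaneously, and that the polytree invariant is respected at every intermediate sequent, including the sub-cases where the principal label of $(r)$ coincides with $w$, $u$, or $v$. These all reduce to the uniqueness of undirected paths in a polytree and to $(\star)$, but they are the parts demanding the most care in the case analysis. The $\brb$ case is then obtained by the evident dualization, replacing `forward' reachability arguments by `backward' ones and swapping the roles of the side conditions $\dag_2$ and $\dag_3$.
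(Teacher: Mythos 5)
Your proposal is correct and follows essentially the same route as the paper: an induction on the height of the derivation whose engine is exactly your observation $(\star)$ (the paper's ``crucial observation'' that the branch move only extends reachability), so that reachability, availability, and freshness side conditions all survive, and one simply applies the induction hypothesis to the premises and reapplies the last rule, with $\brb$ handled dually. Your extra bookkeeping for $\impr$, $\excl$, $\allr$ and the polytree invariant is a more explicit spelling-out of what the paper leaves implicit, not a different argument.
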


\begin{proof}
By changing $\lrel{w}{v}$ to $\lrel{u}{v}$, we are effectively moving the polytree rooted at $v$ from $w$ to $u$ as indicated in the diagram below. 
\begin{center}
\begin{tikzpicture}
\draw (0,0) node[point] (w0) [label=below:{$w$}]{};
\draw (1.5,0) node[point](w1)[label=below:{$u$}]{};
\draw (0.5,0.4) node[point](w2)[label=left:{$v$}]{};
\draw (2.25,0) node {$\begin{Huge}\Rightarrow\end{Huge}$};

\path[->] (w0) edge[dashed] (w1);
\path[->] (w0) edge (w2);
\path[-] (w2) edge[dashed] (1.1,0.3);
\path[-] (w2) edge[dashed] (1,0.8);
\path[-] (1.1,0.3) edge[dashed] (1,0.8);

\draw (2.25,0) node {$\begin{Huge}\Rightarrow\end{Huge}$};

\draw (3,0) node[point] (w0) [label=below:{$w$}]{};
\draw (4.5,0) node[point](w1)[label=below:{$u$}]{};
\draw (5,0.4) node[point](w2)[label=left:{$v$}]{};

\path[->] (w0) edge[dashed] (w1);
\path[->] (w1) edge (w2);
\path[-] (w2) edge[dashed] (5.6,0.3);
\path[-] (w2) edge[dashed] (5.5,0.8);
\path[-] (5.6,0.3) edge[dashed] (5.5,0.8);
\end{tikzpicture}
\end{center}
The crucial observation is that given that $\rable{w}{u}$, this operation only extends the reachability relation: we have that for all $w'$ and $w''$, if $\rable{w'}{w''}$ in $\mathcal{R},\lrel{w}{v}$ then $\rable{w'}{w''}$ in $\mathcal{R},\lrel{u}{v}$. Consequently, we can see that $\brf$ only expands reachability, and thus does not violate the reachability or availability conditions of any of the rules. Freshness is also not impacted. So, in all cases we can simply apply the induction hypothesis on the premises, and then the rule.

A similar argument can be provided for $\brb$, where we move from the left diagram to the right one below.
\begin{center}
\begin{tikzpicture}
\draw (0,0) node[point] (w0) [label=above:{$w$}]{};
\draw (1.5,0) node[point](w1)[label=above:{$u$}]{};
\draw (1,-0.4) node[point](w2)[label=below:{$v$}]{};

\path[->] (w0) edge[dashed] (w1);
\path[->] (w2) edge (w1);
\path[-] (w2) edge[dashed] (0.4,-0.3);
\path[-] (w2) edge[dashed] (0.5,-0.8);
\path[-] (0.4,-0.3) edge[dashed] (0.5,-0.8);

\draw (2.25,0) node {$\begin{Huge}\Rightarrow\end{Huge}$};

\draw (4,0) node[point] (w0) [label=above:{$w$}]{};
\draw (5.5,0) node[point](w1)[label=above:{$u$}]{};
\draw (3.5,-0.4) node[point](w2)[label=below:{$v$}]{};

\path[->] (w0) edge[dashed] (w1);
\path[->] (w2) edge (w0);
\path[-] (w2) edge[dashed] (2.9,-0.3);
\path[-] (w2) edge[dashed] (3,-0.8);
\path[-] (2.9,-0.3) edge[dashed] (3,-0.8);
\end{tikzpicture}
\end{center}





\end{proof}



\begin{lemma}\label{lem:mrgf-mrgb-admiss}
The $\mrg$ rule is hp-admissible in $\calc(\modclass)$.
\end{lemma}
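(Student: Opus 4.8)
The plan is to establish hp-admissibility of $\mrg$ by a straightforward induction on the height of the given proof $\prf$ of the premise $\lseq{\mathcal R, \lrel{w}{u}}{\mathcal T}{\Gamma}{\Delta}$, closely paralleling the argument for $\brf$ and $\brb$ in \cref{lem:brf-brb-admiss}. I would write $\sigma$ for the label substitution sending $u$ to $w$ (so $\mathcal R(w/u) = \sigma(\mathcal R)$, etc., and note $w \neq u$, since $\lrel{w}{u}$ lies on a polytree). The first thing to record is that $\sigma(\mathcal R)$ is again a polytree: contracting the edge $\lrel{w}{u}$ of a tree yields a tree, and no self-loop is created, since the only relational atom $\sigma$ could collapse to $\lrel{w}{w}$ would be $\lrel{u}{w}$, which cannot occur in $\mathcal R$ alongside $\lrel{w}{u}$.

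The key observation --- the analogue of the ``crucial observation'' in the proof of \cref{lem:brf-brb-admiss} --- is that merging preserves reachability: for all labels $v'$, $v''$ of $\mathcal R, \lrel{w}{u}$, if $\rable{v'}{v''}$ there, then $\rable{\sigma(v')}{\sigma(v'')}$ in $\sigma(\mathcal R)$. (A path witnessing $\rable{v'}{v''}$ that traverses the edge $\lrel{w}{u}$ is simply shortened by one step under $\sigma$, as $u$ is absorbed into $w$.) From this, every side condition of the rules of $\calc(\modclass)$ is seen to be preserved by $\sigma$: the reachability conditions $\dag_1$, $\dag_5$ transfer directly; the availability conditions transfer because $X_z \subseteq X_{\sigma(z)}$ for each label $z$ (every domain atom $\lterm{v'}{x}$ contributing to $X_z$ induces $\lterm{\sigma(v')}{x}$ contributing to $X_{\sigma(z)}$); and freshness is preserved because $\sigma$ only identifies labels already present, introducing no new labels or variables.

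With this in place, the induction runs as follows. In the base case, an instance of $\ax$, $\botl$, or $\topr$ is carried by $\sigma$ to another such instance (for $\ax$, using preservation of the reachability side condition), of the same height $1$. In the inductive step, let $(r)$ be the last rule of $\prf$; since no rule of $\calc(\modclass)$ deletes relational atoms when read upward, $\lrel{w}{u}$ occurs in every premise of $(r)$, so $\mrg$ is applicable to each of them --- I would apply the induction hypothesis to the premise(s) and then re-apply $(r)$ at the merged labels, which is licensed by the preservation of side conditions just noted and yields exactly the $\sigma$-image of the conclusion of $(r)$ with $\lrel{w}{u}$ deleted, i.e.~the conclusion of $\mrg$ applied to $\prf$. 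The height is not increased. The case $\modclass = \cdclass$ is identical and, lacking domain atoms and the $\doms$ rule, slightly simpler.

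I expect the main obstacle to be the case $(r) = \ax$ --- and, when $\modclass = \idclass$, the interplay of the merge with $\doms$ and with the quantifier rules $\allli$, $\existsri$ --- because there the labels occurring in a reachability side condition, or the label carrying a domain atom shifted by $\doms$, may themselves be moved by $\sigma$; handling this requires the reachability-preservation observation in full generality, not merely the fact that the displayed principal label is untouched. Every remaining case is mechanical once that observation is available.
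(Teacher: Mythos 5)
Your proof is correct and matches the paper's approach: the paper dispatches this lemma as ``straightforward by induction on the height of the given proof,'' and your argument is exactly that induction, fleshed out with the right supporting observations (the contracted relational multiset stays a polytree, reachability—and hence the reachability, availability, and freshness side conditions—is preserved under the merge, and the edge $\lrel{w}{u}$ persists into every premise so the induction hypothesis applies). No gaps to report.
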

\begin{proof}
Straightforward by induction on the height of the given proof. 
\end{proof}

\begin{lemma}\label{lem:ndr-admiss}
 The $\idr$ rule is hp-admissible in $\calc(\modclass)$.
\end{lemma}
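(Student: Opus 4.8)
The plan is to prove this by a straightforward induction on the height of the proof $\pi$ of the premise $\lseq{\mathcal R}{\mathcal T,\lterm{w}{x},\lterm{u}{x}}{\Gamma}{\Delta}$ (where $\rable{w}{u}$), transforming it into a proof of the conclusion $\lseq{\mathcal R}{\mathcal T,\lterm{w}{x}}{\Gamma}{\Delta}$ of no greater height. In the base case $\pi$ is a single instance of $\ax$, $\botl$, or $\topr$, and since the side conditions of these rules constrain only relational atoms and labeled formulae (never domain atoms), deleting $\lterm{u}{x}$ leaves an instance of the same initial rule. I would then treat $\calc(\idclass)$ as the main case and remark that $\calc(\cdclass)$ is analogous but simpler, the availability side conditions being absent there (cf.\ Remark~\ref{rmk:ncd}).

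The one non-routine point, which I expect to be the crux of the argument, is the following invariance: deleting $\lterm{u}{x}$ from a domain-atom multiset containing $\lterm{w}{x}$ does \emph{not} change the set $X_v = \{x' \mid \lterm{v'}{x'}\in\mathcal T,\lterm{w}{x},\lterm{u}{x}\text{ and }\rable{v'}{v}\}$ for \emph{any} label $v$. Indeed, if $x'\in X_v$ witnessed by the deleted atom $\lterm{u}{x}$, then $x'=x$ and $\rable{u}{v}$, so by $\rable{w}{u}$ and transitivity of $\twoheadrightarrow^{*}_{\R}$ we get $\rable{w}{v}$, and hence $x\in X_v$ is still witnessed by $\lterm{w}{x}$ after the deletion. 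The plan is to isolate this as a small auxiliary observation, from which it follows that every availability side condition $\dag_4$ (for $\existsri$) and $\dag_5$ (for $\allli$) appearing in $\pi$ is preserved under the transformation, while the purely relational side conditions $\dag_1$, $\dag_2$, $\dag_3$ are plainly unaffected. I would also invoke Lemma~\ref{lem:rable-preserved-up}(1) to note that $\rable{w}{u}$ itself persists upward through rule applications, so the induction hypothesis is always applicable.

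With that observation in hand, each inductive case reduces to a one-line permutation: apply the induction hypothesis to the premise(s) of the last rule $(r)$ to delete $\lterm{u}{x}$, then reapply $(r)$. I would flag three minor bookkeeping points. For $(r)\in\{\existsl,\allr\}$, which introduce a fresh variable $y$, freshness with respect to the conclusion of $(r)$ forces $y\neq x$ (since $x$ occurs there), so the fresh domain atom $\lterm{v}{y}$ is untouched and remains fresh for the reapplied rule. For $(r)\in\{\impr,\excl,\allr\}$, which introduce a fresh label, the added relational atom only extends $\R$, so Lemma~\ref{lem:rable-preserved-up}(1) still applies. For $(r)=\doms$, the added domain atoms $\lterm{v}{\VT{\vec t}}$ in the premise may themselves mention $x$, but this causes no trouble since we delete only the designated occurrence $\lterm{u}{x}$. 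Finally, the degenerate subcase $w=u$ makes $\idr$ a contraction on a domain atom, and the same argument goes through verbatim. Apart from the $X_v$-invariance observation, the proof should be entirely routine bookkeeping.
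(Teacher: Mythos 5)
Your proposal is correct and follows essentially the same route as the paper's proof: a height-preserving induction in which one applies the induction hypothesis to the premises and reapplies the last rule, justified by the observation that deleting $\lterm{u}{x}$ affects neither reachability, freshness, nor availability since $\rable{w}{u}$ lets $\lterm{w}{x}$ re-witness any membership in $X_v$. Your write-up merely makes explicit (the $X_v$-invariance, the fresh-variable/label cases, the $\doms$ and $w=u$ subcases) what the paper states summarily.
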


\begin{proof}
This lemma is straightforwardly proved by induction on the height of proofs. To realize that we can simply apply the induction hypothesis on the premises of the rules, and then reapply the rule, it suffices to note that the deletion of $\lterm{u}{x}$ does not impact reachability, freshness or even availability given that $\rable{w}{u}$. So, any rule is reapplicable once we use the induction hypothesis.
\end{proof}

\begin{lemma}\label{lem:psub-admiss}
 The $\psub$ rule is hp-admissible in $\calc(\modclass)$.
\end{lemma}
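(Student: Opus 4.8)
The plan is to proceed by induction on the height of the given proof $\prf$ of the premise $\lseq{\mathcal R}{\mathcal T}{\Gamma}{\Delta}$, producing a proof of $\lseq{\mathcal R}{\mathcal T(t/x)}{\Gamma(t/x)}{\Delta(t/x)}$ of no greater height. The key structural observation is that a term substitution $(t/x)$ never touches the relational atoms $\mathcal R$: the underlying polytree is unchanged, and consequently every reachability assertion $\rable{w}{u}$ occurring in a side condition is preserved verbatim. Thus the only side conditions that genuinely interact with $(t/x)$ are the availability conditions $\dag_{4}$, $\dag_{5}$ and the freshness conditions $\dag_{2}$, $\dag_{3}$, $\dag_{6}$.

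For the base case, if $\prf$ is a single instance of $\ax$, $\botl$, or $\topr$, then since $(t/x)$ maps atomic formulae to atomic formulae, $\bot$ to $\bot$, $\top$ to $\top$, and leaves $\mathcal R$ intact, the substituted sequent is again an instance of the same initial rule. For the inductive step, let $(r)$ be the last rule of $\prf$. When $(r)$ is a propositional rule, one of the reachability rules, or $\doms$, the substitution $(t/x)$ commutes with $(r)$, so I would apply the inductive hypothesis to the premise(s) and reapply $(r)$; it then remains to check side conditions. Reachability conditions survive because $\mathcal R$ is unchanged. For availability, I would observe that $(t/x)$ replaces each $\lterm{u}{x} \in \mathcal T$ by the atoms $\{\lterm{u}{z} \mid z \in \VT{t}\}$ and leaves all other domain atoms fixed, so the availability set $X_w$ of the conclusion transforms into a set $X_w'$ containing $(X_w \setminus \{x\}) \cup \VT{t}$; hence $t' \in \termset(X_w)$ implies $t'(t/x) \in \termset(X_w')$, and $\avail{t'}{X_{w}}{\R,\T}$ yields $\avail{t'(t/x)}{X_{w}}{\R,\T(t/x)}$ relative to the substituted sequent. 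The same computation shows $(t/x)$ sends $\lterm{w}{\VT{\vec{t}}}$ exactly to $\lterm{w}{\VT{\vec{t}(t/x)}}$, so a premise of $\doms$ is sent to a correct premise of $\doms$.

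The genuinely delicate cases are $\existsl$, $\allr$, $\impr$, and $\excl$, whose premises introduce (bottom-up) a fresh eigenvariable $y$ and/or a fresh label $u$: a naive application of $(t/x)$ could capture $y$ when $y \in \VT{t}$ (or clash with $x$). The plan is to first rename, throughout the subproof above $(r)$, the eigenvariable $y$ to some $y'$ fresh for $t$, $x$, and the conclusion (and, for $\impr$, $\excl$, $\allr$, the eigenlabel $u$ to a fresh $u'$). Such renamings to fresh variables and labels are height-preserving and introduce no capture; they may be justified by a routine preliminary renaming lemma or by the convention that eigenvariables in a proof are chosen sufficiently fresh. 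After the renaming, the new eigenvariable atom $\lterm{u'}{y'}$ is untouched by $(t/x)$, the freshness conditions $\dag_{2}$, $\dag_{3}$, $\dag_{6}$ are re-established by the choice of $y'$, $u'$, and availability is handled as above, so $(t/x)$ again commutes with $(r)$ and we conclude by the inductive hypothesis followed by $(r)$.

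I expect the main obstacle to be precisely this last point: making it precise that the eigenvariable renaming needed to avoid capture is itself height-preserving admissible, and that after substitution the resulting sequent is a legitimate instance of $(r)$ whose premise(s) coincide with what the inductive hypothesis delivers. Everything else is routine bookkeeping resting on the invariance of $\mathcal R$—and hence of the polytree structure and of reachability—under $(t/x)$.
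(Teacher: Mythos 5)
Your proposal is correct and follows essentially the same route as the paper's proof: induction on the height of the given proof, with the substitution commuting with every rule, reachability conditions preserved because $\R$ is untouched, availability for $\alll$/$\existsr$ handled by the case split on whether $x$ occurs in the instantiated term, and capture in the eigenvariable rules $\existsl$/$\allr$ avoided by renaming to a variable fresh for $t$ and $x$. The only presentational difference is that the paper needs no separate renaming lemma or freshness convention: since the premise's proof is shorter, the renaming $(y'/y)$ is itself an instance of the induction hypothesis, which is then applied a second time with $(t/x)$ before reapplying the rule.
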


\begin{proof}
We reason by induction on the height of proofs, and consider the last rule applied. 

Obviously, the propositional rules are not impacted by the substitution. So, for these rules we simply need to apply the induction hypothesis on the premises of the rule, and then reapply the rule.

The rule $\doms$ is treated straightforwardly: we apply the induction hypothesis on the proof of 
$\lseq{\mathcal R}{\mathcal T, \lterm{w}{\VT{\vec{t'}}}}{\fsa, \lform{w}{p(\vec{t'})}}{\fsb}$ to obtain a proof of $\lseq{\mathcal R}{\mathcal T'}{\fsa(t/x), \lform{w}{p(\vec{t'})(t/x)}}{\fsb(t/x)}$ where $\mathcal T'=(\mathcal T, \lterm{w}{\VT{\vec{t'}}})(t/x)$. Note that $\mathcal T'$ is such that any domain atom $\lterm{u}{x}$ is replaced by $\lterm{u}{x_0},\dots,\lterm{u}{x_m}$ where $x_0,\dots,x_m$ are the variables appearing in $t$. Consequently, all the variables appearing in the terms $\vec{t'}(t/x)$ are labeled by $w$ in $\mathcal T'$. Consequently, we can apply the rule $\doms$ to obtain a proof of our goal, i.e.~$\lseq{\mathcal R}{\mathcal T(t/x)}{\fsa(t/x), w : p(\vec{t'})(t/x)}{\fsb(t/x)}$.

The cases of rules for quantifiers deserve more attention. We show the cases of $\existsl$ and $\allli$, as the other rules can be treated in a similar way.

$\existsl:$ we need to obtain a proof of $\lseq{\mathcal R(t/x)}{\mathcal T(t/x)}{\Gamma(t/x),\lform{w}{(\exists z\phi)(t/x)}}{\Delta(t/x)}$, where we can safely assume that $z$ is different from $x$ and any variable appearing in $t$. First, we apply the induction hypothesis on the proof of $\lseq{\mathcal R}{\mathcal T,\lterm{w}{y}}{\Gamma,\lform{w}{\phi(y/z)}}{\Delta}$, the premise of the rule, to obtain a proof of no greater height of $\lseq{\mathcal R}{\mathcal T(y'/y),\lterm{w}{y(y'/y)}}{\Gamma(y'/y),\lform{w}{\phi(y/z)(y'/y)}}{\Delta(y'/y)}$ with $y'$ fresh and not appearing in $t$ or $x$. Given that $y$ was fresh, we have a proof of no greater height the sequent $\lseq{\mathcal R}{\mathcal T,\lterm{w}{y'}}{\Gamma,\lform{w}{\phi(y'/z)}}{\Delta}$. We can apply the induction hypothesis again here to obtain a proof of no greater height of $\lseq{\mathcal R}{\mathcal T(t/x),\lterm{w}{y'}(t/x)}{\Gamma(t/x),\lform{w}{\phi(y'/z)(t/x)}}{\Delta(t/x)}$. Because of our choice of $y'$, we have that the latter sequent is equal to $\lseq{\mathcal R}{\mathcal T(t/x),\lterm{w}{y'}}{\Gamma(t/x),\lform{w}{\phi(t/x)(y'/z)}}{\Delta(t/x)}$. Thus, we can reapply the rule $\existsl$ on the latter to obtain a proof of $\lseq{\mathcal R(t/x)}{\mathcal T(t/x)}{\Gamma(t/x),\lform{w}{(\exists z\phi)(t/x)}}{\Delta(t/x)}$.

$\allli:$ we need a proof of $\lseq{\mathcal R}{\mathcal T(t/x)}{\Gamma(t/x),\lform{w}{(\forall z\phi)(t/x)}}{\Delta(t/x)}$ of no greater height, where we can safely assume that $z$ is different from $x$ and any variable appearing in $t$. We apply the induction hypothesis on the proof of $\lseq{\mathcal R}{\mathcal T}{\Gamma,\lform{w}{\forall z\phi},\lform{u}{\phi(t'/z)}}{\Delta}$, the premise of the rule, to obtain a proof of no greater height of $\lseq{\mathcal R}{\mathcal T(t/x)}{\Gamma,\lform{w}{(\forall z\phi)(t/x)},\lform{u}{\phi(t'/z)(t/x)}}{\Delta(t/x)}$. Note that the latter is equal to $\lseq{\mathcal R}{\mathcal T(t/x)}{\Gamma,\lform{w}{(\forall z\phi)(t/x)},\lform{u}{\phi(t'(t/x)/z)}}{\Delta(t/x)}$. Clearly, in this sequent we still have that $\rable{w}{u}$. In addition to that, we have that $t'(t/x)$ is available for $u$. We can argue this point by a case distinction on the presence or not of $x$ in $t'$. If $x$ does not appear in $t'$, then $t'(t/x)=t'$, which is available for $u$ as initially given. If $x$ does appear in $t'$, then we have that there must be a $v$ such that $\lterm{v}{x}\in\T$ and $\rable{v}{u}$ else $t'$ would not be available in $u$. Consequently, we have that $\lterm{v}{x}$ is replaced by $\{\lterm{v}{x'}\mid x'\text{ appears in }t\}$ in $\T(t/x)$. This makes $t'(t/x)$ available for $u$. So, as $\rable{w}{u}$ and $t'(t/x)$ is available to $u$ we can apply $\allli$ to obtain a proof of $\lseq{\mathcal R}{\mathcal T(t/x)}{\Gamma(t/x),\lform{w}{(\forall z\phi)(t/x)}}{\Delta(t/x)}$ of no greater height than the proof initially considered.
\end{proof}

\begin{lemma}\label{lem:iwk-admiss}
 The $\iwk$ rule is hp-admissible in $\calc(\modclass)$.
\end{lemma}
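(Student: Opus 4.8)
The plan is to prove hp-admissibility of $\iwk$ by a straightforward induction on the height of the given proof of the premise $\lseq{\mathcal R}{\mathcal T}{\Gamma}{\Delta}$, following the same pattern used for the other structural rules in this section (e.g., Lemmas~\ref{lem:mrgf-mrgb-admiss} and~\ref{lem:ndr-admiss}). The key observation is that adding extra labeled formulae $\Sigma$ on the left and $\Pi$ on the right—\emph{all labeled with labels already occurring in the sequent}—does not alter the relational atoms $\mathcal R$, the domain atoms $\mathcal T$, nor the set of labels, and hence leaves the polytree structure, the reachability relation $\twoheadrightarrow^{*}_{\mathcal R}$, and every availability set $X_w$ untouched. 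Consequently, none of the side conditions $\dag_1$–$\dag_6$ on any rule of $\calc(\modclass)$ can be invalidated by the weakening, and freshness conditions are unaffected since no new labels or variables are introduced.

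First I would handle the base case: if $\lseq{\mathcal R}{\mathcal T}{\Gamma}{\Delta}$ is an instance of an initial rule ($\ax$, $\botl$, $\topr$), then so is $\lseq{\mathcal R}{\mathcal T}{\Gamma,\Sigma}{\Delta,\Pi}$, since the principal formulae of these rules (the atom $p(\vec t)$ on both sides, or $w:\bot$ on the left, or $w:\top$ on the right) still occur, and the reachability side condition $\dag_1$ for $\ax$ is preserved as just noted. For the inductive step I would do a case distinction on the last rule $(r)$ applied. For each rule, I apply the induction hypothesis to the premise(s)—weakening with $\Sigma$ and $\Pi$ (or an appropriate relabeling thereof when $(r)$ introduces a fresh label, as in $\impr$, $\excl$, $\existsl$, $\allr$, where the weakening formulae are simply carried along unchanged since they are labeled with old labels)—and then reapply $(r)$. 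The contexts match because the active formulae of $(r)$ are disjoint from $\Sigma,\Pi$ and because, as argued, all side conditions still hold. A minor point worth mentioning explicitly is that for rules adding a domain atom or relational atom to the premise (e.g.\ $\doms$, $\impr$, $\allr$), the weakening formulae $\Sigma,\Pi$ are simply threaded through, and the new atoms do not interfere with them.

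The only case requiring any genuine care is the one where $(r)$ is itself a reachability rule whose \emph{auxiliary} formula happens to coincide, after weakening, with a formula we have introduced—but this cannot actually happen in a way that causes trouble, since $\Sigma$ and $\Pi$ are fixed multisets added uniformly to conclusion and premise alike, so the induction hypothesis yields exactly the premise needed with $\Sigma,\Pi$ adjoined, and reapplying $(r)$ recovers the conclusion with $\Sigma,\Pi$ adjoined. Thus I do not expect any real obstacle; the lemma is a routine structural-weakening argument, and the proof can be stated as: ``Straightforward by induction on the height of the given proof, noting that adjoining $\Sigma$ and $\Pi$ affects neither the label set, the polytree structure, the reachability relation, nor any availability set, so every rule remains applicable.''

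\begin{proof}
Straightforward by induction on the height of the given proof of $\lseq{\mathcal R}{\mathcal T}{\Gamma}{\Delta}$. The key point is that the multisets $\Sigma$ and $\Pi$ consist of labeled formulae whose labels already occur in the sequent, so adjoining them changes neither $\mathcal R$, nor $\mathcal T$, nor $\lab(\mathcal R,\mathcal T,\Gamma,\Delta)$. Hence the polytree structure, the reachability relation $\twoheadrightarrow^{*}_{\mathcal R}$, and every availability set $X_w$ (see \dfn~\ref{def:available}) are unaffected, and likewise all freshness conditions are preserved since no new label or variable is introduced.

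In the base case, an instance of $\ax$, $\botl$, or $\topr$ with conclusion $\lseq{\mathcal R}{\mathcal T}{\Gamma}{\Delta}$ remains an instance of the same rule after adding $\Sigma$ to the antecedent and $\Pi$ to the consequent, the reachability side condition $\dag_{1}$ of $\ax$ being preserved as just noted.

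In the inductive step, let $(r)$ be the last rule applied. We apply the induction hypothesis to the premise(s) of $(r)$, weakening each with $\Sigma$ on the left and $\Pi$ on the right (these multisets being threaded through unchanged, even when $(r)$ introduces a fresh label or a fresh relational/domain atom, as in $\impr$, $\excl$, $\existsl$, $\allr$, $\doms$), and then reapply $(r)$. This is possible because the active formulae of $(r)$ are not among those in $\Sigma,\Pi$, and, as argued above, none of the side conditions $\dag_{1}$–$\dag_{6}$ can be invalidated by the weakening. The resulting proof has the same height as the original, establishing hp-admissibility.
\end{proof}
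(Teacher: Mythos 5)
Your argument is essentially right for every rule except the two with an eigenvariable, and it is exactly there that your key claim fails. You assert that ``all freshness conditions are preserved since no new label or variable is introduced,'' but the direction of the problem is the reverse: the weakening multisets $\Sigma,\Pi$ are arbitrary labeled formulae, so they may contain free occurrences of the very variable $y$ that the last rule $\existsl$ (or $\allr$) used as its fresh eigenvariable. If the proof of $\lseq{\mathcal R}{\mathcal T}{\Gamma}{\Delta}$ ends with $\existsl$ with premise $\lseq{\mathcal R}{\mathcal T,\lterm{w}{y}}{\Gamma_0,\lform{w}{\phi(y/x)}}{\Delta}$ and $y\in\FV{\Sigma}$, then applying the induction hypothesis to the premise and reapplying $\existsl$ is blocked: the side condition $\dag_{2}$ (resp.\ $\dag_{6}$) requires $y$ not to occur in the new conclusion $\lseq{\mathcal R}{\mathcal T}{\Gamma_0,\lform{w}{\exists x\phi},\Sigma}{\Delta,\Pi}$, which now fails. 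So the ``straightforward IH-then-reapply'' step does not go through in these two cases as written. (Your observation about labels is fine: since the conclusion of $\iwk$ must be a polytree sequent, $\lab(\Sigma,\Pi)\subseteq\lab(\mathcal R)$, so the fresh labels of $\impr$, $\excl$, $\allr$ cannot clash; the clash is only at the level of variables.)

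The repair is small but essential, and it is precisely what the paper's proof does: in the $\existsl$ and $\allr$ cases, first use the hp-admissibility of the substitution rule $\psub$ (Lemma~\ref{lem:psub-admiss}) on the premise's proof to rename the eigenvariable $y$ to a variable that is fresh also with respect to $\Sigma$ and $\Pi$, then apply the induction hypothesis (i.e.\ weaken), and only then reapply $\existsl$ or $\allr$. Since $\psub$ is height-preserving, the whole transformation remains height-preserving. With this renaming step added, the rest of your case analysis (initial rules and all other logical rules, where reachability, availability, and the polytree structure are indeed untouched) is correct and matches the paper's argument.
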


\begin{proof}
With the exception of the $\existsl$ or $\allr$ rules, the $\iwk$ rule permutes above all other rules of $\calc(\modclass)$. In the $\existsl$ and $\allr$ cases, we can apply Lemma \ref{lem:psub-admiss} to the premise of each rule to ensure that the freshness condition holds, then apply $\iwk$, and last, apply the $\existsl$ or $\allr$ rule, respectively. Thus, the $\iwk$ rule can be permuted upward in any $\calc(\modclass)$ proof (in a height-preserving manner) and eliminated at initial sequents, showing the rule hp-admissible.
\end{proof}

\begin{lemma}\label{lem:wkv-admiss}
 The $\wkv$ rule is hp-admissible in $\calc(\modclass)$.
\end{lemma}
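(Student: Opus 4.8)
The plan is to proceed by induction on the height of the given $\calc(\modclass)$ proof $\pi$ of $\lseq{\mathcal R}{\mathcal T}{\Gamma}{\Delta}$, closely following the pattern of Lemma~\ref{lem:iwk-admiss}. Since a well-formed polytree sequent with nonempty $\mathcal R$ satisfies $\lab(\mathcal T,\Gamma,\Delta)\subseteq\lab(\mathcal R)$ (and has a unique label when $\mathcal R=\emptyset$), we have $w\in\lab(\lseq{\mathcal R}{\mathcal T}{\Gamma}{\Delta})$, so $\lseq{\mathcal R}{\mathcal T,\lterm{w}{x}}{\Gamma}{\Delta}$ is again a polytree sequent. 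In the base case, if $\pi$ is a single instance of $\ax$, $\botl$, or $\topr$, then $\lseq{\mathcal R}{\mathcal T,\lterm{w}{x}}{\Gamma}{\Delta}$ is an instance of the same rule, as these rules place no constraint on $\mathcal T$ (the side condition of $\ax$ concerns only reachability in $\mathcal R$).

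For the inductive step, let $(r)$ be the last rule of $\pi$. Three observations drive the argument: the reachability side conditions ($\dag_{1},\dag_{2},\dag_{3}$ of Figure~\ref{fig:nested-calculi}) depend only on $\mathcal R$ and are therefore untouched; the availability side conditions ($\dag_{4},\dag_{5}$) are positive requirements, and since adjoining $\lterm{w}{x}$ to $\mathcal T$ can only enlarge the sets $X_{u}$ of Definition~\ref{def:available}, any availability that held in the conclusion of $(r)$ still holds; and finally, the fresh labels introduced (reading bottom-up) by $\impr$, $\excl$, and $\allr$ cannot coincide with the pre-existing label $w$, so only the freshness of the \emph{eigenvariable} introduced by $\existsl$ or $\allr$ can be threatened, and only when it happens to equal $x$. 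Hence, for every rule $(r)$ other than $\existsl$ and $\allr$ we simply apply the induction hypothesis to each premise to insert $\lterm{w}{x}$ and then reapply $(r)$; in the case of $\doms$ this amounts to inserting $\lterm{w}{x}$ above the domain atoms that $\doms$ itself introduces, which is unproblematic.

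When $(r)\in\{\existsl,\allr\}$, if the eigenvariable is already distinct from $x$ we proceed as above. Otherwise, we first apply the hp-admissibility of $\psub$ (Lemma~\ref{lem:psub-admiss}) to the subproof of the premise to rename the eigenvariable to some $y'$ with $y'\neq x$ and $y'$ not occurring in the conclusion of $(r)$; since the original eigenvariable was fresh, this renaming leaves $\mathcal T$, $\Gamma$, and $\Delta$ unchanged and preserves the height of the subproof. We then apply the induction hypothesis to add $\lterm{w}{x}$ and reapply $(r)$ with eigenvariable $y'$, whose freshness now holds with respect to the weakened conclusion. In all cases the height is not increased, so $\wkv$ is hp-admissible. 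The only point requiring care---the main (if mild) obstacle---is exactly this interaction with the eigenvariable freshness conditions of the quantifier rules, handled, as in Lemma~\ref{lem:iwk-admiss}, by a preliminary application of $\psub$.
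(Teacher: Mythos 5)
Your proposal is correct and follows essentially the same route as the paper: observe that adding a domain atom $\lterm{w}{x}$ cannot disturb reachability or availability side conditions and can only threaten the eigenvariable freshness of $\existsl$ or $\allr$, and in that case first rename the eigenvariable via the hp-admissible $\psub$ (Lemma~\ref{lem:psub-admiss}) before applying the induction hypothesis and reapplying the rule. The extra justifications you give (that $X_{u}$ only grows, and that fresh labels cannot clash with the pre-existing $w$) are details the paper leaves implicit but do not change the argument.
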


\begin{proof}
The addition of a further labeled variable can only affect the freshness condition of an $\existsl$ or $\allr$ rule. In such cases, we apply Lemma \ref{lem:psub-admiss}, then $\wkv$, and last, the $\existsl$ or $\allr$ rule, respectively, to permute $\wkv$ above the $\existsl$ or $\allr$ instance. In all other cases the $\wkv$ rule freely permutes above the rule. Thus, the $\wkv$ rule can be permuted upward in any $\calc(\modclass)$ proof (in a height-preserving manner) and eliminated at initial sequents, showing the rule hp-admissible.
\end{proof}

\begin{lemma}\label{lem:botr-topl-admiss}
 The rules $\botr$ and $\topl$ are hp-admissible in $\calc(\modclass)$.
\end{lemma}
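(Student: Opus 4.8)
The plan is to prove that the rules $\botr$ and $\topl$ are hp-admissible in $\calc(\modclass)$ by a routine induction on the height of the given proof, simultaneously for both rules (since they are dual and the arguments are symmetric). I will present the argument for $\botr$ in detail; the case of $\topl$ is entirely analogous, with the roles of antecedent and consequent swapped. So suppose we are given a proof $\prf$ of $\lseq{\mathcal R}{\mathcal T}{\Gamma}{\Delta,\lform{w}{\bot}}$ of height $h$; I want to produce a proof of $\lseq{\mathcal R}{\mathcal T}{\Gamma}{\Delta}$ of height at most $h$.

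For the base case, $\prf$ ends in an initial rule. If it is an instance of $\ax$, then the principal atoms $p(\vec t)$ are not $\bot$, so deleting $\lform{w}{\bot}$ from the consequent leaves an instance of $\ax$. If it is an instance of $\topr$, the principal formula $\lform{v}{\top}$ cannot be $\lform{w}{\bot}$, so the conclusion with $\lform{w}{\bot}$ removed is still an instance of $\topr$. If it is an instance of $\botl$, the principal formula is in the antecedent and is untouched, so again the result is an instance of $\botl$. For the inductive step, let $(r)$ be the last rule of $\prf$. The key observation is that $\lform{w}{\bot}$ can never be the principal formula of $(r)$: there is no right rule for $\bot$ in $\nid$ (it only appears in $\botl$ as a left-principal formula), so $\lform{w}{\bot}$ occurs in $\Delta$ purely as a side formula, and by Lemma~\ref{lem:rable-preserved-up}(2)-style reasoning (or just direct inspection) it is carried unchanged into every premise of $(r)$. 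Hence in each premise I apply the induction hypothesis to delete $\lform{w}{\bot}$, and then reapply $(r)$; this is possible because removing a consequent formula affects neither the reachability conditions ($\dag_1,\dag_5$), the availability conditions ($\dag_4,\dag_5$), nor the freshness conditions ($\dag_2,\dag_3,\dag_6$) of any rule of $\calc(\modclass)$, and it preserves the polytree structure. The height does not increase. The only mild care needed is for the two-premise rules ($\disl$, $\conr$, $\impl$, $\excr$) where $\lform{w}{\bot}$ may sit in the consequent of both premises or be genuinely shared; in all these cases it is simply a passive side formula and the induction hypothesis applies to each premise independently.

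The argument for $\topl$ is the mirror image: $\lform{w}{\top}$ never appears as the principal formula of any rule except $\topr$ (where it is right-principal and hence distinct from a left occurrence), it is always passed passively to the premises of any other rule, and deleting it from the antecedent preserves all side conditions and the polytree shape. I do not expect any genuine obstacle here; this lemma is as routine as the hp-admissibility of $\iwk$ and $\wkv$ already established, and indeed it can alternatively be derived from those: $\botr$ is derivable from $\iwk$ together with the observation that $\lseq{\mathcal R}{\mathcal T}{\Gamma}{\Delta,\lform{w}{\bot}}$ and $\lseq{\mathcal R}{\mathcal T}{\Gamma}{\Delta}$ differ only by a weakening-style manipulation of a formula with no introduction rule. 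The one point worth stating explicitly in the writeup is the absence of a right-$\bot$ (resp. left-$\top$) rule, which is what makes the "never principal" claim immediate and drives the whole induction.
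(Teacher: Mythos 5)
Your main argument is correct and is essentially the paper's own proof: a height-preserving induction on the given proof, driven by the observation that $\lform{w}{\bot}$ (resp.\ $\lform{w}{\top}$) is never principal and that deleting it alters no reachability, availability, or freshness condition and preserves the polytree shape, so one applies the induction hypothesis to the premises and reapplies the last rule. One caveat: your closing aside that $\botr$ ``can alternatively be derived from $\iwk$'' does not hold up---hp-admissibility of $\iwk$ only lets you \emph{add} formulae, never delete them, so that shortcut cannot replace the induction; since it is offered only as an aside, the proof itself stands.
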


\begin{proof}
We focus on $\botr$ as $\topl$ can be treated dually. By inspecting the rules, one can see that the deletion of $\lform{w}{\bot}$ in the consequent of a sequent impacts the application of no rule. In particular, this deletion does not alter the reachability, availability, or freshness conditions. As a consequence, a straightforward induction on the structure of the proof is sufficient to prove this statement: in the inductive cases, apply the induction hypothesis on the last rule applied, and then the rule.
\end{proof}

\begin{lemma}\label{lem:lwr-admiss}
 The $\lwr$ rule is hp-admissible in $\calc(\modclass)$.
\end{lemma}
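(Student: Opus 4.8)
The plan is to argue by induction on the height of the proof $\prf$ of the premise $\lseq{\mathcal R}{\mathcal T}{\Gamma}{\Delta,\lform{w}{\Pi}}$ of $\lwr$, showing that one can build a proof of no greater height of $\lseq{\mathcal R}{\mathcal T}{\Gamma}{\Delta,\lform{u}{\Pi}}$; in other words, $\lwr$ is permuted upward through $\prf$. Two facts keep almost every case routine: (i) $\rable{w}{u}$ holds in $\mathcal R$ --- it is the side condition on the conclusion --- hence it holds in every sequent of $\prf$ by \cref{lem:rable-preserved-up}, so the induction hypothesis remains applicable as one moves up the proof; and (ii) relabeling a formula in the consequent alters neither $\mathcal R$ nor $\mathcal T$, so it cannot violate any reachability, availability, or freshness side condition of a rule of $\calc(\modclass)$. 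For the base case, if $\prf$ is a single instance of $\botl$, $\topr$, or $\ax$, then the relabeled sequent is again such an instance; the only point to check is $\ax$, where if the relabeled atom $\lform{w}{p(\vec t)}$ is matched by an antecedent atom $\lform{v}{p(\vec t)}$ with $\rable{v}{w}$, then $\rable{v}{u}$ follows from $\rable{v}{w}$, $\rable{w}{u}$, and transitivity of $\twoheadrightarrow^{*}_{\mathcal R}$.

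For the inductive step, let $(r)$ be the last rule of $\prf$. If no formula of $\lform{w}{\Pi}$ is principal in $(r)$, then all of $\Pi$ still occurs labeled with $w$ in each premise of $(r)$; I would apply the induction hypothesis to each premise to relabel $\lform{w}{\Pi}$ to $\lform{u}{\Pi}$, and then reapply $(r)$, which is still a correct instance by fact (ii). Otherwise some $\chi \in \Pi$ is principal in $(r)$; then $\Pi = \Pi',\chi$, and since $\chi$ lies in the consequent we must have $(r) \in \{\disr, \conr, \excr, \existsri, \allr, \impr\}$. For $(r) \in \{\disr,\conr\}$ the auxiliary formulae are again labeled with $w$, so I relabel $\Pi'$ together with those auxiliaries to $u$ via the induction hypothesis on the premise(s) and reapply $(r)$ at $u$. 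For $(r) = \excr$, with side condition $\rable{w_0}{w}$ and $w$ the label of the principal $\chi$, I relabel $\Pi'$ and $\lform{w}{\chi}$ to $u$ in both premises and reapply $\excr$ at $u$; this is legal because $\rable{w_0}{w}$ and $\rable{w}{u}$ give $\rable{w_0}{u}$. For $(r) = \existsri$, with auxiliary $\lform{w}{\phi(t/x)}$ and side condition $\avail{t}{X_w}{\mathcal R,\mathcal T}$, I relabel $\Pi'$, $\lform{w}{\chi}$, and $\lform{w}{\phi(t/x)}$ to $u$ and reapply $\existsri$ at $u$ with the same term $t$: this works because $\rable{w}{u}$ entails $X_w \subseteq X_u$, so $\avail{t}{X_w}{\mathcal R,\mathcal T}$ yields $\avail{t}{X_u}{\mathcal R,\mathcal T}$.

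The real work is the cases $(r) = \impr$ and $(r) = \allr$, where $(r)$ creates the auxiliary formula together with a relational atom $\lrel{w}{w'}$ for a fresh $w'$, so that the $\imp$/$\forall$-residue sits at $w'$; yet we must re-introduce $\chi$ at $u$, which requires the edge $\lrel{u}{w'}$ in place of $\lrel{w}{w'}$. The plan here is: first apply the hp-admissible rule $\brf$ (\cref{lem:brf-brb-admiss}) to the premise of $(r)$ to replace $\lrel{w}{w'}$ by $\lrel{u}{w'}$, which is permitted since its side conditions $\rable{w}{u}$ and $\notrable{u}{w'}$ both hold, the latter because $w'$ is fresh; then apply the induction hypothesis to relabel $\lform{w}{\Pi'}$ to $\lform{u}{\Pi'}$ (the residue of $\chi$ sits at $w'$, not $w$, so it is untouched); and finally reapply $(r)$ at $u$ along the edge $\lrel{u}{w'}$, with $w'$ --- and, in the $\allr$ case, the eigenvariable $y$ --- still fresh in the new conclusion, since relabeling introduces only the preexisting label $u$. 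The height bookkeeping is then immediate: $\brf$ is height-preserving, the induction hypothesis returns a proof of no greater height, and reapplying $(r)$ adds exactly one, matching the height of the original proof of the $\lwr$-premise. I expect this $\impr$/$\allr$ step to be the main obstacle, as it is the one point where the polytree structure is actually disturbed; the role of $\brf$ is precisely to relocate the dangling edge from $w$ to $u$ so that the connective can be re-inferred at $u$ without breaking the polytree shape.
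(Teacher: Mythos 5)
Your proposal is correct and follows essentially the same route as the paper's proof: induction on proof height, with transitivity of reachability handling $\ax$, $\excr$ and the availability transfer $X_w \subseteq X_u$ handling $\existsri$, and the hp-admissible $\brf$ rule (Lemma~\ref{lem:brf-brb-admiss}) relocating the fresh edge so that $\impr$ and $\allr$ can be re-applied at $u$. The only cosmetic difference is that you treat the multiset $\Pi$ explicitly and cite Lemma~\ref{lem:rable-preserved-up} for the upward persistence of $\rable{w}{u}$, which the paper leaves implicit.
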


\begin{proof}
We reason by induction on the height of proofs, and consider the last rule applied.

Obviously, the ``local'' propositional rules (which do not involve a condition on reachability) and $\doms$ are not impacted by the modification of the label, whether the principal formula is labeled by one of the labels under focus or not. For $\excl$ we easily reach our goal by applying the induction hypothesis on the premise and then the rule as the principal formula cannot be the one under focus.
However, the other non-local rules require more care.

$\ax:$ then we have a proof of $\lseq{\mathcal R}{\mathcal T}{\Gamma,\lform{w'}{p(\vec{t})}}{\Delta,\lform{u'}{p(\vec{t})}}$ where $\rable{w'}{u'}$. We proceed by case distinction on the equality between $w$ and $u'$. If $w=u'$, then we need to prove $\lseq{\mathcal R}{\mathcal T}{\Gamma,\lform{w'}{p(\vec{t})}}{\Delta,\lform{u}{p(\vec{t})}}$. Now, note that we have $\rable{w'}{u'}$ and $\rable{u'}{u}$ by assumption as $w=u'$. Consequently, we have that $\rable{w'}{u}$ by transitivity. An application of the rule $\ax$ leads us to our goal. If $w\neq u'$, then we can simply reapply the rule as the two labeled formulae are not modified.

$\impr:$ then the last rule has the following form.
\begin{center}
\AxiomC{$\lseq{\mathcal R,\lrel{w'}{u'}}{\mathcal T}{\fsa,\lform{u'}{\phi}}{\fsb,\lform{u'}{\psi}}$}
\RightLabel{$\impr$}
\UnaryInfC{$\lseq{\mathcal R}{\mathcal T}{\fsa}{\fsb,\lform{w'}{\phi\imp\psi}}$}
\DisplayProof
\end{center}
If the principal formula is the labeled formula we intend to modify, then we have that $w'=w$.
As we have $\rable{w}{u}$ and not $\rable{u}{u'}$ because of the freshness of $u'$, we can use Lemma \ref{lem:brf-brb-admiss} to obtain a proof of $\lseq{\mathcal R,\lrel{u}{u'}}{\mathcal T}{\fsa,\lform{u'}{\phi}}{\fsb,\lform{u'}{\psi}}$. Then, it suffices to apply the rule $\impr$ to reach a proof of $\lseq{\mathcal R}{\mathcal T}{\fsa}{\fsb,\lform{u}{\phi\imp\psi}}$ of the adequate height. If the principal formula is not the one we intend to modify, then we can use the induction hypothesis in the premise and reapply the rule.
 
$\excr:$ then the last rule has the following form.
\begin{small}
\begin{center}
\AxiomC{$\lseq{\mathcal R}{\mathcal T}{\fsa}{\fsb,\lform{u'}{\phi\exc\psi},\lform{w'}{\phi}}$}
\AxiomC{$\lseq{\mathcal R}{\mathcal T}{\fsa,\lform{w'}{\psi}}{\fsb,\lform{u'}{\phi\exc\psi}}$}
\RightLabel{$\excr$}
\BinaryInfC{$\lseq{\mathcal R}{\mathcal T}{\fsa}{\fsb,\lform{u'}{\phi\exc\psi}}$}
\DisplayProof
\end{center}
\end{small}
If the principal formula is not the one we intend to modify, then we can use the induction hypothesis in the premise and reapply the rule.
If the principal formula is the labeled formula we intend to modify, then we have that $w=u'$.
As we have $\rable{w}{u}$ and $\rable{w'}{w}$ by the rule application, we get $\rable{w'}{u}$ by transitivity. This allows us to apply the induction hypothesis in both premises to obtain proofs of $\lseq{\mathcal R}{\mathcal T}{\fsa}{\fsb,\lform{u}{\phi\exc\psi},\lform{w'}{\phi}}$ and $\lseq{\mathcal R}{\mathcal T}{\fsa,\lform{w'}{\psi}}{\fsb,\lform{u}{\phi\exc\psi}}$. Then, it suffices to reapply the rule to obtain a proof of adequate height of $\lseq{\mathcal R}{\mathcal T}{\fsa}{\fsb,\lform{u}{\phi\exc\psi}}$.

Let us now turn to the first order rules. For all the rules for quantifiers where the modified labeled formula cannot be principal, i.e.~for $\existsl$ and $\alll$, it suffices to apply the induction hypothesis in the premise, and then reapply the rule. So, we are left with the rules $\allr$ and $\existsr$.

$\allr:$ then the last rule has the following form.
\begin{center}
\AxiomC{$\lseq{\mathcal R,\lrel{w'}{u'}}{\mathcal T,\lterm{u'}{y}}{\fsa}{\fsb,\lform{u'}{\phi(y/x)}}$}
\RightLabel{$\allr$}
\UnaryInfC{$\lseq{\mathcal R}{\mathcal T}{\fsa}{\fsb,\lform{w'}{\forall x\phi}}$}
\DisplayProof
\end{center}
If the principal formula is not the one we intend to modify, then we can use the induction hypothesis in the premise and reapply the rule.
If the principal formula is the labeled formula we intend to modify, then we have that $w'=w$.
As we have $\rable{w}{u}$ and not $\rable{u}{u'}$ because of the freshness of $u'$, we can use Lemma \ref{lem:brf-brb-admiss} to obtain a proof of $\lseq{\mathcal R,\lrel{u}{u'}}{\mathcal T,\lterm{u'}{y}}{\fsa}{\fsb,\lform{u'}{\phi(y/x)}}$. Then, it suffices to apply the rule $\allr$ to reach a proof of $\lseq{\mathcal R}{\mathcal T}{\fsa}{\fsb,\lform{u}{\forall x\phi}}$ of the adequate height. 

$\existsri:$ then the last rule has the following form.
\begin{center}
\AxiomC{$\lseq{\mathcal R}{\mathcal T}{\fsa}{\fsb,\lform{w'}{\exists x\phi},\lform{w'}{\phi(t/x)}}$}
\RightLabel{$\existsri$}
\UnaryInfC{$\lseq{\mathcal R}{\mathcal T}{\fsa}{\fsb,\lform{w'}{\exists x\phi}}$}
\DisplayProof
\end{center}
If the principal formula is not the one we intend to modify, then we can use the induction hypothesis in the premise and reapply the rule.
If the principal formula is the labeled formula we intend to modify, then we have that $w'=w$.
Then, we can apply the induction hypothesis on the premise twice to obtain a proof of $\lseq{\mathcal R}{\mathcal T}{\fsa}{\fsb,\lform{u}{\exists x\phi},\lform{u}{\phi(t/x)}}$. Note that in this case, we have that $t\in \termset(X_{u})$ as $\rable{w}{u}$. So, we can apply the rule $\existsri$ to reach our goal.

\end{proof}

\begin{lemma}\label{lem:lft-admiss}
 The $\lft$ rule is hp-admissible in $\calc(\modclass)$.
\end{lemma}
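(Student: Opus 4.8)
The plan is to follow the template of the $\lwr$ case (Lemma~\ref{lem:lwr-admiss}): argue by induction on the height of the given proof, case splitting on the last rule applied. Since $\lft$ relocates a multiset $\lform{u}{\Sigma}$ of \emph{antecedent} formulae to a label $w$ with $\rable{w}{u}$, the feature distinguishing this from the $\lwr$ argument is that the relocated formulae may now be \emph{principal} in the last rule, and, in the cases of $\existsl$ and $\doms$, that the last rule also introduces a domain atom at $u$ which must be relocated to $w$ in tandem.

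First I would dispose of the routine cases. For $\botl$ and $\topr$, and for the local propositional rules $\disl$, $\disr$, $\conl$, $\conr$, the relabelling neither breaks initiality nor disturbs applicability, so it suffices to invoke the induction hypothesis on the premise(s) -- moving the relevant $u$-labelled occurrences, including the auxiliary ones whenever the principal formula is among them -- and then reapply the rule. For $\ax$, if the relocated left atom is $\lform{u}{p(\vec t)}$ and the matching right atom lies at $v$ with $\rable{u}{v}$, then $\rable{w}{v}$ follows from $\rable{w}{u}$ by transitivity of $\rable{}{}$, so the relabelled sequent is again an instance of $\ax$. The same transitivity argument handles the antecedent-principal reachability rules $\impl$ and $\allli$: relocating $\lform{u}{\Sigma}$ in a premise changes neither $\R$ nor $\T$, hence leaves every availability condition $\avail{t}{X_v}{\R,\T}$ intact, and the reachability side condition of the reapplied rule (e.g.\ $\rable{w}{v}$ with $v$ the auxiliary label of $\impl$) is recovered from $\rable{w}{u}$ together with the side condition of the original application. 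Finally, the rules whose principal formula is in the consequent ($\impr$, $\excr$, $\existsri$, $\allr$) are unaffected by a relabelling of antecedent formulae, so one again applies the induction hypothesis to the premise and reapplies the rule, all freshness and availability conditions surviving because $\R$ and $\T$ are untouched.

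The substantive cases are the three rules whose principal formula may itself be relocated, namely $\excl$, $\existsl$ and $\doms$. For $\excl$, whose premise is $\lseq{\mathcal R,\lrel{v_0}{u}}{\mathcal T}{\fsa,\lform{u}{\Sigma'},\lform{v_0}{\psi}}{\fsb,\lform{v_0}{\chi}}$ with $v_0$ fresh and $\Sigma = \Sigma',\psi\exc\chi$, I would first apply the induction hypothesis to move $\lform{u}{\Sigma'}$ to $\lform{w}{\Sigma'}$, then use the hp-admissibility of $\brb$ (Lemma~\ref{lem:brf-brb-admiss}) to replace $\lrel{v_0}{u}$ by $\lrel{v_0}{w}$ -- its side condition holds because $\rable{w}{u}$ and, $v_0$ being fresh, $\notrable{w}{v_0}$ -- and then reapply $\excl$ with principal formula $\lform{w}{\psi\exc\chi}$ (renaming $v_0$ if necessary to restore freshness). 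For $\existsl$ in $\calc(\idclass)$, the premise has shape $\lseq{\mathcal R}{\mathcal T,\lterm{u}{y}}{\fsa,\lform{u}{\Sigma'},\lform{u}{\phi(y/x)}}{\fsb}$ with $y$ fresh and $\Sigma = \Sigma',\exists x\phi$; here, after relocating $\lform{u}{\Sigma'}$ and $\lform{u}{\phi(y/x)}$ to $w$ by the induction hypothesis, I would move the domain atom $\lterm{u}{y}$ to $\lterm{w}{y}$ by first inserting $\lterm{w}{y}$ via the hp-admissibility of $\wkv$ (Lemma~\ref{lem:wkv-admiss}) and then deleting $\lterm{u}{y}$ via the hp-admissibility of $\idr$ (Lemma~\ref{lem:ndr-admiss}), whose side condition is exactly $\rable{w}{u}$, before reapplying $\existsl$ with principal formula $\lform{w}{\exists x\phi}$ ($y$ remains fresh, as the conclusion has changed only by the relabelling). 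The $\doms$ case is treated identically, relocating each atom of $\lterm{u}{\VT{\vec t}}$ by a $\wkv$ followed by an $\idr$; in $\calc(\cdclass)$, where domain atoms do not occur, both of these cases degenerate to ``apply the induction hypothesis, reapply the rule''.

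All the transformations above are height-preserving -- each appeal to the induction hypothesis and to Lemmas~\ref{lem:brf-brb-admiss}, \ref{lem:wkv-admiss} and~\ref{lem:ndr-admiss} preserves height, and each reapplied rule contributes exactly one -- which yields hp-admissibility. I expect the $\existsl$ and $\doms$ subcases to be the main obstacle: there the naive ``relocate by the induction hypothesis and reapply the rule'' fails because the rule reintroduces a domain atom at the \emph{old} label $u$, so one has to route it forward to $w$ with the $\wkv$/$\idr$ pair (legitimated by $\rable{w}{u}$) while checking that the freshness side conditions remain satisfied.
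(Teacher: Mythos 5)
Your proof is correct and takes essentially the same route as the paper's: induction on the height of the proof with the same case analysis, using $\brb$ (Lemma~\ref{lem:brf-brb-admiss}) for the $\excl$ case, the $\wkv$/$\idr$ pair (Lemmas~\ref{lem:wkv-admiss} and~\ref{lem:ndr-admiss}) to relocate the domain atoms in the $\existsl$ and $\doms$ cases, and transitivity of $\rable{}{}$ for $\ax$, $\impl$, and $\allli$, with availability and freshness preserved since $\R$ and $\T$ are otherwise untouched. The only immaterial differences are that you relocate the whole multiset $\Sigma$ at once (the paper treats a single formula) and apply the induction hypothesis before rather than after the $\wkv$/$\idr$ steps.
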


\begin{proof}
The proof of this lemma is very similar in spirit to the proof of Lemma \ref{lem:lwr-admiss}. Dually, the problematic rules here are $\doms$, $\impl$, $\excl$, $\existsl$ and $\alll$.

$\doms:$ then the last rule has the following form where $u:p(\vec t)$ is the principal formula. 
\begin{small}
\begin{center}
\AxiomC{$\pi_1$}
\UnaryInfC{$\R, \T, \Gamma, u:p(\vec t),  u: \VT{\vec t} \sar \Delta$}
\RightLabel{$\doms$}
\UnaryInfC{$\R, \T, \Gamma, u:p(\vec t) \sar \Delta$}
\DisplayProof
\end{center}
\end{small}
Applying the induction hypothesis to $\pi_1$, we get a proof $\pi_2$ of 
$\R, \T, \Gamma, w:p(\vec t), u:\VT{\vec t} \sar \Delta$ with the same height, from which we can obtain a proof of $\R, \T, \Gamma, w : p(\vec t) \sar \Delta$ as follows:
\begin{small}
\begin{center}
\AxiomC{$\R, \T, \Gamma, w:p(\vec t), u:\VT{\vec t} \sar \Delta$}
\RightLabel{$\wkv$}
\UnaryInfC{$\R, \T, \Gamma, w:p(\vec t), w:\VT{\vec t}, u:\VT{\vec t} \sar \Delta$}
\RightLabel{$\idr$}
\UnaryInfC{$\R, \T, \Gamma, w:p(\vec t), w:\VT{\vec t} \sar \Delta$}
\RightLabel{$\doms$}
\UnaryInfC{$\R, \T, \Gamma, w:p(\vec t) \sar \Delta$}
\DisplayProof
\end{center}
\end{small}
The instance of $\idr$ above is applicable because $\rable{w}{u}$. The hp-admissibility of $\lft$ in this case then follows from Lemma~\ref{lem:ndr-admiss} and Lemma~\ref{lem:wkv-admiss}. 

$\impl:$ then the last rule has the following form.
\begin{small}
\begin{center}
\AxiomC{$\lseq{\mathcal R}{\mathcal T}{\fsa,\lform{w'}{\phi\imp\psi}}{\fsb,\lform{u'}{\phi}}$}
\AxiomC{$\lseq{\mathcal R}{\mathcal T}{\fsa,\lform{w'}{\phi\imp\psi},\lform{u'}{\psi}}{\fsb}$}
\RightLabel{$\impl$}
\BinaryInfC{$\lseq{\mathcal R}{\mathcal T}{\fsa,\lform{w'}{\phi\imp\psi}}{\fsb}$}
\DisplayProof
\end{center}
\end{small}
If the principal formula is not the one we intend to modify, then we can use the induction hypothesis in the premise and reapply the rule.
If the principal formula is the labeled formula we intend to modify, then we have that $u=w'$.
As we have $\rable{w}{u}$ and $\rable{u}{u'}$ by the rule application, we get $\rable{w}{u'}$ by transitivity. This allows us to apply the induction hypothesis in both premises to obtain proofs of $\lseq{\mathcal R}{\mathcal T}{\fsa,\lform{w}{\phi\imp\psi}}{\fsb,\lform{u'}{\phi}}$ and $\lseq{\mathcal R}{\mathcal T}{\fsa,\lform{w}{\phi\imp\psi},\lform{u'}{\psi}}{\fsb}$. Then, it suffices to reapply the rule to obtain a proof of adequate height of $\lseq{\mathcal R}{\mathcal T}{\fsa,\lform{w}{\phi\imp\psi}}{\fsb}$.

$\excl:$ then the last rule has the following form.
\begin{center}
\AxiomC{$\lseq{\mathcal R,\lrel{u'}{w'}}{\mathcal T}{\fsa,\lform{u'}{\phi}}{\fsb,\lform{u'}{\psi}}$}
\RightLabel{$\excl$}
\UnaryInfC{$\lseq{\mathcal R}{\mathcal T}{\fsa,\lform{w'}{\phi\exc\psi}}{\fsb}$}
\DisplayProof
\end{center}
If the principal formula is not the one we intend to modify, then we can use the induction hypothesis in the premise and reapply the rule.
If the principal formula is the labeled formula we intend to modify, then we have that $u=w'$.
As we have $\rable{w}{u}$ and not $\rable{w}{u'}$ because of the freshness of $u'$, we can use Lemma \ref{lem:brf-brb-admiss} to obtain a proof of $\lseq{\mathcal R,\lrel{u'}{w}}{\mathcal T}{\fsa,\lform{u'}{\phi}}{\fsb,\lform{u'}{\psi}}$. Then, it suffices to apply the rule $\excl$ to reach a proof of $\lseq{\mathcal R}{\mathcal T}{\fsa,\lform{w}{\phi\exc\psi}}{\fsb}$ of the adequate height.

$\existsl:$ then the last rule has the following form.
\begin{center}
\AxiomC{$\lseq{\mathcal R}{\mathcal T,\lterm{w'}{y}}{\fsa,\lform{w'}{\phi(y/x)}}{\fsb}$}
\RightLabel{$\existsl$}
\UnaryInfC{$\lseq{\mathcal R}{\mathcal T}{\fsa,\lform{w'}{\exists x\phi}}{\fsb}$}
\DisplayProof
\end{center}
If the principal formula is not the one we intend to modify, then we can use the induction hypothesis in the premise and reapply the rule.
If the principal formula is the labeled formula we intend to modify, then we have that $u=w'$.
First, we can apply Lemma \ref{lem:wkv-admiss} on the premise to obtain a proof of $\lseq{\mathcal R}{\mathcal T,\lterm{w}{y},\lterm{u}{y}}{\fsa,\lform{u}{\phi(y/x)}}{\fsb}$. Then, as $\rable{w}{u}$ we apply Lemma \ref{lem:ndr-admiss} to obtain a proof of $\lseq{\mathcal R}{\mathcal T,\lterm{w}{y}}{\fsa,\lform{u}{\phi(y/x)}}{\fsb}$. Finally, as all previous lemmas preserve height, it suffices to apply the induction hypothesis on this proof to obtain a proof of $\lseq{\mathcal R}{\mathcal T,\lterm{w}{y}}{\fsa,\lform{w}{\phi(y/x)}}{\fsb}$. A simple application of the rule $\existsl$ reaches our goal.

$\allli:$ then the last rule has the following form.
\begin{center}
\AxiomC{$\lseq{\mathcal R}{\mathcal T}{\fsa,\lform{w'}{\forall x\phi},\lform{u'}{\phi(t/x)}}{\fsb}$}
\RightLabel{$\allli$}
\UnaryInfC{$\lseq{\mathcal R}{\mathcal T}{\fsa,\lform{w'}{\forall x\phi}}{\fsb}$}
\DisplayProof
\end{center}
If the principal formula is not the one we intend to modify, then we can use the induction hypothesis in the premise and reapply the rule.
If the principal formula is the labeled formula we intend to modify, then we have that $u=w'$.
Then, we can apply the induction hypothesis on the premise to obtain a proof of $\lseq{\mathcal R}{\mathcal T}{\fsa,\lform{w}{\forall x\phi},\lform{u'}{\phi(t/x)}}{\fsb}$. Note that we have $\rable{w}{u'}$ as $\rable{w}{u}$ and $\rable{u}{u'}$. So, we can apply the rule $\allli$ to reach our goal.

\end{proof}

For each of the rules in the following lemma, we obtain a proof via an usual induction on the height of proofs. 

\begin{lemma}\label{lem:con-dis-inv}
 The rules $\conl,\conr,\disl,\disr$ are hp-invertible in $\calc(\modclass)$.
\end{lemma}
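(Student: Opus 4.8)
The plan is to prove all four statements by a single, routine induction on the height of the given proof, treating $\conl$ as the representative case since $\conr$, $\disl$, and $\disr$ are symmetric (with consequent formulae and binary branching handled in the obvious dual fashion). Concretely, for $\conl$ I would establish the following hp-invertibility statement: whenever $\lseq{\R}{\T}{\fsa,\lform{w}{\phi\land\psi}}{\fsb}$ has a proof of height $n$, the sequent $\lseq{\R}{\T}{\fsa,\lform{w}{\phi},\lform{w}{\psi}}{\fsb}$ has a proof of height at most $n$.

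For the base case, when the given proof is an instance of $\ax$, $\botl$, or $\topr$, I would note that $\phi\land\psi$ is never atomic, $\bot$, or $\top$, so $\lform{w}{\phi\land\psi}$ is never among the displayed active formulae of an initial rule; hence replacing it with $\lform{w}{\phi},\lform{w}{\psi}$ leaves us with another instance of the same initial rule. For the inductive step I would distinguish on the last rule $(r)$ applied. If $\lform{w}{\phi\land\psi}$ is principal in $(r)$, then necessarily $(r)=\conl$ — a conjunction in the antecedent can be principal in no other rule — and the premise of $(r)$ is already $\lseq{\R}{\T}{\fsa,\lform{w}{\phi},\lform{w}{\psi}}{\fsb}$ with a proof of height $n-1$. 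Otherwise, $\lform{w}{\phi\land\psi}$ sits in the context and appears in each premise of $(r)$, so I would apply the induction hypothesis to every premise to obtain proofs of no greater height in which $\lform{w}{\phi\land\psi}$ is replaced by $\lform{w}{\phi},\lform{w}{\psi}$, and then re-apply $(r)$.

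The only point that needs (a small amount of) care — and hence the closest thing to an obstacle — is checking that $(r)$ remains applicable after this replacement. This is immediate once one observes that the transformation modifies only a single labeled formula and touches neither $\R$, nor $\T$, nor the set of labels of the sequent: every reachability condition $\rable{\cdot}{\cdot}$ and every availability condition $\avail{\cdot}{\cdot}{\cdot}$ therefore survives verbatim, and any freshness side condition of $(r)$ (as used in $\existsl$, $\allr$, $\impr$, $\excl$) constrains a label or variable not occurring in the conclusion of $(r)$, hence not in $\phi\land\psi$, hence not in $\phi$ or $\psi$. The cases of $\conr$, $\disl$, and $\disr$ follow the same template verbatim; for $\disl$ one extracts, when the disjunction is principal, proofs of both $\lseq{\R}{\T}{\fsa,\lform{w}{\phi}}{\fsb}$ and $\lseq{\R}{\T}{\fsa,\lform{w}{\psi}}{\fsb}$ directly from the two premises of the $\disl$ instance.
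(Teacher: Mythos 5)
Your proposal is correct and follows essentially the same route as the paper, which proves Lemma~\ref{lem:con-dis-inv} by exactly this kind of routine induction on the height of the given proof (the paper states this without spelling out the cases). Your additional check that the replacement of $\lform{w}{\phi\land\psi}$ by $\lform{w}{\phi},\lform{w}{\psi}$ leaves $\R$, $\T$, and the labels untouched, so that reachability, availability, and freshness side conditions persist, is precisely the point that makes the ``usual'' induction go through.
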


\begin{lemma}\label{lem:impl-excr-existsri-alll-inv}
 The rules $\impl$, $\excr$, $\existsr$, $\alll$ are hp-invertible in $\calc(\modclass)$.
\end{lemma}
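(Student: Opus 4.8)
The plan is to prove that each of the rules $\impl$, $\excr$, $\existsr$, $\alll$ is hp-invertible by a straightforward induction on the height of the given proof of the conclusion, following exactly the same template already used for Lemma~\ref{lem:con-dis-inv} and throughout Appendix~\ref{app:proofs}. For each rule I fix a proof $\pi$ of the conclusion of height $n$ and show that each premise has a proof of height at most $n$. In the base case the conclusion is an instance of $\ax$, $\botl$, or $\topr$; since these initial rules place no constraint on the formulae occurring apart from the displayed principal ones, deleting the principal compound formula of the rule under consideration (and, for $\impl$/$\existsr$, adding the auxiliary formula to a context, or for $\alll$ adding $\lform{u}{\phi(t/x)}$) still yields an instance of the same initial rule, so the premise is derivable with height $0$. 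The only mild subtlety in the base case is $\ax$: the principal formula of, say, $\impl$ is a compound $\lform{w}{\phi\imp\psi}$ and hence cannot be the atomic formula $\lform{u}{p(\vec t)}$ active in $\ax$, so the $\ax$ instance survives verbatim after passing to the premise.

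For the inductive step I distinguish cases on the last rule $(r)$ applied in $\pi$. If the principal formula of the rule being inverted is \emph{not} principal in $(r)$, then it occurs in the context of every premise of $(r)$; I apply the induction hypothesis to each premise of $(r)$ to invert it there, obtaining proofs of no greater height, and then reapply $(r)$. Here I need that reapplying $(r)$ is still legitimate, i.e.\ that inverting a formula does not disturb the reachability, availability, or freshness side conditions of $(r)$. This is clear: inverting $\impl$ only adds a labeled formula to the antecedent and deletes another at the same label; inverting $\excr$ adds a labeled formula on a reachable world; inverting $\existsr$ adds $\lform{w}{\phi(t/x)}$ where $t$ was already available for $w$; inverting $\alll$ adds $\lform{u}{\phi(t/x)}$ where $t$ is available for $u$ by the side condition $\dag_5$ of the $\alll$ instance we are inverting; none of these touches $\R$ or $\T$, and availability/reachability conditions of $(r)$ depend only on $\R$ and $\T$, which are unchanged (in the cases where $(r)$ does modify $\R$ or $\T$, e.g.\ $\impr$, $\allr$, $\existsl$, $\doms$, the modification is independent of the deleted/added labeled formulae, and freshness is preserved because the added labeled formula reuses an existing label). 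If the principal formula \emph{is} principal in $(r)$, then $(r)$ must be the very rule under consideration and its premises \emph{are} (up to the multiset bookkeeping for contraction-style duplication in the premises, handled via Lemma~\ref{lem:iwk-admiss} where a premise already contains a copy of the principal formula) exactly the sequents we want, so the desired proofs are the immediate subproofs of $\pi$, of strictly smaller height.

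One point deserving a little care is the reachability-style rule cases, in particular when $(r)\in\{\ax,\impl,\excr,\existsr,\alll,\doms\}$ has a side condition that must be rechecked after inverting. Lemma~\ref{lem:rable-preserved-up} guarantees that $\rable{}{}$, the position of atomic formulae, and the position of domain atoms are preserved under the passage to premises, and since inversion of our four rules never deletes a relational atom, a domain atom, or an atomic labeled formula needed by $(r)$'s side condition, every side condition that held for $(r)$ in $\pi$ still holds after the inductive call. The genuinely delicate interaction is between inverting $\existsr$ or $\alll$ and the rules $\existsl$, $\allr$ (which introduce a fresh variable/label bottom-up): after inverting in the premise we must ensure the freshness condition still holds, and this is automatic because the term/formula we insert uses only material already present in the conclusion of $(r)$, hence cannot clash with the fresh variable or label introduced by $(r)$. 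I therefore expect the main (and only real) obstacle to be an exhaustive but routine verification that none of the many rule/principal-formula combinations breaks a side condition — a bookkeeping task rather than a conceptual one — which is why, as in the analogous lemmas in this section, the proof can be summarised rather than written out in full detail.

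\begin{proof}[Proof sketch]
We argue each rule separately by induction on the height of the proof of the conclusion; the argument is entirely analogous to that of Lemma~\ref{lem:con-dis-inv}. In the base case, the conclusion is an instance of $\ax$, $\botl$, or $\topr$, and since the principal formula of each of $\impl$, $\excr$, $\existsr$, $\alll$ is a compound formula distinct from the atoms active in $\ax$, the corresponding premise(s) are again initial sequents of the same form. In the inductive step, we case split on the last rule $(r)$ applied: if the principal formula of the rule being inverted is not principal in $(r)$, we apply the induction hypothesis to the premise(s) of $(r)$ and then reapply $(r)$, using Lemma~\ref{lem:rable-preserved-up} to confirm that the reachability and availability side conditions of $(r)$ are preserved and noting that freshness conditions cannot be violated since inversion reuses only labels, relational atoms, and domain atoms already present; if the principal formula is principal in $(r)$, then $(r)$ is the rule under consideration and the desired proof(s) are the immediate subproof(s), after possibly applying Lemma~\ref{lem:iwk-admiss} when a premise already contains a copy of the auxiliary formula. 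Full details are routine.
\end{proof}
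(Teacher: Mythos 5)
Your overall strategy (a height induction with a case split on the last rule) could in principle be made to work, but it is not the paper's route, and as written it rests on a genuine misreading of the rules being inverted. In $\calc(\modclass)$ the rules $\impl$, $\excr$, $\existsr$, $\alll$ all \emph{retain} their principal formula in the premises, and each premise differs from the conclusion only by the \emph{addition} of auxiliary labeled formulae ($\R$ and $\T$ are untouched). Hence the paper's proof is a one-liner: every premise is obtained from the conclusion by internal weakening, so hp-invertibility is immediate from the hp-admissibility of $\iwk$ (Lemma~\ref{lem:iwk-admiss}); no induction, no case analysis on the last rule, and no appeal to Lemma~\ref{lem:rable-preserved-up} is needed. (This is exactly why these four rules are grouped together, in contrast to $\impr$, $\excl$, $\existsl$, $\allr$, whose inversions do require the inductive argument you describe.)

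Your description of the premises --- ``deleting the principal compound formula \ldots and adding the auxiliary formula'' --- is wrong for these four rules, and this is not a cosmetic slip. The deleted-principal version of the statement is strictly stronger and is in fact false: for instance $\lform{w}{(p\imp\bot)\imp\bot}, \lform{w}{p\imp\bot}\sar$ is provable, but the `strong' left inverse $\lform{w}{(p\imp\bot)\imp\bot}\sar\lform{w}{p}$ is invalid and hence unprovable by soundness. Correspondingly, your induction breaks precisely in the case you treat most casually: when the last rule $(r)$ is the rule being inverted applied to the same principal formula, its immediate subproofs are not ``exactly the sequents we want'' --- under your deletion reading they still contain the principal formula, and even under the correct reading the auxiliary label $u$ or term $t$ used by $(r)$ need not match those of the instance you are inverting. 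Repairing this requires exactly the observation that the principal formula persists in the premises (so one can either apply the induction hypothesis there or simply weaken in the missing auxiliary formula) --- and once that observation is made, the entire induction collapses to the one-line $\iwk$ argument above.
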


\begin{proof} The claim follows immediately by the hp-admissibility of $\iwk$ (Lemma~\ref{lem:iwk-admiss}).
\end{proof}


The following two lemmas are shown by induction on the height of the given proof. The proofs are tedious due to the number of cases that need to be considered, but are straightoforward to argue.

\begin{lemma}\label{lem:impr-excl-inv}
 The rules $\impr$ and $\excl$ are hp-invertible in $\calc(\modclass)$.
\end{lemma}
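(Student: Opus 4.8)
The plan is to prove both statements simultaneously by induction on the height of the given proof, treating $\impr$ first; the case of $\excl$ is symmetric. Fix a proof $\pi$ of $\lseq{\R}{\T}{\fsa}{\fsb,\lform{w}{\phi\imp\psi}}$ and a label $u$ fresh for $\pi$; the goal is to produce a proof of $\lseq{\R,\lrel{w}{u}}{\T}{\fsa,\lform{u}{\phi}}{\fsb,\lform{u}{\psi}}$ whose height is at most that of $\pi$. The base case is immediate: if the endsequent is an instance of $\ax$, $\botl$, or $\topr$, then, since $\phi\imp\psi$ is neither atomic nor $\bot$ nor $\top$, the distinguished labeled formula $\lform{w}{\phi\imp\psi}$ is not the formula witnessing the initial rule; hence that witness is untouched by the modification, and since we only \emph{add} a relational atom, the reachability side condition on $\ax$ still holds. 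So the modified sequent is again an instance of the same initial rule.

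For the inductive step, let $(r)$ be the last rule of $\pi$. If $(r)=\impr$ with principal formula exactly $\lform{w}{\phi\imp\psi}$, then its premise is $\lseq{\R,\lrel{w}{u'}}{\T}{\fsa,\lform{u'}{\phi}}{\fsb,\lform{u'}{\psi}}$ for some label $u'$ fresh in the endsequent; since $u$ and $u'$ are both fresh we may assume $u'=u$ by Remark~\ref{rmk:iso-are-equal} (renaming the eigenlabel throughout the subproof yields isomorphic polytrees everywhere and preserves heights), and the subproof of the premise is already the desired proof. In every other case $(r)$ acts on material lying entirely inside $\fsa$, $\fsb$ (or, for $\doms$, also in $\T$), which is disjoint from the added relational atom and labeled formulae. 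The key observation is that attaching a fresh node $u$ to $w$ via the single edge $\lrel{w}{u}$ creates no new reachabilities between labels already present --- there is no edge out of $u$ into an old node, so $u$ cannot occur strictly inside a path connecting old labels --- and therefore keeps the graph a polytree and leaves every reachability and availability side condition of $(r)$ ($\T$ being unchanged) exactly as it was. One then applies the induction hypothesis to each premise of $(r)$, which has strictly smaller height, obtaining proofs of the correspondingly modified premises, and reapplies $(r)$; when $(r)$ introduces an eigenvariable or an eigenlabel (e.g. $\existsl$, $\allr$, or $\impr$/$\excl$ on a different principal formula), one takes $u$ fresh also for that eigenvariable/eigenlabel --- which is automatic, since the induction hypothesis is applied to a premise in which those already appear --- so the freshness and polytree conditions of $(r)$ remain satisfied in the modified conclusion.

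The $\excl$ case is entirely dual: given a proof of $\lseq{\R}{\T}{\fsa,\lform{w}{\phi\exc\psi}}{\fsb}$ and a fresh $u$, one builds a proof of $\lseq{\R,\lrel{u}{w}}{\T}{\fsa,\lform{u}{\phi}}{\fsb,\lform{u}{\psi}}$ of no greater height. Here the added edge is the \emph{backward} edge $\lrel{u}{w}$ from a fresh $u$; again no old node has an edge into $u$ --- the only edge touching $u$ is $u\to w$ --- so no reachabilities among old labels are created, the graph stays a polytree, and all side conditions survive, with the principal case handled as above via Remark~\ref{rmk:iso-are-equal}.

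I expect the only genuinely delicate point to be the bookkeeping around eigenvariables and eigenlabels: one must verify, rule by rule, that the freshly introduced label $u$ (and the new edge) never clashes with, nor destroys the freshness of, the eigenvariable/eigenlabel of any rule met while pushing the inversion upward, and that the forward edge of $\impr$ and the backward edge of $\excl$ both preserve the polytree shape of sequents. Both are handled uniformly by the observation that attaching a fresh leaf (forward or backward) to an existing node adds no new reachabilities among the old labels, so nothing that the rules of $\calc(\modclass)$ depend on is disturbed; the remaining case distinctions follow the routine ``apply the induction hypothesis, then reapply the rule'' pattern already used for Lemmas~\ref{lem:con-dis-inv} and~\ref{lem:lwr-admiss}.
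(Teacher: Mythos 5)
Your proof is correct and follows essentially the same route as the paper, which establishes this lemma by induction on the height of the given proof with a case analysis on the last rule (the paper merely notes the argument is tedious but straightforward, without spelling out the cases). Your key observations --- renaming the eigenlabel in the principal case and noting that attaching a fresh leaf node preserves the polytree shape and all (monotone) reachability/availability side conditions while respecting eigenvariable/eigenlabel freshness --- are exactly the points such an induction needs.
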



\begin{lemma}\label{lem:existsl-allr-inv}
 The rules $\existsl$ and $\allr$ are hp-invertible in $\calc(\modclass)$.
\end{lemma}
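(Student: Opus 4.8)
The plan is to prove both statements by a straightforward induction on the height $h$ of the given proof, treating $\allr$ in detail; the $\existsl$ case is analogous and in fact slightly easier, since no relational atom is introduced. Concretely, I would fix a proof $\pi$ of height $h$ of $S = \mathcal R, \mathcal T, \fsa \sar \fsb, \lform{w}{\forall x\phi}$, fix labels/variables $u, y$ fresh for $S$, and aim to build a proof of height $\leq h$ of $S' = \mathcal R, \lrel{w}{u}, \mathcal T, \lterm{u}{y}, \fsa \sar \fsb, \lform{u}{\phi(y/x)}$; for $\existsl$ the target premise is instead $\mathcal R, \mathcal T, \lterm{w}{y}, \fsa, \lform{w}{\phi(y/x)} \sar \fsb$.

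For the base case, if $S$ is an instance of $\ax$, $\botl$, or $\topr$, then since $\forall x\phi$ (resp.\ $\exists x\phi$) is compound, the displayed labeled formula is not what witnesses that $S$ is initial; deleting it and attaching a fresh leaf $u$ carrying $\lrel{w}{u}$, $\lterm{u}{y}$, $\lform{u}{\phi(y/x)}$ (resp.\ adjoining $\lterm{w}{y}$ and replacing the displayed formula by $\lform{w}{\phi(y/x)}$) again yields an instance of the same initial rule. This uses only that attaching a fresh leaf preserves the polytree property and does not disturb reachability among pre-existing labels, so $\rable{v_0}{v_1}$-style side conditions of $\ax$ still hold. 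Hence $S'$ has a proof of height $0$.

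For the inductive step I would case-split on the last rule $(r)$ of $\pi$. If $(r)$ is the rule being inverted and its principal formula is precisely the displayed $\lform{w}{\forall x\phi}$ (resp.\ $\lform{w}{\exists x\phi}$), then the premise of $(r)$ is, up to the choice of its eigenvariable $y'$ and eigenlabel $u'$, already $S'$; I would rename $y'$ to $y$ using the hp-admissibility of $\psub$ (Lemma~\ref{lem:psub-admiss}) — harmless since $y'$ is fresh and so occurs only in $\lterm{u'}{y'}$ and $\lform{u'}{\phi(y'/x)}$ — and rename the fresh label $u'$ to $u$, appealing to Remark~\ref{rmk:iso-are-equal}, obtaining a proof of $S'$ of height $h-1$. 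Otherwise $(r)$ does not act on the displayed formula (possibly $(r)$ is another instance of the same quantifier rule on a different formula); here I would first arrange, by a harmless renaming, that the eigenvariable/eigenlabel of $(r)$ (if any) avoids $u$ and $y$, then apply the induction hypothesis to each premise of $(r)$ to invert the relevant instance there, and finally reapply $(r)$, which gives a proof of $S'$ of height $\leq h$.

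The step that needs care — and which I expect to be the main, though still routine, obstacle — is verifying in the last case that reapplying $(r)$ is legitimate: the passage from a premise of $(r)$ to its inverted form only adds the fresh leaf $u$ together with $\lrel{w}{u}$, $\lterm{u}{y}$, $\lform{u}{\phi(y/x)}$ (for $\allr$), or adds $\lterm{w}{y}$ and replaces $\lform{w}{\exists x\phi}$ by $\lform{w}{\phi(y/x)}$ (for $\existsl$). In both cases reachability among pre-existing labels is untouched (a fresh leaf only creates new reachability pairs whose right component is $u$), availability sets $X_v$ for pre-existing $v$ are unchanged or only enlarged, and the freshness conditions of $(r)$ survive thanks to the eigenvariable/eigenlabel choice above; hence no side condition of $(r)$ is violated — in particular none of those of the reachability rules $\ax$, $\impl$, $\excr$, $\existsri$, $\alll$, nor of the quantifier rules $\existsl$, $\allr$. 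Once this invariant is established, the remaining cases are pure bookkeeping.
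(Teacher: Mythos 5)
Your proposal is correct and follows essentially the same route as the paper, which proves this lemma by the very induction on proof height you describe (the paper merely notes the case analysis is tedious but straightforward, leaving the details implicit). Your handling of the delicate points — renaming eigenvariables/eigenlabels via the hp-admissibility of $\psub$ and Remark~\ref{rmk:iso-are-equal}, and checking that adding the fresh leaf (or the domain atom $w\!:\!y$) only enlarges reachability and availability so all side conditions survive reapplication of the last rule — is exactly the bookkeeping the paper's proof relies on.
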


\begin{lemma}\label{lem:ctr-admiss}
 The rules $\ctrl$ and $\ctrr$ are hp-admissible in $\calc(\modclass)$.
\end{lemma}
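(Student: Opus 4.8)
The plan is to prove $\ctrl$ and $\ctrr$ hp-admissible simultaneously by induction on the height of the proof of the premise; I describe the argument for $\ctrl$, as $\ctrr$ is entirely dual. So suppose we are given a proof $\pi$ of height $n$ of $\lseq{\mathcal R}{\mathcal T}{\Gamma,\lform{w}{\phi},\lform{w}{\phi}}{\Delta}$, and we wish to produce a proof of height at most $n$ of $\lseq{\mathcal R}{\mathcal T}{\Gamma,\lform{w}{\phi}}{\Delta}$. We case split on the last rule $(r)$ of $\pi$. If $(r)$ is an initial rule ($\ax$, $\botl$, $\topr$), then deleting one copy of $\lform{w}{\phi}$ still yields an instance of the same initial rule. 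If $(r)$ acts on neither displayed copy of $\lform{w}{\phi}$ (so both copies occur unchanged in every premise), we apply the induction hypothesis to the premise(s) and reapply $(r)$: removing a duplicate antecedent formula does not affect $\mathcal R$, so all reachability, availability, and freshness side conditions are preserved.

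The substantive cases arise when one copy of $\lform{w}{\phi}$ is principal in $(r)$, and here we split on whether $(r)$ retains the principal formula in its premise(s). For the reachability rules $\impl$, $\excr$, $\alll$ (with $\phi$ of the appropriate shape and $w$ the relevant label), the principal formula persists, so each premise already carries two copies of $\lform{w}{\phi}$ alongside the newly introduced material; we apply the induction hypothesis once per premise to contract those copies, then reapply $(r)$ (its side condition still holds, as $\mathcal R$ is untouched). The rule $\doms$ with $\phi = p(\vec t)$ principal is analogous: the premise is $\lseq{\mathcal R}{\mathcal T,\lterm{w}{\VT{\vec{t}}}}{\Gamma,\lform{w}{p(\vec{t})},\lform{w}{p(\vec{t})}}{\Delta}$, and we contract and reapply $\doms$. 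For a rule $(r)$ that consumes its principal formula — such as $\conl$, $\disl$, $\excl$, $\impl$-style cases where a copy is destroyed — the second copy of $\lform{w}{\phi}$ still appears in the premise(s); we invoke hp-invertibility of $(r)$ (Lemma~\ref{lem:inv}) on that second copy to obtain premise(s) in which both decompositions of $\lform{w}{\phi}$ occur, apply the induction hypothesis to the (at most two, strictly smaller) resulting duplicated subformulae, and reapply $(r)$.

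The case requiring the most care, and the expected main obstacle, is when $(r)$ is $\existsl$ (and, dually for $\ctrr$, $\allr$) with the contracted formula $\phi = \exists x\psi$ principal. Then the premise is $\lseq{\mathcal R}{\mathcal T,\lterm{w}{y}}{\Gamma,\lform{w}{\psi(y/x)},\lform{w}{\exists x\psi}}{\Delta}$ with $y$ fresh. Applying hp-invertibility of $\existsl$ (Lemma~\ref{lem:inv}) to the remaining $\lform{w}{\exists x\psi}$ yields a proof of $\lseq{\mathcal R}{\mathcal T,\lterm{w}{y},\lterm{w}{y'}}{\Gamma,\lform{w}{\psi(y/x)},\lform{w}{\psi(y'/x)}}{\Delta}$ with $y'$ fresh; since the two notional applications of $\existsl$ introduced distinct eigenvariables, we must reconcile them. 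Because $y'$ is fresh, hp-admissibility of $\psub$ (Lemma~\ref{lem:psub-admiss}) lets us rename $y'$ to $y$, and hp-admissibility of $\idr$ (Lemma~\ref{lem:ndr-admiss}), which with identical labels is contraction on the domain atom $\lterm{w}{y}$ since $\rable{w}{w}$ always holds, collapses the two copies of $\lterm{w}{y}$. We are left with a proof of $\lseq{\mathcal R}{\mathcal T,\lterm{w}{y}}{\Gamma,\lform{w}{\psi(y/x)},\lform{w}{\psi(y/x)}}{\Delta}$; the induction hypothesis contracts $\lform{w}{\psi(y/x)}$, and reapplying $\existsl$ gives $\lseq{\mathcal R}{\mathcal T}{\Gamma,\lform{w}{\exists x\psi}}{\Delta}$. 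Every auxiliary lemma invoked here is height-preserving, so the induction is well-founded and goes through; the remaining quantifier and propositional cases are routine instances of the schemes above.
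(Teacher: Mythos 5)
Your overall strategy (invert the surviving copy, contract the resulting material, reapply the rule) is the paper's strategy too, and your treatment of $\existsl$ -- renaming the second eigenvariable via the hp-admissibility of $\psub$ and collapsing the duplicated domain atom via $\idr$ -- is exactly what the paper does. The choice of a plain height induction instead of the paper's primary induction on formula complexity with a secondary height induction is harmless, since all the auxiliary lemmas you invoke are height-preserving.

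However, there is a genuine gap in the cases where the principal rule introduces a \emph{fresh label}: $\excl$ for $\ctrl$, and $\impr$ and $\allr$ for $\ctrr$. Take $\excl$ with principal $\lform{w}{\psi\exc\chi}$: its premise is $\lseq{\mathcal R,\lrel{u}{w}}{\mathcal T}{\fsa,\lform{u}{\psi},\lform{w}{\psi\exc\chi}}{\fsb,\lform{u}{\chi}}$ with $u$ fresh, and inverting the second copy produces $\lseq{\mathcal R,\lrel{u}{w},\lrel{v}{w}}{\mathcal T}{\fsa,\lform{u}{\psi},\lform{v}{\psi}}{\fsb,\lform{u}{\chi},\lform{v}{\chi}}$ with $v$ a \emph{different} fresh label. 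The ``duplicated subformulae'' $\lform{u}{\psi},\lform{v}{\psi}$ (and $\lform{u}{\chi},\lform{v}{\chi}$) carry distinct labels, so they are not contraction redexes and the induction hypothesis simply does not apply to them; your blanket recipe ``apply the IH to the duplicated subformulae and reapply $(r)$'' breaks down here. This is precisely why the paper first uses the hp-admissible rules $\brf$/$\brb$ (Lemma~\ref{lem:brf-brb-admiss}) to reattach the branch rooted at $v$ to $u$, and then $\mrg$ (Lemma~\ref{lem:mrgf-mrgb-admiss}) to identify $v$ with $u$, after which both copies live at the same label and can be contracted before reapplying $\excl$. The same machinery is needed for $\impr$ in the $\ctrr$ case, and $\allr$ is \emph{not} merely dual to your $\existsl$ case: it introduces both a fresh label and a fresh eigenvariable, so its treatment needs the $\brf$/$\brb$ and $\mrg$ steps in addition to the $\psub$ and $\idr$ steps you describe. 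Without explicitly invoking these label-moving and label-merging lemmas (and checking they preserve height and the polytree shape), the proof is incomplete for exactly the connectives that make the bi-intuitionistic case delicate.
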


\begin{proof}
We simultaneously prove the hp-admissibility of the two rules. We reason by primary induction on the structure of $\phi$ (PIH) and secondary induction on the  height of derivations (SIH). In each case, we consider the last rule applied, and whether $\phi$ is principal in that rule. We omit the cases of initial rules as they are straightforward.

\medskip

We start by the rule $\ctrl$. If $\phi$ is not principal in the last rule, then we proceed as usual by applying SIH on $\phi$ in the premises of the rule, and then the rule. Note that all rules which have their principal formula on the right fall in this category.

If $\phi$ is principal, then for the rules $\conl$ and $\disl$ we use the invertibility lemmas proved previously, jointly with PIH. For the rules in which the principal formula is not deleted in the premises, i.e.~$\impl$ and $\alll$, we simply use SIH to contract the principal formula. We give the case of $\impl$ as an example.

$\impl:$ then the last rule has the following form, where $\phi=\chi\imp\psi$.
\begin{small}
\begin{center}
\AxiomC{$\lseq{\mathcal R}{\mathcal T}{\fsa,\lform{w}{\chi\imp\psi},\lform{w}{\chi\imp\psi}}{\fsb,\lform{u}{\chi}}$}
\AxiomC{$\lseq{\mathcal R}{\mathcal T}{\fsa,\lform{w}{\chi\imp\psi},\lform{w}{\chi\imp\psi},\lform{u}{\psi}}{\fsb}$}
\RightLabel{$\impl$}
\BinaryInfC{$\lseq{\mathcal R}{\mathcal T}{\fsa,\lform{w}{\chi\imp\psi},\lform{w}{\chi\imp\psi}}{\fsb}$}
\DisplayProof
\end{center}
\end{small}
We can simply apply SIH on both premises to obtain proofs of $\lseq{\mathcal R}{\mathcal T}{\fsa,\lform{w}{\chi\imp\psi}}{\fsb,\lform{u}{\chi}}$ and $\lseq{\mathcal R}{\mathcal T}{\fsa,\lform{w}{\chi\imp\psi},\lform{u}{\psi}}{\fsb}$. A simple application of $\impr$ gives us the desired result.

Finally, we are left with the rules $\excl$ and $\existsl$ which we treat individually.

$\excl:$ then the last rule has the following form, where $\phi=\chi\exc\psi$.
\begin{center}
\AxiomC{$\lseq{\mathcal R,\lrel{u}{w}}{\mathcal T}{\fsa,\lform{u}{\chi},\lform{w}{\chi\exc\psi}}{\fsb,\lform{u}{\psi}}$}
\RightLabel{$\excl$}
\UnaryInfC{$\lseq{\mathcal R}{\mathcal T}{\fsa,\lform{w}{\chi\exc\psi},\lform{w}{\chi\exc\psi}}{\fsb}$}
\DisplayProof
\end{center}
Then, we proceed as follows.
\begin{center}
\AxiomC{$\lseq{\mathcal R,\lrel{u}{w}}{\mathcal T}{\fsa,\lform{u}{\chi},\lform{w}{\chi\exc\psi}}{\fsb,\lform{u}{\psi}}$}
\RightLabel{Lem.\ref{lem:existsl-allr-inv}}
\dashedLine
\UnaryInfC{$\lseq{\mathcal R,\lrel{u}{w},\lrel{v}{w}}{\mathcal T}{\fsa,\lform{u}{\chi},\lform{v}{\chi}}{\fsb,\lform{u}{\psi},\lform{v}{\psi}}$}
\RightLabel{Lem.\ref{lem:brf-brb-admiss}}
\dashedLine
\UnaryInfC{$\lseq{\mathcal R,\lrel{u}{w},\lrel{v}{u}}{\mathcal T}{\fsa,\lform{u}{\chi},\lform{v}{\chi}}{\fsb,\lform{u}{\psi},\lform{v}{\psi}}$}
\RightLabel{Lem.\ref{lem:mrgf-mrgb-admiss}}
\dashedLine
\UnaryInfC{$\lseq{\mathcal R,\lrel{u}{w}}{\mathcal T}{\fsa,\lform{u}{\chi},\lform{u}{\chi}}{\fsb,\lform{u}{\psi},\lform{u}{\psi}}$}
\RightLabel{PIH}
\dashedLine
\UnaryInfC{$\lseq{\mathcal R,\lrel{u}{w}}{\mathcal T}{\fsa,\lform{u}{\chi}}{\fsb,\lform{u}{\psi}}$}
\RightLabel{$\excl$}
\UnaryInfC{$\lseq{\mathcal R}{\mathcal T}{\fsa,\lform{w}{\chi\exc\psi}}{\fsb}$}
\DisplayProof
\end{center}
First, we use the invertibility Lemma \ref{lem:impr-excl-inv} on the initial premise. Second, we use Lemma \ref{lem:brf-brb-admiss} to push the branch $\lrel{v}{w}$ to $u$ and obtain $\lrel{v}{u}$. Third, we merge the points $v$ and $u$ using Lemma \ref{lem:mrgf-mrgb-admiss}. Once this is done, we can apply the induction hypothesis PIH and reapply the rule.

$\existsl:$ then the last rule has the following form, where $\phi=\exists x\psi$.
\begin{center}
\AxiomC{$\lseq{\mathcal R}{\mathcal T,\lterm{w}{y}}{\fsa,\lform{w}{\psi(y/x)},\lform{w}{\exists x\psi}}{\fsb}$}
\RightLabel{$\existsl$}
\UnaryInfC{$\lseq{\mathcal R}{\mathcal T}{\fsa,\lform{w}{\exists x\psi},\lform{w}{\exists x\psi}}{\fsb}$}
\DisplayProof
\end{center}
Then, we proceed as follows.
\begin{center}
\AxiomC{$\lseq{\mathcal R}{\mathcal T,\lterm{w}{y}}{\fsa,\lform{w}{\psi(y/x)},\lform{w}{\exists x\psi}}{\fsb}$}
\RightLabel{Lem.\ref{lem:existsl-allr-inv}}
\dashedLine
\UnaryInfC{$\lseq{\mathcal R}{\mathcal T,\lterm{w}{y},\lterm{w}{z}}{\fsa,\lform{w}{\psi(y/x)},\lform{w}{\psi(z/x)}}{\fsb}$}
\RightLabel{Lem.\ref{lem:psub-admiss}}
\dashedLine
\UnaryInfC{$\lseq{\mathcal R}{\mathcal T,\lterm{w}{y},\lterm{w}{y}}{\fsa,\lform{w}{\psi(y/x)},\lform{w}{\psi(y/x)}}{\fsb}$}
\RightLabel{Lem.\ref{lem:ndr-admiss}}
\dashedLine
\UnaryInfC{$\lseq{\mathcal R}{\mathcal T,\lterm{w}{y}}{\fsa,\lform{w}{\psi(y/x)},\lform{w}{\psi(y/x)}}{\fsb}$}
\RightLabel{PIH}
\dashedLine
\UnaryInfC{$\lseq{\mathcal R}{\mathcal T,\lterm{w}{y}}{\fsa,\lform{w}{\psi(y/x)}}{\fsb}$}
\RightLabel{$\existsl$}
\UnaryInfC{$\lseq{\mathcal R}{\mathcal T}{\fsa,\lform{w}{\exists x\psi}}{\fsb}$}
\DisplayProof
\end{center}
First, we use the invertibility Lemma \ref{lem:existsl-allr-inv} on the initial premise. Second, we use Lemma \ref{lem:psub-admiss} to rename the (fresh) variable $z$ to $y$. Third, we contract the labeled variable $\lterm{w}{y}$ using Lemma \ref{lem:ndr-admiss}. Once this is done, we can apply the induction hypothesis PIH and reapply the rule.

\medskip

Now, let us turn to the rule $\ctrr$. If $\phi$ is not principal in the last rule, then we proceed as usual by applying SIH on $\phi$ in the premises of the rule, and then the rule. Note that this time, all rules which have their principal formula on the left fall in this category.

If $\phi$ is principal, then for the rules $\conr$ and $\disr$ we use the invertibility lemmas proved previously, jointly with PIH. For the rules in which the principal formula is not deleted in the premises, i.e.~$\doms$, $\excr$ and $\existsr$, we simply use SIH to contract the principal formula. We give the case of $\excr$ as an example.

$\excr:$ then the last rule has the following form, where $\phi=\chi\exc\psi$.
\begin{small}
\begin{center}
\AxiomC{$\lseq{\mathcal R}{\mathcal T}{\fsa}{\fsb,\lform{w}{\chi\exc\psi},\lform{w}{\chi\exc\psi},\lform{u}{\chi}}$}
\AxiomC{$\lseq{\mathcal R}{\mathcal T}{\fsa,\lform{u}{\psi}}{\fsb,\lform{w}{\chi\exc\psi},\lform{w}{\chi\exc\psi}}$}
\RightLabel{$\excr$}
\BinaryInfC{$\lseq{\mathcal R}{\mathcal T}{\fsa}{\fsb,\lform{w}{\chi\exc\psi},\lform{w}{\chi\exc\psi}}$}
\DisplayProof
\end{center}
\end{small}
We can simply apply SIH on both premises to obtain proofs of $\lseq{\mathcal R}{\mathcal T}{\fsa}{\fsb,\lform{w}{\chi\exc\psi},\lform{u}{\chi}}$ and $\lseq{\mathcal R}{\mathcal T}{\fsa,\lform{u}{\psi}}{\fsb,\lform{w}{\chi\exc\psi}}$. A simple application of the rule gives us the desired result.

Finally, we are left with the rules $\impr$ and $\allr$ which we treat individually.

$\impr:$ then the last rule has the following form, where $\phi=\chi\imp\psi$.
\begin{center}
\AxiomC{$\lseq{\mathcal R,\lrel{w}{u}}{\mathcal T}{\fsa,\lform{u}{\chi}}{\fsb,\lform{u}{\psi},\lform{w}{\chi\imp\psi}}$}
\RightLabel{$\impr$}
\UnaryInfC{$\lseq{\mathcal R}{\mathcal T}{\fsa}{\fsb,\lform{w}{\chi\imp\psi},\lform{w}{\chi\imp\psi}}$}
\DisplayProof
\end{center}
Then, we proceed as follows.
\begin{center}
\AxiomC{$\lseq{\mathcal R,\lrel{w}{u}}{\mathcal T}{\fsa,\lform{u}{\chi}}{\fsb,\lform{u}{\psi},\lform{w}{\chi\imp \psi}}$}
\RightLabel{Lem.\ref{lem:impr-excl-inv}}
\dashedLine
\UnaryInfC{$\lseq{\mathcal R,\lrel{w}{u},\lrel{w}{v}}{\mathcal T}{\fsa,\lform{u}{\chi},\lform{v}{\chi}}{\fsb,\lform{u}{\psi},\lform{v}{\psi}}$}
\RightLabel{Lem.\ref{lem:brf-brb-admiss}}
\dashedLine
\UnaryInfC{$\lseq{\mathcal R,\lrel{w}{u},\lrel{u}{v}}{\mathcal T}{\fsa,\lform{u}{\chi},\lform{v}{\chi}}{\fsb,\lform{u}{\psi},\lform{v}{\psi}}$}
\RightLabel{Lem.\ref{lem:mrgf-mrgb-admiss}}
\dashedLine
\UnaryInfC{$\lseq{\mathcal R,\lrel{w}{u}}{\mathcal T}{\fsa,\lform{u}{\chi},\lform{u}{\chi}}{\fsb,\lform{u}{\psi},\lform{u}{\psi}}$}
\RightLabel{PIH}
\dashedLine
\UnaryInfC{$\lseq{\mathcal R,\lrel{w}{u}}{\mathcal T}{\fsa,\lform{u}{\chi}}{\fsb,\lform{u}{\psi}}$}
\RightLabel{$\impr$}
\UnaryInfC{$\lseq{\mathcal R}{\mathcal T}{\fsa}{\fsb,\lform{w}{\chi\imp\psi}}$}
\DisplayProof
\end{center}
First, we use the invertibility Lemma \ref{lem:impr-excl-inv} on the initial premise. Second, we use Lemma \ref{lem:brf-brb-admiss} to push the branch $\lrel{w}{v}$ to $u$ and obtain $\lrel{u}{v}$. Third, we merge the points $v$ and $u$ using Lemma \ref{lem:mrgf-mrgb-admiss}. Once this is done, we can apply the induction hypothesis PIH and reapply the rule.

$\allr:$ then the last rule has the following form, where $\phi=\forall x\psi$.
\begin{center}
\AxiomC{$\lseq{\mathcal R,\lrel{w}{u}}{\mathcal T,\lterm{u}{y}}{\fsa}{\fsb,\lform{u}{\psi(y/x)},\lform{w}{\forall x\psi}}$}
\RightLabel{$\allr$}
\UnaryInfC{$\lseq{\mathcal R}{\mathcal T}{\fsa}{\fsb,\lform{w}{\forall x\psi},\lform{w}{\forall x\psi}}$}
\DisplayProof
\end{center}
Then, we proceed as follows.
\begin{center}
\AxiomC{$\lseq{\mathcal R,\lrel{w}{u}}{\mathcal T,\lterm{u}{y}}{\fsa}{\fsb,\lform{u}{\psi(y/x)},\lform{w}{\forall x\psi}}$}
\RightLabel{Lem.\ref{lem:existsl-allr-inv}}
\dashedLine
\UnaryInfC{$\lseq{\mathcal R,\lrel{w}{u},\lrel{w}{v}}{\mathcal T,\lterm{u}{y},\lterm{v}{z}}{\fsa}{\fsb,\lform{u}{\psi(y/x)},\lform{v}{\psi(z/x)}}$}
\RightLabel{Lem.\ref{lem:brf-brb-admiss}}
\dashedLine
\UnaryInfC{$\lseq{\mathcal R,\lrel{w}{u},\lrel{u}{v}}{\mathcal T,\lterm{u}{y},\lterm{v}{z}}{\fsa}{\fsb,\lform{u}{\psi(y/x)},\lform{v}{\psi(z/x)}}$}
\RightLabel{Lem.\ref{lem:mrgf-mrgb-admiss}}
\dashedLine
\UnaryInfC{$\lseq{\mathcal R,\lrel{w}{u}}{\mathcal T,\lterm{u}{y},\lterm{u}{z}}{\fsa}{\fsb,\lform{u}{\psi(y/x)},\lform{u}{\psi(z/x)}}$}
\RightLabel{Lem.\ref{lem:psub-admiss}}
\dashedLine
\UnaryInfC{$\lseq{\mathcal R,\lrel{w}{u}}{\mathcal T,\lterm{u}{y},\lterm{u}{y}}{\fsa}{\fsb,\lform{u}{\psi(y/x)},\lform{u}{\psi(y/x)}}$}
\RightLabel{Lem.\ref{lem:ndr-admiss}}
\dashedLine
\UnaryInfC{$\lseq{\mathcal R,\lrel{w}{u}}{\mathcal T,\lterm{u}{y}}{\fsa}{\fsb,\lform{u}{\psi(y/x)},\lform{u}{\psi(y/x)}}$}
\RightLabel{PIH}
\dashedLine
\UnaryInfC{$\lseq{\mathcal R,\lrel{w}{u}}{\mathcal T,\lterm{u}{y}}{\fsa}{\fsb,\lform{u}{\psi(y/x)}}$}
\RightLabel{$\allr$}
\UnaryInfC{$\lseq{\mathcal R}{\mathcal T}{\fsa}{\fsb,\lform{w}{\forall x\psi}}$}
\DisplayProof
\end{center}
First, we use the invertibility Lemma \ref{lem:existsl-allr-inv} on the initial premise. Second, we use Lemma \ref{lem:brf-brb-admiss} to push the branch $\lrel{w}{v}$ to $u$ and obtain $\lrel{u}{v}$. Third, we merge the points $v$ and $u$ using Lemma \ref{lem:mrgf-mrgb-admiss}. Fourth, we use Lemma \ref{lem:psub-admiss} to rename the (fresh) variable $z$ to $y$. Fifth, we contract the labeled variable $\lterm{w}{y}$ using Lemma \ref{lem:ndr-admiss}. Once this is done, we can apply the induction hypothesis PIH and reapply the rule.
\end{proof}

\begin{customthm}{\ref{thm:cut-elim-int}}[Cut-elimination]
The $\cut$ rule is admissible in $\calc(\modclass)$. 
\end{customthm}


\begin{proof} 
We proceed by a primary induction (PIH) on the complexity of the cut formula, and a secondary induction (SIH) on the sum of the heights of the proofs of the premises of $\cut$. Assume that we have proofs of the following form, with $\rable{w}{u}$.
\begin{center}
\begin{tabular}{c@{\hspace{1cm}} c}
\AxiomC{$\pi_1$}
\RightLabel{$r_1$}
\UnaryInfC{$\lseq{\mathcal R}{\mathcal T}{\fsa}{\fsb,\lform{w}{\phi}}$}
\DisplayProof
&
\AxiomC{$\pi_2$}
\RightLabel{$r_2$}
\UnaryInfC{$\lseq{\mathcal R}{\mathcal T}{\fsa,\lform{u}{\phi}}{\fsb}$}
\DisplayProof \\
\end{tabular}
\end{center}
We prove that there is a proof of $\lseq{\mathcal R}{\mathcal T}{\fsa}{\fsb}$ by a case distinction on $r_1$ and $r_2$, the last rules applied in the above proofs.

\noindent \textbf{(I)} $\mathbf{r_1} = \ax:$ Then $\lseq{\mathcal R}{\mathcal T}{\fsa}{\fsb,\lform{w}{\phi}}$ is of the form $\lseq{\mathcal R}{\mathcal T}{\fsa_0,\lform{v_0}{p(\vec{t})}}{\fsb_0,\lform{v_1}{p(\vec{t})}}$ where $\rable{v_0}{v_1}$. 
If $\lform{v_1}{p(\vec{t})}$ is $\lform{w}{\phi}$, then we have that $\lseq{\mathcal R}{\mathcal T}{\fsa,\lform{u}{\phi}}{\fsb}$ is of the form $\lseq{\mathcal R}{\mathcal T}{\fsa_0,\lform{v_0}{p(\vec{t})},\lform{u}{p(\vec{t})}}{\fsb}$ where $\fsa=\fsa_0,\lform{v_0}{p(\vec{t})}$. Given that $\rable{v_0}{v_1}$ and $\rable{v_1}{u}$, we can apply Lemma \ref{lem:lft-admiss} on the latter to obtain a proof of $\lseq{\mathcal R}{\mathcal T}{\fsa_0,\lform{v_0}{p(\vec{t})},\lform{v_0}{p(\vec{t})}}{\fsb}$. Consequently, we obtain a proof of $\lseq{\mathcal R}{\mathcal T}{\fsa_0,\lform{v_0}{p(\vec{t})}}{\fsb}$, i.e.~of $\lseq{\mathcal R}{\mathcal T}{\fsa}{\fsb}$, using Lemma \ref{lem:ctr-admiss}. 
If $\lform{v_1}{p(\vec{t})}$ is not $\lform{w}{\phi}$, then we have that $\lseq{\mathcal R}{\mathcal T}{\fsa}{\fsb}$ is of the form $\lseq{\mathcal R}{\mathcal T}{\fsa_0,\lform{v_0}{p(\vec{t})}}{\fsb_0,\lform{v_1}{p(\vec{t})}}$ where $\rable{v_0}{v_1}$. The latter is easily provable using the rule $\ax$.

\noindent \textbf{(II)} $\mathbf{r_1=\botl:}$ Then $\lseq{\mathcal R}{\mathcal T}{\fsa}{\fsb,\lform{w}{\phi}}$ is of the form $\lseq{\mathcal R}{\mathcal T}{\fsa_0,\lform{v}{\bot}}{\fsb,\lform{w}{\phi}}$ where $\fsa=\fsa_0,\lform{v}{\bot}$. Consequently, we know that that $\lseq{\mathcal R}{\mathcal T}{\fsa}{\fsb}$ is of the form $\lseq{\mathcal R}{\mathcal T}{\fsa_0,\lform{v}{\bot}}{\fsb}$. We straightforwardly prove the latter via a single application of $\botl$.

\noindent \textbf{(III)} $\mathbf{r_1=\topr:}$ Then $\lseq{\mathcal R}{\mathcal T}{\fsa}{\fsb,\lform{w}{\phi}}$ is of the form $\lseq{\mathcal R}{\mathcal T}{\fsa}{\fsb_0,\lform{v}{\top}}$. If $\lform{w}{\phi}$ is $\lform{v}{\top}$, then we have that $\lseq{\mathcal R}{\mathcal T}{\fsa,\lform{u}{\phi}}{\fsb}$ is of the form $\lseq{\mathcal R}{\mathcal T}{\fsa,\lform{u}{\top}}{\fsb}$. Thus, we obtain a proof of $\lseq{\mathcal R}{\mathcal T}{\fsa}{\fsb}$ using Lemma \ref{lem:botr-topl-admiss}. If $\lform{w}{\phi}$ is not $\lform{v}{\top}$, then we have that $\lseq{\mathcal R}{\mathcal T}{\fsa}{\fsb}$ is of the form $\lseq{\mathcal R}{\mathcal T}{\fsa}{\fsb_0,\lform{v}{\top}}$. A single application of $\topr$ gets us to our goal.


\noindent \textbf{(IV)} $\mathbf{r_1} = \doms:$ Then $\lseq{\mathcal R}{\mathcal T}{\fsa}{\fsb,\lform{w}{\phi}}$ is of the form $\lseq{\mathcal R}{\mathcal T}{\fsa_0, \lform{v}{p(\vec{t})}}{\fsb,\lform{w}{\phi}}$ and we have a proof of $\lseq{\mathcal R}{\mathcal T, \lterm{v}{\VT{\vec{t}}}}{\fsa_0, \lform{v}{p(\vec{t})}}{\fsb,\lform{w}{\phi}}$. 
Consequently, we know that $\lseq{\mathcal R}{\mathcal T}{\fsa}{\fsb}$ is of the form $\lseq{\mathcal R}{\mathcal T}{\fsa_0, \lform{v}{p(\vec{t})}}{\fsb}$. We also have that $\lseq{\mathcal R}{\mathcal T}{\fsa,\lform{u}{\phi}}{\fsb}$ is of the form $\lseq{\mathcal R}{\mathcal T}{\fsa_0, \lform{v}{p(\vec{t})},\lform{u}{\phi}}{\fsb}$.
We can apply Lemma \ref{lem:wkv-admiss} repetitively on the proof of the latter to obtain a proof of 
$\lseq{\mathcal R}{\mathcal T, \lterm{v}{\VT{\vec{t}}}}{\fsa_0, \lform{v}{p(\vec{t})},\lform{u}{\phi}}{\fsb}$ which we call $S$, while preserving height. Then, we proceed as follows.

\begin{center}
\AxiomC{$\lseq{\mathcal R}{\mathcal T, \lterm{v}{\VT{\vec{t}}}}{\fsa_0, \lform{v}{p(\vec{t})}}{\fsb,\lform{w}{\phi}}$}
\AxiomC{$S$}
\RightLabel{SIH}
\dashedLine
\BinaryInfC{$\lseq{\mathcal R}{\mathcal T, \lterm{v}{\VT{\vec{t}}}}{\fsa_0, \lform{v}{p(\vec{t})}}{\fsb}$}
\RightLabel{$\doms$}
\UnaryInfC{$\lseq{\mathcal R}{\mathcal T}{\fsa_0, \lform{v}{p(\vec{t})}}{\fsb}$}
\DisplayProof
\end{center}
Note that the instance of SIH is justified because the sum of the heights of the proofs of the premises has decreased.

\noindent \textbf{(V)} $\mathbf{r_1=\conl:}$ Then $\lseq{\mathcal R}{\mathcal T}{\fsa}{\fsb,\lform{w}{\phi}}$ is of the form $\lseq{\mathcal R}{\mathcal T}{\fsa_0,\lform{v}{\psi\land\chi}}{\fsb,\lform{w}{\phi}}$ and we have a proof of $\lseq{\mathcal R}{\mathcal T}{\fsa_0,\lform{v}{\psi},\lform{v}{\chi}}{\fsb,\lform{w}{\phi}}$. Consequently, we know that $\lseq{\mathcal R}{\mathcal T}{\fsa}{\fsb}$ is of the form $\lseq{\mathcal R}{\mathcal T}{\fsa_0,\lform{v}{\psi\land\chi}}{\fsb}$. We also have that $\lseq{\mathcal R}{\mathcal T}{\fsa,\lform{u}{\phi}}{\fsb}$ is of the form $\lseq{\mathcal R}{\mathcal T}{\fsa_0,\lform{v}{\psi\land\chi},\lform{u}{\phi}}{\fsb}$. We apply Lemma \ref{lem:con-dis-inv} on the proof of the latter  sequent to obtain a proof of $\lseq{\mathcal R}{\mathcal T}{\fsa_0,\lform{v}{\psi},\lform{v}{\chi},\lform{u}{\phi}}{\fsb}$ preserving height. Thus, we proceed as follows.
\begin{center}
\AxiomC{$\lseq{\mathcal R}{\mathcal T}{\fsa_0,\lform{v}{\psi},\lform{v}{\chi}}{\fsb,\lform{w}{\phi}}$}
\AxiomC{$\lseq{\mathcal R}{\mathcal T}{\fsa_0,\lform{v}{\psi},\lform{v}{\chi},\lform{u}{\phi}}{\fsb}$}
\RightLabel{SIH}
\dashedLine
\BinaryInfC{$\lseq{\mathcal R}{\mathcal T}{\fsa_0,\lform{v}{\psi},\lform{v}{\chi}}{\fsb}$}
\RightLabel{$\conl$}
\UnaryInfC{$\lseq{\mathcal R}{\mathcal T}{\fsa_0,\lform{v}{\psi\land\chi}}{\fsb}$}
\DisplayProof
\end{center}
Note that the instance of SIH is justified as the sum of the heights of the proofs of its premises is smaller than the one of the initial cut.

\noindent \textbf{(VI)} $\mathbf{r_1=\conr:}$ Then $\lseq{\mathcal R}{\mathcal T}{\fsa}{\fsb,\lform{w}{\phi}}$ is of the form $\lseq{\mathcal R}{\mathcal T}{\fsa}{\fsb_0,\lform{v}{\psi\land\chi}}$ and we have proofs of $\lseq{\mathcal R}{\mathcal T}{\fsa}{\fsb_0,\lform{v}{\psi}}$ and $\lseq{\mathcal R}{\mathcal T}{\fsa}{\fsb_0,\lform{v}{\chi}}$. 
If $\lform{w}{\phi}$ is $\lform{v}{\psi\land\chi}$, then we have proofs of $\lseq{\mathcal R}{\mathcal T}{\fsa}{\fsb,\lform{v}{\psi}}$ and $\lseq{\mathcal R}{\mathcal T}{\fsa}{\fsb,\lform{v}{\chi}}$, and $\lseq{\mathcal R}{\mathcal T}{\fsa,\lform{u}{\phi}}{\fsb}$ is of the form $\lseq{\mathcal R}{\mathcal T}{\fsa,\lform{u}{\psi\land\chi}}{\fsb}$. Then, we proceed as follows where $\pi$ is the first proof.
\begin{small}
\begin{center}
\AxiomC{$\lseq{\mathcal R}{\mathcal T}{\fsa}{\fsb,\lform{v}{\chi}}$}
\RightLabel{Lem.\ref{lem:iwk-admiss}}
\dashedLine
\UnaryInfC{$\lseq{\mathcal R}{\mathcal T}{\fsa,\lform{u}{\psi}}{\fsb,\lform{v}{\chi}}$}
\AxiomC{$\lseq{\mathcal R}{\mathcal T}{\fsa,\lform{u}{\psi\land\chi}}{\fsb}$}
\RightLabel{Lem.\ref{lem:con-dis-inv}}
\dashedLine
\UnaryInfC{$\lseq{\mathcal R}{\mathcal T}{\fsa,\lform{u}{\psi},\lform{u}{\chi}}{\fsb}$}
\RightLabel{PIH}
\dashedLine
\BinaryInfC{$\lseq{\mathcal R}{\mathcal T}{\fsa,\lform{u}{\psi}}{\fsb}$}
\DisplayProof
\end{center}
\end{small}
\begin{center}
\AxiomC{$\lseq{\mathcal R}{\mathcal T}{\fsa}{\fsb,\lform{v}{\psi}}$}

\AxiomC{$\pi$}

\RightLabel{PIH}
\dashedLine
\BinaryInfC{$\lseq{\mathcal R}{\mathcal T}{\fsa}{\fsb}$}
\DisplayProof
\end{center}
If $\lform{w}{\phi}$ is not $\lform{v}{\psi\land\chi}$, then we have proofs of $\lseq{\mathcal R}{\mathcal T}{\fsa}{\fsb_1,\lform{v}{\psi},\lform{w}{\phi}}$ and $\lseq{\mathcal R}{\mathcal T}{\fsa}{\fsb_1,\lform{v}{\chi},\lform{w}{\phi}}$, and $\lseq{\mathcal R}{\mathcal T}{\fsa,\lform{u}{\phi}}{\fsb}$ is of the form $\lseq{\mathcal R}{\mathcal T}{\fsa,\lform{u}{\phi}}{\fsb_1,\lform{v}{\psi\land\chi}}$. Then, we proceed as follows where $\pi$ is the first proof displayed.

\begin{small}
\begin{center}
\AxiomC{$\lseq{\mathcal R}{\mathcal T}{\fsa}{\fsb_1,\lform{v}{\psi},\lform{w}{\phi}}$}
\AxiomC{$\lseq{\mathcal R}{\mathcal T}{\fsa,\lform{u}{\phi}}{\fsb_1,\lform{v}{\psi\land\chi}}$}
\RightLabel{Lem.\ref{lem:con-dis-inv}}
\dashedLine
\UnaryInfC{$\lseq{\mathcal R}{\mathcal T}{\fsa,\lform{u}{\phi}}{\fsb_1,\lform{v}{\psi}}$}
\RightLabel{SIH}
\dashedLine
\BinaryInfC{$\lseq{\mathcal R}{\mathcal T}{\fsa}{\fsb_1,\lform{v}{\psi}}$}
\DisplayProof
\end{center}
\end{small}
\begin{small}
\begin{center}
\AxiomC{$\pi$}

\AxiomC{$\lseq{\mathcal R}{\mathcal T}{\fsa}{\fsb_1,\lform{v}{\chi},\lform{w}{\phi}}$}
\AxiomC{$\lseq{\mathcal R}{\mathcal T}{\fsa,\lform{u}{\phi}}{\fsb_1,\lform{v}{\psi\land\chi}}$}
\RightLabel{Lem.\ref{lem:con-dis-inv}}
\dashedLine
\UnaryInfC{$\lseq{\mathcal R}{\mathcal T}{\fsa,\lform{u}{\phi}}{\fsb_1,\lform{v}{\chi}}$}
\RightLabel{SIH}
\dashedLine
\BinaryInfC{$\lseq{\mathcal R}{\mathcal T}{\fsa}{\fsb_1,\lform{v}{\chi}}$}

\RightLabel{$\conr$}
\BinaryInfC{$\lseq{\mathcal R}{\mathcal T}{\fsa}{\fsb_1,\lform{v}{\psi\land\chi}}$}
\DisplayProof
\end{center}
\end{small}

\noindent \textbf{(VII)} $\mathbf{r_1=\disl:}$ Then $\lseq{\mathcal R}{\mathcal T}{\fsa}{\fsb,\lform{w}{\phi}}$ is of the form $\lseq{\mathcal R}{\mathcal T}{\fsa_0,\lform{v}{\psi\lor\chi}}{\fsb,\lform{w}{\phi}}$ and we have proofs of $\lseq{\mathcal R}{\mathcal T}{\fsa_0,\lform{v}{\psi}}{\fsb,\lform{w}{\phi}}$ and $\lseq{\mathcal R}{\mathcal T}{\fsa_0,\lform{v}{\chi}}{\fsb,\lform{w}{\phi}}$. Consequently, we know that $\lseq{\mathcal R}{\mathcal T}{\fsa}{\fsb}$ is of the form $\lseq{\mathcal R}{\mathcal T}{\fsa_0,\lform{v}{\psi\lor\chi}}{\fsb}$. We also have that $\lseq{\mathcal R}{\mathcal T}{\fsa,\lform{u}{\phi}}{\fsb}$ is of the form $\lseq{\mathcal R}{\mathcal T}{\fsa_0,\lform{v}{\psi\lor\chi},\lform{u}{\phi}}{\fsb}$. We apply Lemma \ref{lem:con-dis-inv} on the proof of the latter  sequent to obtain proofs of $\lseq{\mathcal R}{\mathcal T}{\fsa_0,\lform{v}{\psi},\lform{u}{\phi}}{\fsb}$ and $\lseq{\mathcal R}{\mathcal T}{\fsa_0,\lform{v}{\chi},\lform{u}{\phi}}{\fsb}$ preserving height. Thus, we proceed as follows where $\pi$ is the first proof.
\begin{center}
\AxiomC{$\lseq{\mathcal R}{\mathcal T}{\fsa_0,\lform{v}{\chi}}{\fsb,\lform{w}{\phi}}$}
\AxiomC{$\lseq{\mathcal R}{\mathcal T}{\fsa_0,\lform{v}{\chi},\lform{u}{\phi}}{\fsb}$}
\RightLabel{SIH}
\dashedLine
\BinaryInfC{$\lseq{\mathcal R}{\mathcal T}{\fsa_0,\lform{v}{\chi}}{\fsb}$}
\DisplayProof
\end{center}
\begin{small}
\begin{center}
\AxiomC{$\lseq{\mathcal R}{\mathcal T}{\fsa_0,\lform{v}{\psi}}{\fsb,\lform{w}{\phi}}$}
\AxiomC{$\lseq{\mathcal R}{\mathcal T}{\fsa_0,\lform{v}{\psi},\lform{u}{\phi}}{\fsb}$}
\RightLabel{SIH}
\dashedLine
\BinaryInfC{$\lseq{\mathcal R}{\mathcal T}{\fsa_0,\lform{v}{\psi}}{\fsb}$}

\AxiomC{$\pi$}

\RightLabel{$\disl$}
\BinaryInfC{$\lseq{\mathcal R}{\mathcal T}{\fsa_0,\lform{v}{\psi\lor\chi}}{\fsb}$}
\DisplayProof
\end{center}
\end{small}
Note that both instances of SIH are justified as the sum of the heights of the proofs of their premises is smaller than the one of the initial cut.

\noindent \textbf{(VIII)} $\mathbf{r_1=\disr:}$ Then $\lseq{\mathcal R}{\mathcal T}{\fsa}{\fsb,\lform{w}{\phi}}$ is of the form $\lseq{\mathcal R}{\mathcal T}{\fsa}{\fsb_0,\lform{v}{\psi\lor\chi}}$ and we have a proof of $\lseq{\mathcal R}{\mathcal T}{\fsa}{\fsb_0,\lform{v}{\psi},\lform{v}{\chi}}$. 
If $\lform{w}{\phi}$ is $\lform{v}{\psi\lor\chi}$, then we have a proof of $\lseq{\mathcal R}{\mathcal T}{\fsa}{\fsb,\lform{v}{\psi},\lform{v}{\chi}}$, and $\lseq{\mathcal R}{\mathcal T}{\fsa,\lform{u}{\phi}}{\fsb}$ is of the form $\lseq{\mathcal R}{\mathcal T}{\fsa,\lform{u}{\psi\lor\chi}}{\fsb}$. Then, we proceed as follows where $\pi$ is the first proof displayed.
\begin{center}
\AxiomC{$\lseq{\mathcal R}{\mathcal T}{\fsa}{\fsb,\lform{v}{\psi},\lform{v}{\chi}}$}

\AxiomC{$\lseq{\mathcal R}{\mathcal T}{\fsa,\lform{u}{\psi\lor\chi}}{\fsb}$}
\RightLabel{Lem.\ref{lem:con-dis-inv}}
\dashedLine
\UnaryInfC{$\lseq{\mathcal R}{\mathcal T}{\fsa,\lform{u}{\chi}}{\fsb}$}
\RightLabel{Lem.\ref{lem:iwk-admiss}}
\dashedLine
\UnaryInfC{$\lseq{\mathcal R}{\mathcal T}{\fsa,\lform{u}{\chi}}{\fsb,\lform{v}{\psi}}$}

\RightLabel{PIH}
\dashedLine
\BinaryInfC{$\lseq{\mathcal R}{\mathcal T}{\fsa}{\fsb,\lform{v}{\psi}}$}
\DisplayProof
\end{center}
\begin{center}
\AxiomC{$\pi$}

\AxiomC{$\lseq{\mathcal R}{\mathcal T}{\fsa,\lform{u}{\psi\lor\chi}}{\fsb}$}
\RightLabel{Lem.\ref{lem:con-dis-inv}}
\dashedLine
\UnaryInfC{$\lseq{\mathcal R}{\mathcal T}{\fsa,\lform{u}{\psi}}{\fsb}$}

\RightLabel{PIH}
\dashedLine
\BinaryInfC{$\lseq{\mathcal R}{\mathcal T}{\fsa}{\fsb}$}
\DisplayProof
\end{center}
If $\lform{w}{\phi}$ is not $\lform{v}{\psi\lor\chi}$, then we have a proof of $\lseq{\mathcal R}{\mathcal T}{\fsa}{\fsb_1,\lform{v}{\psi},\lform{v}{\chi},\lform{w}{\phi}}$, and $\lseq{\mathcal R}{\mathcal T}{\fsa,\lform{u}{\phi}}{\fsb}$ is of the form $\lseq{\mathcal R}{\mathcal T}{\fsa,\lform{u}{\phi}}{\fsb_1,\lform{v}{\psi\lor\chi}}$. Then, we proceed as follows.
\begin{center}
\AxiomC{$\lseq{\mathcal R}{\mathcal T}{\fsa}{\fsb_1,\lform{v}{\psi},\lform{v}{\chi},\lform{w}{\phi}}$}

\AxiomC{$\lseq{\mathcal R}{\mathcal T}{\fsa,\lform{u}{\phi}}{\fsb_1,\lform{v}{\psi\lor\chi}}$}
\RightLabel{Lem.\ref{lem:con-dis-inv}}
\dashedLine
\UnaryInfC{$\lseq{\mathcal R}{\mathcal T}{\fsa,\lform{u}{\phi}}{\fsb_1,\lform{v}{\psi},\lform{v}{\chi}}$}

\RightLabel{SIH}
\dashedLine
\BinaryInfC{$\lseq{\mathcal R}{\mathcal T}{\fsa}{\fsb_1,\lform{v}{\psi},\lform{v}{\chi}}$}
\RightLabel{$\disr$}
\UnaryInfC{$\lseq{\mathcal R}{\mathcal T}{\fsa}{\fsb_1,\lform{v}{\psi\lor\chi}}$}
\DisplayProof
\end{center}

\noindent \textbf{(IX)} $\mathbf{r_1=\impl:}$ Then $\lseq{\mathcal R}{\mathcal T}{\fsa}{\fsb,\lform{w}{\phi}}$ is of the form $\lseq{\mathcal R}{\mathcal T}{\fsa_0,\lform{v}{\psi\imp\chi}}{\fsb,\lform{w}{\phi}}$ and we have proofs of $\lseq{\mathcal R}{\mathcal T}{\fsa_0,\lform{v}{\psi\imp\chi}}{\fsb,\lform{w}{\phi},\lform{v_0}{\psi}}$ and $\lseq{\mathcal R}{\mathcal T}{\fsa_0,\lform{v}{\psi\imp\chi},\lform{v_0}{\chi}}{\fsb,\lform{w}{\phi}}$ such that $\rable{v}{v_0}$. Consequently, we know that $\lseq{\mathcal R}{\mathcal T}{\fsa}{\fsb}$ is of the form $\lseq{\mathcal R}{\mathcal T}{\fsa_0,\lform{v}{\psi\imp\chi}}{\fsb}$. We also have that $\lseq{\mathcal R}{\mathcal T}{\fsa,\lform{u}{\phi}}{\fsb}$ is of the form $\lseq{\mathcal R}{\mathcal T}{\fsa_0,\lform{v}{\psi\imp\chi},\lform{u}{\phi}}{\fsb}$. Given that $\rable{v}{v}$, we apply Lemma \ref{lem:impl-excr-existsri-alll-inv} on the proof of the latter  sequent to obtain proofs of $\lseq{\mathcal R}{\mathcal T}{\fsa_0,\lform{v}{\psi\imp\chi},\lform{u}{\phi}}{\fsb,\lform{v}{\psi}}$, which we call $S_0$, and $\lseq{\mathcal R}{\mathcal T}{\fsa_0,\lform{v}{\psi\imp\chi},\lform{v}{\chi},\lform{u}{\phi}}{\fsb}$, which we call $S_1$, preserving height. Thus, we proceed as follows where $\pi$ is the first proof displayed.
\begin{center}
\AxiomC{$\lseq{\mathcal R}{\mathcal T}{\fsa_0,\lform{v}{\psi\imp\chi}}{\fsb,\lform{w}{\phi},\lform{v}{\psi}}$}
\AxiomC{$S_0$}
\RightLabel{SIH}
\dashedLine
\BinaryInfC{$\lseq{\mathcal R}{\mathcal T}{\fsa_0,\lform{v}{\psi\imp\chi}}{\fsb,\lform{v}{\psi}}$}
\DisplayProof
\end{center}

\begin{center}
\AxiomC{$\pi$}
\AxiomC{$\lseq{\mathcal R}{\mathcal T}{\fsa_0,\lform{v}{\psi\imp\chi},\lform{v}{\chi}}{\fsb,\lform{w}{\phi}}$}
\AxiomC{$S_1$}
\RightLabel{SIH}
\dashedLine
\BinaryInfC{$\lseq{\mathcal R}{\mathcal T}{\fsa_0,\lform{v}{\psi\imp\chi},\lform{v}{\chi}}{\fsb}$}

\RightLabel{$\impl$}
\BinaryInfC{$\lseq{\mathcal R}{\mathcal T}{\fsa_0,\lform{v}{\psi\imp\chi}}{\fsb}$}
\DisplayProof
\end{center}
Note that both instances of SIH are justified as the sum of the heights of the proofs of their premises is smaller than the one of the initial cut.

\noindent \textbf{(X)} $\mathbf{r_1=\impr:}$ Then $\lseq{\mathcal R}{\mathcal T}{\fsa}{\fsb,\lform{w}{\phi}}$ is of the form $\lseq{\mathcal R}{\mathcal T}{\fsa}{\fsb_0,\lform{v}{\psi\imp\chi}}$ and we have a proof of $\lseq{\mathcal R,\lrel{v}{v_0}}{\mathcal T}{\fsa,\lform{v_0}{\psi}}{\fsb_0,\lform{v_0}{\chi}}$. If $\lform{w}{\phi}$ is not $\lform{v}{\psi\imp\chi}$, then we have a proof of $\lseq{\mathcal R,\lrel{v}{v_0}}{\mathcal T}{\fsa,\lform{v_0}{\psi}}{\fsb_1,\lform{v_0}{\chi},\lform{w}{\phi}}$, which we call $S$, and $\lseq{\mathcal R}{\mathcal T}{\fsa,\lform{u}{\phi}}{\fsb}$ is of the form $\lseq{\mathcal R}{\mathcal T}{\fsa,\lform{u}{\phi}}{\fsb_1,\lform{v}{\psi\imp\chi}}$. Then, we proceed as follows.

\begin{center}
\AxiomC{$S$}

\AxiomC{$\lseq{\mathcal R}{\mathcal T}{\fsa,\lform{u}{\phi}}{\fsb_1,\lform{v}{\psi\imp\chi}}$}
\RightLabel{Lem.\ref{lem:impr-excl-inv}}
\dashedLine
\UnaryInfC{$\lseq{\mathcal R,\lrel{v}{v_0}}{\mathcal T}{\fsa,\lform{u}{\phi},\lform{v_0}{\psi}}{\fsb_1,\lform{v_0}{\chi}}$}

\RightLabel{SIH}
\dashedLine
\BinaryInfC{$\lseq{\mathcal R,\lrel{v}{v_0}}{\mathcal T}{\fsa,\lform{v_0}{\psi}}{\fsb_1,\lform{v_0}{\chi}}$}
\RightLabel{$\impr$}
\UnaryInfC{$\lseq{\mathcal R}{\mathcal T}{\fsa}{\fsb_1,\lform{v}{\psi\imp\chi}}$}
\DisplayProof
\end{center}

If $\lform{w}{\phi}$ is $\lform{v}{\psi\imp\chi}$, then we have a proof of $\lseq{\mathcal R,\lrel{v}{v_0}}{\mathcal T}{\fsa,\lform{v_0}{\psi}}{\fsb,\lform{v_0}{\chi}}$, and $\lseq{\mathcal R}{\mathcal T}{\fsa,\lform{u}{\phi}}{\fsb}$ is of the form $\lseq{\mathcal R}{\mathcal T}{\fsa,\lform{u}{\psi\imp\chi}}{\fsb}$. In this case, we need to consider the shape of $r_2$. If $\lform{u}{\psi\imp\chi}$ is not principal in $r_2$, then we apply the hp-invertibility of $r_2$ to the proof of $\lseq{\mathcal R}{\mathcal T}{\fsa}{\fsb,\lform{v}{\psi\imp\chi}}$ and invoke SIH with the resulting proof to cut $\lform{u}{\psi\imp\chi}$ from the premises of $r_2$, and then reapply $r_2$ to reach our goal. If $\lform{u}{\psi\imp\chi}$ is principal in $r_2$, then the premises of $r_2$ are of the shape $\lseq{\mathcal R}{\mathcal T}{\fsa,\lform{u}{\psi\imp\chi}}{\fsb,\lform{v_1}{\psi}}$ and $\lseq{\mathcal R}{\mathcal T}{\fsa,\lform{u}{\psi\imp\chi},\lform{v_1}{\chi}}{\fsb}$ where $\rable{u}{v_1}$. Note that by the shape of the original cut we know that $\rable{w}{u}$ and by our assumption that $\lform{w}{\phi}$ is $\lform{v}{\psi\imp\chi}$, we have that $w = v$. Therefore, $\rable{v}{v_1}$ since $\rable{v}{u}$ and $\rable{u}{v_1}$, meaning we can proceed as follows where $\pi_0$ and $\pi_1$ are (in this order) the first proofs given.

\begin{center}
\AxiomC{$\lseq{\mathcal R}{\mathcal T}{\fsa}{\fsb,\lform{v}{\psi\imp\chi}}$}
\RightLabel{Lem.\ref{lem:iwk-admiss}}
\dashedLine
\UnaryInfC{$\lseq{\mathcal R}{\mathcal T}{\fsa}{\fsb,\lform{v_1}{\psi},\lform{v}{\psi\imp\chi}}$}
\AxiomC{$\lseq{\mathcal R}{\mathcal T}{\fsa,\lform{u}{\psi\imp\chi}}{\fsb,\lform{v_1}{\psi}}$}
\RightLabel{SIH}
\dashedLine
\BinaryInfC{$\lseq{\mathcal R}{\mathcal T}{\fsa}{\fsb,\lform{v_1}{\psi}}$}
\DisplayProof
\end{center}
\begin{center}
\AxiomC{$\lseq{\mathcal R}{\mathcal T}{\fsa}{\fsb,\lform{v}{\psi\imp\chi}}$}
\RightLabel{Lem.\ref{lem:iwk-admiss}}
\dashedLine
\UnaryInfC{$\lseq{\mathcal R}{\mathcal T}{\fsa,\lform{v_1}{\chi}}{\fsb,\lform{v}{\psi\imp\chi}}$}
\AxiomC{$\lseq{\mathcal R}{\mathcal T}{\fsa,\lform{u}{\psi\imp\chi},\lform{v_1}{\chi}}{\fsb}$}
\RightLabel{SIH}
\dashedLine
\BinaryInfC{$\lseq{\mathcal R}{\mathcal T}{\fsa,\lform{v_1}{\chi}}{\fsb}$}
\DisplayProof
\end{center}
\begin{center}
\AxiomC{$\pi_0$}

\AxiomC{$\lseq{\mathcal R,\lrel{v}{v_0}}{\mathcal T}{\fsa,\lform{v_0}{\psi}}{\fsb,\lform{v_0}{\chi}}$}
\RightLabel{Lem.\ref{lem:brf-brb-admiss}}
\dashedLine
\UnaryInfC{$\lseq{\mathcal R,\lrel{v_1}{v_0}}{\mathcal T}{\fsa,\lform{v_0}{\psi}}{\fsb,\lform{v_0}{\chi}}$}
\RightLabel{Lem.\ref{lem:mrgf-mrgb-admiss}}
\dashedLine
\UnaryInfC{$\lseq{\mathcal R}{\mathcal T}{\fsa,\lform{v_1}{\psi}}{\fsb,\lform{v_1}{\chi}}$}
\AxiomC{$\pi_1$}
\RightLabel{PIH}
\dashedLine
\BinaryInfC{$\lseq{\mathcal R}{\mathcal T}{\fsa,\lform{v_1}{\psi}}{\fsb}$}
\RightLabel{PIH}
\dashedLine
\BinaryInfC{$\lseq{\mathcal R}{\mathcal T}{\fsa}{\fsb}$}
\DisplayProof
\end{center}

\noindent \textbf{(XI)} $\mathbf{r_1=\excl:}$ Then $\lseq{\mathcal R}{\mathcal T}{\fsa}{\fsb,\lform{w}{\phi}}$ is of the form $\lseq{\mathcal R}{\mathcal T}{\fsa_0,\lform{v}{\psi\exc\chi}}{\fsb,\lform{w}{\phi}}$ and we a have proof of $\lseq{\mathcal R,\lrel{v_0}{v}}{\mathcal T}{\fsa_0,\lform{v_0}{\psi}}{\fsb,\lform{w}{\phi},\lform{v_0}{\chi}}$. Consequently, we know that $\lseq{\mathcal R}{\mathcal T}{\fsa}{\fsb}$ is of the form $\lseq{\mathcal R}{\mathcal T}{\fsa_0,\lform{v}{\psi\exc\chi}}{\fsb}$. We also have that $\lseq{\mathcal R}{\mathcal T}{\fsa,\lform{u}{\phi}}{\fsb}$ is of the form $\lseq{\mathcal R}{\mathcal T}{\fsa_0,\lform{v}{\psi\exc\chi},\lform{u}{\phi}}{\fsb}$. 
We apply Lemma \ref{lem:impr-excl-inv} on the proof of the latter  sequent to obtain a proof of $\lseq{\mathcal R,\lrel{v_0}{v}}{\mathcal T}{\fsa_0,\lform{v_0}{\psi},\lform{u}{\phi}}{\fsb,\lform{v_0}{\chi}}$, which we call $S$. Thus, we proceed as follows.
\begin{center}
\AxiomC{$\lseq{\mathcal R,\lrel{v_0}{v}}{\mathcal T}{\fsa_0,\lform{v_0}{\psi}}{\fsb,\lform{w}{\phi},\lform{v_0}{\chi}}$}

\AxiomC{$S$}

\RightLabel{SIH}
\dashedLine
\BinaryInfC{$\lseq{\mathcal R,\lrel{v_0}{v}}{\mathcal T}{\fsa_0,\lform{v_0}{\psi}}{\fsb,\lform{v_0}{\chi}}$}
\RightLabel{$\excl$}
\UnaryInfC{$\lseq{\mathcal R}{\mathcal T}{\fsa_0,\lform{v}{\psi\exc\chi}}{\fsb}$}
\DisplayProof
\end{center}
Note that the instance of SIH is justified as the sum of the heights of the proofs of the premises is smaller than the one of the initial cut.

\noindent \textbf{(XII)} $\mathbf{r_1=\excr:}$ Then $\lseq{\mathcal R}{\mathcal T}{\fsa}{\fsb,\lform{w}{\phi}}$ is of the form $\lseq{\mathcal R}{\mathcal T}{\fsa}{\fsb_0,\lform{v}{\psi\exc\chi}}$ and we have proofs of $\lseq{\mathcal R}{\mathcal T}{\fsa}{\fsb_0,\lform{v}{\psi\exc\chi},\lform{v_0}{\psi}}$ and $\lseq{\mathcal R}{\mathcal T}{\fsa,\lform{v_0}{\chi}}{\fsb_0,\lform{v}{\psi\exc\chi}}$ where $\rable{v_0}{v}$. If $\lform{w}{\phi}$ is not $\lform{v}{\psi\exc\chi}$, then we have proofs of $\lseq{\mathcal R}{\mathcal T}{\fsa}{\fsb_1,\lform{v}{\psi\exc\chi},\lform{v_0}{\psi},\lform{w}{\phi}}$, which we call $S_0$, and $\lseq{\mathcal R}{\mathcal T}{\fsa,\lform{v_0}{\chi}}{\fsb_1,\lform{v}{\psi\exc\chi},\lform{w}{\phi}}$, which we call $S_1$, and $\lseq{\mathcal R}{\mathcal T}{\fsa,\lform{u}{\phi}}{\fsb}$ is of the form $\lseq{\mathcal R}{\mathcal T}{\fsa,\lform{u}{\phi}}{\fsb_1,\lform{v}{\psi\exc\chi}}$. Then, we proceed as follows where $\pi$ is the first proof displayed.

\begin{center}
\AxiomC{$S_0$}
\AxiomC{$\lseq{\mathcal R}{\mathcal T}{\fsa,\lform{u}{\phi}}{\fsb_1,\lform{v}{\psi\exc\chi}}$}
\RightLabel{Lem.\ref{lem:iwk-admiss}}
\dashedLine
\UnaryInfC{$\lseq{\mathcal R}{\mathcal T}{\fsa,\lform{u}{\phi}}{\fsb_1,\lform{v}{\psi\exc\chi},\lform{v_0}{\psi}}$}
\RightLabel{SIH}
\dashedLine
\BinaryInfC{$\lseq{\mathcal R}{\mathcal T}{\fsa}{\fsb_1,\lform{v}{\psi\exc\chi},\lform{v_0}{\psi}}$}
\DisplayProof
\end{center}
\begin{center}
\AxiomC{$\pi$}

\AxiomC{$S_1$}
\AxiomC{$\lseq{\mathcal R}{\mathcal T}{\fsa,\lform{u}{\phi}}{\fsb_1,\lform{v}{\psi\exc\chi}}$}
\RightLabel{Lem.\ref{lem:iwk-admiss}}
\dashedLine
\UnaryInfC{$\lseq{\mathcal R}{\mathcal T}{\fsa,\lform{v_0}{\chi},\lform{u}{\phi}}{\fsb_1,\lform{v}{\psi\exc\chi}}$}
\RightLabel{SIH}
\dashedLine
\BinaryInfC{$\lseq{\mathcal R}{\mathcal T}{\fsa,\lform{v_0}{\chi}}{\fsb_1,\lform{v}{\psi\exc\chi}}$}

\RightLabel{$\excr$}
\BinaryInfC{$\lseq{\mathcal R}{\mathcal T}{\fsa}{\fsb_1,\lform{v}{\psi\exc\chi}}$}
\DisplayProof
\end{center}

If $\lform{w}{\phi}$ is $\lform{v}{\psi\exc\chi}$, then we have proofs of $\lseq{\mathcal R}{\mathcal T}{\fsa}{\fsb,\lform{v}{\psi\exc\chi},\lform{v_0}{\psi}}$ and $\lseq{\mathcal R}{\mathcal T}{\fsa,\lform{v_0}{\chi}}{\fsb,\lform{v}{\psi\exc\chi}}$, and $\lseq{\mathcal R}{\mathcal T}{\fsa,\lform{u}{\phi}}{\fsb}$ is of the form $\lseq{\mathcal R}{\mathcal T}{\fsa,\lform{u}{\psi\exc\chi}}{\fsb}$. In this case, we need to consider the shape of $r_2$. If $\lform{u}{\psi\exc\chi}$ is not principal in $r_2$, then we apply the hp-invertibility of $r_2$ to the proof of $\lseq{\mathcal R}{\mathcal T}{\fsa}{\fsb,\lform{v}{\psi\exc\chi}}$ and then invoke SIH with the resulting proof to cut $\lform{u}{\psi\exc\chi}$ from the premises of $r_2$, and then reapply $r_2$ to reach our goal. If $\lform{u}{\psi\exc\chi}$ is principal in $r_2$, then the premise of $r_2$ is of the shape $\lseq{\mathcal R,\lrel{v_1}{u}}{\mathcal T}{\fsa,\lform{v_1}{\psi}}{\fsb,\lform{v_1}{\chi}}$. Note that by shape of the original cut we have that $\rable{w}{u}$, by the fact that $\lform{w}{\phi}$ is $\lform{v}{\psi\exc\chi}$, we know that $w = v$, and by what was said above, $\rable{v_0}{v}$. Therefore, $\rable{v_0}{u}$, meaning we can proceed as follows where $\pi_0$ and $\pi_1$ are (in this order) the first proofs given.
\begin{center}
\AxiomC{$\lseq{\mathcal R}{\mathcal T}{\fsa}{\fsb,\lform{v}{\psi\exc\chi}}$}
\RightLabel{Lem.\ref{lem:iwk-admiss}}
\dashedLine
\UnaryInfC{$\lseq{\mathcal R}{\mathcal T}{\fsa}{\fsb,\lform{v_0}{\psi},\lform{v}{\psi\exc\chi}}$}
\AxiomC{$\lseq{\mathcal R}{\mathcal T}{\fsa,\lform{u}{\psi\exc\chi}}{\fsb,\lform{v_0}{\psi}}$}
\RightLabel{SIH}
\dashedLine
\BinaryInfC{$\lseq{\mathcal R}{\mathcal T}{\fsa}{\fsb,\lform{v_0}{\psi}}$}
\DisplayProof
\end{center}
\begin{center}
\AxiomC{$\lseq{\mathcal R}{\mathcal T}{\fsa}{\fsb,\lform{v}{\psi\exc\chi}}$}
\RightLabel{Lem.\ref{lem:iwk-admiss}}
\dashedLine
\UnaryInfC{$\lseq{\mathcal R}{\mathcal T}{\fsa,\lform{v_0}{\chi}}{\fsb,\lform{v}{\psi\exc\chi}}$}
\AxiomC{$\lseq{\mathcal R}{\mathcal T}{\fsa,\lform{u}{\psi\exc\chi},\lform{v_0}{\chi}}{\fsb}$}
\RightLabel{SIH}
\dashedLine
\BinaryInfC{$\lseq{\mathcal R}{\mathcal T}{\fsa,\lform{v_0}{\chi}}{\fsb}$}
\RightLabel{Lem.\ref{lem:iwk-admiss}}
\dashedLine
\UnaryInfC{$\lseq{\mathcal R}{\mathcal T}{\fsa,\lform{v_0}{\psi},\lform{v_0}{\chi}}{\fsb}$}
\DisplayProof
\end{center}
\begin{center}
\AxiomC{$\pi_0$}

\AxiomC{$\lseq{\mathcal R,\lrel{v_{1}}{u}}{\mathcal T}{\fsa,\lform{v_1}{\psi}}{\fsb,\lform{v_1}{\chi}}$}
\RightLabel{Lem.\ref{lem:brf-brb-admiss}}
\dashedLine
\UnaryInfC{$\lseq{\mathcal R,\lrel{v_1}{v_0}}{\mathcal T}{\fsa,\lform{v_1}{\psi}}{\fsb,\lform{v_1}{\chi}}$}
\RightLabel{Lem.\ref{lem:mrgf-mrgb-admiss}}
\dashedLine
\UnaryInfC{$\lseq{\mathcal R}{\mathcal T}{\fsa,\lform{v_0}{\psi}}{\fsb,\lform{v_0}{\chi}}$}
\AxiomC{$\pi_1$}
\RightLabel{PIH}
\dashedLine
\BinaryInfC{$\lseq{\mathcal R}{\mathcal T}{\fsa,\lform{v_0}{\psi}}{\fsb}$}

\RightLabel{PIH}
\dashedLine
\BinaryInfC{$\lseq{\mathcal R}{\mathcal T}{\fsa}{\fsb}$}
\DisplayProof
\end{center}

\noindent \textbf{(XIII)} $\mathbf{r_1=\existsl:}$ Then $\lseq{\mathcal R}{\mathcal T}{\fsa}{\fsb,\lform{w}{\phi}}$ is of the form $\lseq{\mathcal R}{\mathcal T}{\fsa_0,\lform{v}{\exists x\psi}}{\fsb,\lform{w}{\phi}}$ and we a have proof of $\lseq{\mathcal R}{\mathcal T,\lterm{v}{x}}{\fsa_0,\lform{v}{\psi(y/x)}}{\fsb,\lform{w}{\phi}}$, which we call $S$. Consequently, we know that $\lseq{\mathcal R}{\mathcal T}{\fsa}{\fsb}$ is of the form $\lseq{\mathcal R}{\mathcal T}{\fsa_0,\lform{v}{\exists\psi}}{\fsb}$. We also have that $\lseq{\mathcal R}{\mathcal T}{\fsa,\lform{u}{\phi}}{\fsb}$ is of the form $\lseq{\mathcal R}{\mathcal T}{\fsa_0,\lform{v}{\exists x\psi},\lform{u}{\phi}}{\fsb}$. 
We apply Lemma \ref{lem:existsl-allr-inv} on the proof of the latter  sequent to obtain a proof of $\lseq{\mathcal R}{\mathcal T,\lterm{v}{z}}{\fsa_0,\lform{v}{\psi(z/x)},\lform{u}{\phi}}{\fsb}$. Thus, we proceed as follows.
\begin{center}
\AxiomC{$S$}

\AxiomC{$\lseq{\mathcal R}{\mathcal T,\lterm{v}{z}}{\fsa_0,\lform{v}{\psi(z/x)},\lform{u}{\phi}}{\fsb}$}
\dashedLine
\RightLabel{Lem.\ref{lem:psub-admiss}}
\UnaryInfC{$\lseq{\mathcal R}{\mathcal T,\lterm{v}{y}}{\fsa_0,\lform{v}{\psi(y/x)},\lform{u}{\phi}}{\fsb}$}

\RightLabel{SIH}
\dashedLine
\BinaryInfC{$\lseq{\mathcal R}{\mathcal T,\lterm{v}{y}}{\fsa_0,\lform{v}{\psi(y/x)}}{\fsb}$}
\RightLabel{$\existsl$}
\UnaryInfC{$\lseq{\mathcal R}{\mathcal T}{\fsa_0,\lform{v}{\exists\psi}}{\fsb}$}
\DisplayProof
\end{center}
Note that the instance of SIH is justified as the sum of the heights of the proofs of the premises is smaller than the one of the initial cut.

\noindent \textbf{(XIV)} $\mathbf{r_1=\existsr:}$ Then $\lseq{\mathcal R}{\mathcal T}{\fsa}{\fsb,\lform{w}{\phi}}$ is of the form $\lseq{\mathcal R}{\mathcal T}{\fsa}{\fsb_0,\lform{v}{\exists x\psi}}$ and we have a proof of $\lseq{\mathcal R}{\mathcal T}{\fsa}{\fsb_0,\lform{v}{\exists x\psi},\lform{v}{\psi(t/x)}}$ where $t$ is available for $v$.

If $\lform{w}{\phi}$ is not $\lform{v}{\exists x\psi}$, then we have a proof of $\lseq{\mathcal R}{\mathcal T}{\fsa}{\fsb_1,\lform{v}{\exists x\psi},\lform{v}{\psi(t/x)},\lform{w}{\phi}}$, which we call $S$, and $\lseq{\mathcal R}{\mathcal T}{\fsa,\lform{u}{\phi}}{\fsb}$ is of the form $\lseq{\mathcal R}{\mathcal T}{\fsa,\lform{u}{\phi}}{\fsb_1,\lform{v}{\exists x\psi}}$. Then, we proceed as follows.

\begin{center}
\AxiomC{$S$}

\AxiomC{$\lseq{\mathcal R}{\mathcal T}{\fsa,\lform{u}{\phi}}{\fsb_1,\lform{v}{\exists x\psi}}$}
\RightLabel{Lem.\ref{lem:impl-excr-existsri-alll-inv}}
\dashedLine
\UnaryInfC{$\lseq{\mathcal R}{\mathcal T}{\fsa,\lform{u}{\phi}}{\fsb_1,\lform{v}{\exists x\psi},\lform{v}{\psi(t/x)}}$}

\RightLabel{SIH}
\dashedLine
\BinaryInfC{$\lseq{\mathcal R}{\mathcal T}{\fsa}{\fsb_1,\lform{v}{\exists x\psi},\lform{v}{\psi(t/x)}}$}
\RightLabel{$\existsr$}
\UnaryInfC{$\lseq{\mathcal R}{\mathcal T}{\fsa}{\fsb_1,\lform{v}{\exists x\psi}}$}
\DisplayProof
\end{center}

If $\lform{w}{\phi}$ is $\lform{v}{\exists x\psi}$, then we have proof a of $\lseq{\mathcal R}{\mathcal T}{\fsa}{\fsb,\lform{v}{\exists x\psi},\lform{v}{\psi(t/x)}}$, and $\lseq{\mathcal R}{\mathcal T}{\fsa,\lform{u}{\phi}}{\fsb}$ is of the form $\lseq{\mathcal R}{\mathcal T}{\fsa,\lform{u}{\exists x\psi}}{\fsb}$. In this case, we need to consider the shape of $r_2$. If $\lform{u}{\exists x\psi}$ is not principal in $r_2$, then we apply the hp-invertibility of $r_2$ to the proof of $\lseq{\mathcal R}{\mathcal T}{\fsa}{\fsb,\lform{v}{\exists x\psi}}$ and then invoke SIH with the resulting proof to cut $\lform{u}{\exists x\psi}$ with the premises of $r_2$, and then reapply $r_2$ to reach our goal. If $\lform{u}{\exists x\psi}$ is principal in $r_2$, then the premise of $r_2$ is of the shape $\lseq{\mathcal R}{\mathcal T,\lterm{v}{y}}{\fsa,\lform{v}{\psi(y/x)}}{\fsb}$ where $y$ is fresh. Then, we proceed as follows where $\pi$ is the first proof given and $x_0,\dots,x_n$ are all the variables appearing in $t$.
\begin{center}
\AxiomC{$\lseq{\mathcal R}{\mathcal T}{\fsa}{\fsb,\lform{v}{\exists x\psi},\lform{v}{\psi(t/x)}}$}
\AxiomC{$\lseq{\mathcal R}{\mathcal T}{\fsa,\lform{u}{\exists x\psi}}{\fsb}$}
\RightLabel{Lem.\ref{lem:iwk-admiss}}
\dashedLine
\UnaryInfC{$\lseq{\mathcal R}{\mathcal T}{\fsa,\lform{u}{\exists x\psi}}{\fsb,\lform{v}{\psi(t/x)}}$}
\RightLabel{SIH}
\dashedLine
\BinaryInfC{$\lseq{\mathcal R}{\mathcal T}{\fsa}{\fsb,\lform{v}{\psi(t/x)}}$}
\DisplayProof
\end{center}

\begin{center}
\AxiomC{$\pi$}

\AxiomC{$\lseq{\mathcal R}{\mathcal T,\lterm{v}{y}}{\fsa,\lform{v}{\psi(y/x)}}{\fsb}$}
\RightLabel{Lem.\ref{lem:psub-admiss}}
\dashedLine
\UnaryInfC{$\lseq{\mathcal R}{\mathcal T,\lterm{v}{x_0},\dots,\lterm{v}{x_n}}{\fsa,\lform{v}{\psi(t/x)}}{\fsb}$}
\RightLabel{Lem.\ref{lem:ndr-admiss}}
\dashedLine
\UnaryInfC{$\lseq{\mathcal R}{\mathcal T}{\fsa,\lform{v}{\psi(t/x)}}{\fsb}$}

\RightLabel{PIH}
\dashedLine
\BinaryInfC{$\lseq{\mathcal R}{\mathcal T}{\fsa}{\fsb}$}
\DisplayProof
\end{center}
Note that the step involving Lemma \ref{lem:ndr-admiss} is justified as $t$ is available for $v$, which implies that we can push all its variables to the original labels making $t$ available for $v$.

\noindent \textbf{(XV)} $\mathbf{r_1=\alll:}$ Then $\lseq{\mathcal R}{\mathcal T}{\fsa}{\fsb,\lform{w}{\phi}}$ is of the form $\lseq{\mathcal R}{\mathcal T}{\fsa_0,\lform{v}{\forall x\psi}}{\fsb,\lform{w}{\phi}}$ and we a have proof of $\lseq{\mathcal R}{\mathcal T}{\fsa_0,\lform{v}{\forall x\psi},\lform{v_0}{\psi(t/x)}}{\fsb,\lform{w}{\phi}}$ where $\rable{v}{v_0}$ and $t$ is available for $v_0$. Consequently, we know that $\lseq{\mathcal R}{\mathcal T}{\fsa}{\fsb}$ is of the form $\lseq{\mathcal R}{\mathcal T}{\fsa_0,\lform{v}{\forall x\psi}}{\fsb}$. We also have that $\lseq{\mathcal R}{\mathcal T}{\fsa,\lform{u}{\phi}}{\fsb}$ is of the form $\lseq{\mathcal R}{\mathcal T}{\fsa_0,\lform{v}{\forall x\psi},\lform{u}{\phi}}{\fsb}$. 
We apply Lemma \ref{lem:impl-excr-existsri-alll-inv} on the proof of the latter sequent to obtain a proof of $\lseq{\mathcal R}{\mathcal T}{\fsa_0,\lform{v}{\forall x\psi},\lform{v_0}{\psi(t/x)},\lform{u}{\phi}}{\fsb}$, which we call $S$. Thus, we proceed as follows.
\begin{center}
\AxiomC{$\lseq{\mathcal R}{\mathcal T}{\fsa_0,\lform{v}{\forall x\psi},\lform{v_0}{\psi(t/x)}}{\fsb,\lform{w}{\phi}}$}

\AxiomC{$S$}

\RightLabel{SIH}
\dashedLine
\BinaryInfC{$\lseq{\mathcal R}{\mathcal T}{\fsa_0,\lform{v}{\forall x\psi},\lform{v_0}{\psi(t/x)}}{\fsb}$}
\RightLabel{$\alll$}
\UnaryInfC{$\lseq{\mathcal R}{\mathcal T}{\fsa_0,\lform{v}{\forall\psi}}{\fsb}$}
\DisplayProof
\end{center}
Note that the instance of SIH is justified as the sum of the heights of the proofs of the premises is smaller than the one of the initial cut.

\noindent \textbf{(XVI)} $\mathbf{r_1=\allr:}$ Then $\lseq{\mathcal R}{\mathcal T}{\fsa}{\fsb,\lform{w}{\phi}}$ is of the form $\lseq{\mathcal R}{\mathcal T}{\fsa}{\fsb_0,\lform{v}{\forall x\psi}}$ and we have a proof of $\lseq{\mathcal R,\lrel{v}{v_0}}{\mathcal T,\lterm{v_0}{y}}{\fsa}{\fsb_0,\lform{v_0}{\psi(y/x)}}$, which we call $S$, where $y$ is fresh.

If $\lform{w}{\phi}$ is not $\lform{v}{\forall x\psi}$, then we have a proof of $\lseq{\mathcal R,\lrel{v}{v_0}}{\mathcal T,\lterm{v_0}{y}}{\fsa}{\fsb_1,\lform{v_0}{\psi(y/x)},\lform{w}{\phi}}$, and $\lseq{\mathcal R}{\mathcal T}{\fsa,\lform{u}{\phi}}{\fsb}$ is of the form $\lseq{\mathcal R}{\mathcal T}{\fsa,\lform{u}{\phi}}{\fsb_1,\lform{v}{\forall x\psi}}$. Then, we proceed as follows where $\pi$ is the first proof displayed.

\begin{center}
\AxiomC{$S$}

\AxiomC{$\lseq{\mathcal R}{\mathcal T}{\fsa,\lform{u}{\phi}}{\fsb_1,\lform{v}{\forall x\psi}}$}
\RightLabel{Lem.\ref{lem:existsl-allr-inv}}
\dashedLine
\UnaryInfC{$\lseq{\mathcal R,\lrel{v}{v_0}}{\mathcal T,\lterm{v_0}{y}}{\fsa,\lform{u}{\phi}}{\fsb_1,\lform{v_0}{\psi(y/x)}}$}

\RightLabel{SIH}
\dashedLine
\BinaryInfC{$\lseq{\mathcal R,\lrel{v}{v_0}}{\mathcal T,\lterm{v_0}{y}}{\fsa}{\fsb_1,\lform{v_0}{\psi(y/x)}}$}
\RightLabel{$\allr$}
\UnaryInfC{$\lseq{\mathcal R}{\mathcal T}{\fsa}{\fsb_1,\lform{v}{\forall x\psi}}$}
\DisplayProof
\end{center}

If $\lform{w}{\phi}$ is $\lform{v}{\forall x\psi}$, then we have proof a of $\lseq{\mathcal R,\lrel{v}{v_0}}{\mathcal T,\lterm{v_0}{y}}{\fsa}{\fsb,\lform{v_0}{\psi(y/x)}}$, and $\lseq{\mathcal R}{\mathcal T}{\fsa,\lform{u}{\phi}}{\fsb}$ is of the form $\lseq{\mathcal R}{\mathcal T}{\fsa,\lform{u}{\forall x\psi}}{\fsb}$. In this case, we need to consider the shape of $r_2$. If $\lform{u}{\forall x\psi}$ is not principal in $r_2$, then we apply the hp-invertibility of $r_2$ to the proof of $\lseq{\mathcal R}{\mathcal T}{\fsa}{\fsb,\lform{v}{\forall x\psi}}$ and then invoke SIH with the resulting proof to cut $\lform{u}{\forall x\psi}$ with the premises of $r_2$, and then reapply $r_2$ to reach our goal. If $\lform{u}{\forall x\psi}$ is principal in $r_2$, then the premise of $r_2$ is of the shape $\lseq{\mathcal R}{\mathcal T}{\fsa,\lform{u}{\forall x\psi},\lform{v_1}{\psi(t/x)}}{\fsb}$, which we call $S$, where $\rable{u}{v_1}$ and $t$ is available for $v_1$. Then, we proceed as follows where $\pi$ is the first proof given and $x_0,\dots,x_n$ are all the variables appearing in $t$.
\begin{center}
\AxiomC{$\lseq{\mathcal R,\lrel{v}{v_0}}{\mathcal T,\lterm{v_0}{y}}{\fsa}{\fsb,\lform{v_0}{\psi(y/x)}}$}
\RightLabel{Lem.\ref{lem:mrgf-mrgb-admiss}}
\dashedLine
\UnaryInfC{$\lseq{\mathcal R}{\mathcal T,\lterm{v}{y}}{\fsa}{\fsb,\lform{v}{\psi(y/x)}}$}
\RightLabel{Lem.\ref{lem:psub-admiss}}
\dashedLine
\UnaryInfC{$\lseq{\mathcal R}{\mathcal T,\lterm{v}{x_0},\dots,\lterm{v}{x_n}}{\fsa}{\fsb,\lform{v}{\psi(t/x)}}$}
\RightLabel{Lem.\ref{lem:ndr-admiss}}
\dashedLine
\UnaryInfC{$\lseq{\mathcal R}{\mathcal T}{\fsa}{\fsb,\lform{v}{\psi(t/x)}}$}
\DisplayProof
\end{center}

\begin{center}
\AxiomC{$\pi$}

\AxiomC{$\lseq{\mathcal R}{\mathcal T}{\fsa}{\fsb,\lform{v}{\forall x\psi}}$}
\RightLabel{Lem.\ref{lem:iwk-admiss}}
\dashedLine
\UnaryInfC{$\lseq{\mathcal R}{\mathcal T}{\fsa,\lform{v_1}{\psi(t/x)}}{\fsb,\lform{v}{\forall x\psi}}$}
\AxiomC{$S$}
\RightLabel{SIH}
\dashedLine
\BinaryInfC{$\lseq{\mathcal R}{\mathcal T}{\fsa,\lform{v_1}{\psi(t/x)}}{\fsb}$}

\RightLabel{PIH}
\dashedLine
\BinaryInfC{$\lseq{\mathcal R}{\mathcal T}{\fsa}{\fsb}$}
\DisplayProof
\end{center}
Note that the step involving Lemma \ref{lem:ndr-admiss} is justified as $t$ is available for $v$, which implies that we can push all its variables to the original labels making $t$ available for $v$. In addition to that, the use of PIH is justified by the holding of $\rable{v}{v_1}$ which we infer from $\rable{v}{u}$ and $\rable{u}{v_1}$.
\end{proof}

\end{document}